\theoremstyle{thmstyleone}%
\newtheorem{theorem}{Theorem}
\theoremstyle{thmstyletwo}%
\newtheorem{example}{Example}%
\newtheorem{remark}{Remark}%
\theoremstyle{thmstylethree}%
\newtheorem{Def}{Definition}%
\newtheorem{lem}{Lemma}
\begin{document}

\title[Integrable self-adaptive moving mesh schemes with nonzero boundary values]
{Integrable self-adaptive moving mesh schemes for multi-component short pulse type equations with nonzero boundary values}

\author[1]{\fnm{Ayako} \sur{Hori}}\email{ayako0903@akane.waseda.jp}

\author[2]{\fnm{Ken-ichi} \sur{Maruno}}\email{kmaruno@waseda.jp}

\author[3]{\fnm{Yasuhiro} \sur{Ohta}}\email{ohta@math.kobe-u.ac.jp}

\author[4]{\fnm{Bao-Feng} \sur{Feng}}\email{baofeng.feng@utrgv.edu}

\affil[1]{Department of Pure and Applied Mathematics, School of Fundamental Science and Engineering, Waseda University, 3-4-1 Okubo, Shinjuku-ku, Tokyo 169-8555, Japan}

\affil[2]{Department of Applied Mathematics, Faculty of Science and Engineering, Waseda University, 3-4-1 Okubo, Shinjuku-ku, Tokyo 169-8555, Japan}

\affil[3]{Department of Mathematics, Kobe University, 1-1 Rokkodaicho, Nada-ku, Kobe 657-8501, Japan}

\affil[4]{School of Mathematical and Statistical Sciences, The University of Texas Rio Grande Valley, 1201 West University Dr., Edinburg, Texas 78541, USA}


\abstract{In this paper, we construct integrable self-adaptive moving mesh schemes for multi-component modified short pulse and short pulse equations with nonzero boundary values by using the consistency condition with the hodograph transformation. The essential point is that the edge point $x_{0}$ of the hodograph transformation cannot be kept fixed when the boundary flux is nonzero. We derive the evolution equation for $x_{0}$ and incorporate it into the semi-discrete moving mesh scheme. This supplies a moving-edge mechanism that extends the previously fixed-edge schemes and, in particular, allows periodic computations with nonzero boundary values. These schemes automatically adjust the mesh intervals according to the solution profile. We also derive multi-soliton solutions in Pfaffian form for the proposed schemes, which preserve the integrable structure in the discrete scheme. Numerical experiments for one- and two-soliton solutions demonstrate that the proposed schemes achieve high accuracy even in regions with rapid variation, while maintaining stability over long-time simulations, with small relative errors near peak amplitudes.}

\keywords{Self-adaptive moving mesh schemes, Multi-component short pulse type equations,  Nonzero boundary values, Pfaffian}



\maketitle

\begin{section}{Introduction}\label{sec_intro}

The study of discrete integrable systems has developed in connection with various fields of mathematics, physics, and engineering since the pioneering works of Hirota and Ablowitz--Ladik in the mid-1970s\cite{Yuri,DS}.

Hirota proposed a method for discretizing soliton equations while preserving their integrability based on bilinear equations and $\tau$-functions, 
and succeeded in discretizing various soliton equations such as the KdV equation and the sine-Gordon equation\cite{Hirota1,Hirota2,Hirota3,Hirota4,Hirota5}.
Furthermore, Ablowitz and Ladik proposed a method for discretizing soliton equations based on the Ablowitz-Kaup-Newell-Segur (AKNS) type linear eigenvalue problem 
for soliton equations, and succeeded in discretizing soliton equations such as the nonlinear Schr\"{o}dinger equation while preserving its integrability\cite{Abl1,Abl2,Abl3,Abl4,Abl5}.

It is known that there are soliton equations associated with Wadati-Konno-Ichikawa (WKI) type linear eigenvalue problems\cite{WKI1,WKI2}. 
Ishimori and Wadati-Sogo have shown that the soliton equations belonging to the WKI class can be transformed into the soliton equations associated with 
AKNS-type linear eigenvalue problems by hodograph transformations\cite{ishimori1,ishimori2,wadachi1,rogers}.
For a long time, integrable discretizations of soliton equations in the WKI class had been considered difficult.
In our previous work, we obtained integrable discretizations of soliton equations associated with 
WKI-type linear eigenvalue problems by using Hirota's discretization method based on bilinear equations 
together with a discretization of the hodograph transformation\cite{maruno13,maruno11,maruno12,maruno10,maruno16,maruno0,maruno14,maruno15}. 
We called the resulting integrable discrete equations self-adaptive moving mesh schemes, because the mesh interval is automatically 
refined in regions where the solution varies rapidly.

Integrable self-adaptive moving mesh schemes are difference schemes that automatically adjust the mesh interval, 
which are obtained by discretizing integrable nonlinear wave equations with exact solutions such as loop and cusp soliton solutions 
in a way that preserves the structure of the exact solutions.
The key to constructing a self-adaptive moving mesh scheme is the discretization of a hodograph transformation.
It is well known that a hodograph transformation of a nonlinear wave equation corresponds to a conservation law\cite{rogers}.
In self-adaptive moving mesh schemes, a conserved density of a discrete conservation law corresponds to the mesh interval.
Therefore, the mesh interval is automatically adjusted where the displacement changes rapidly.

It has also been shown that self-adaptive moving mesh schemes can be derived as motions of discrete plane curves \cite{geometry}. 
In terms of the geometric formulation, the Lagrangian representation of the evolution equation for the motion of a discrete curve is rewritten 
in the Eulerian representation by a discrete hodograph transformation, which is nothing but a self-adaptive moving mesh scheme.

Furthermore, it was reported that self-adaptive moving mesh schemes are effective and accurate in numerical computations\cite{maruno0,maruno12,samms}.
However, the self-adaptive moving mesh schemes constructed so far were based on vanishing boundary values at the ends of the interval, 
as in ordinary soliton solutions on the infinite interval. In this fixed-edge formulation, the edge point of the hodograph transformation is kept fixed, 
and hence situations with nonzero boundary flux at the computational boundary cannot be incorporated directly. This becomes an obstruction, 
for example, in periodic computations when the field values near the computational boundary are nonzero. The aim of this paper is to remove this obstruction. 
The theoretical new point of our construction is to derive the evolution equation of the edge point $x_{0}$ from the consistency condition with the hodograph transformation 
and to include this equation in the semi-discrete self-adaptive moving mesh scheme.
Here ``nonzero boundary values'' means boundary settings in which the field variables at the edge point of the computational interval need not vanish. 
More precisely, the edge point $x_{0}(T)=x(0,T)$ in the hodograph transformation must evolve whenever the boundary flux determined by those values is nonzero.
The difficulty in constructing self-adaptive moving mesh schemes 
with nonzero boundary values is caused by the fact that the edge point of the discrete hodograph transformation is no longer fixed when the boundary flux is nonzero.

In this paper, we construct self-adaptive moving mesh schemes with nonzero boundary values for the multi-component modified short pulse (MCmSP) equation\cite{MCmSP}
\begin{eqnarray}
u^{(i)}_{xt}=u^{(i)}+\left(\sum_{1\leq j < k \leq n} c_{jk}u^{(j)}u^{(k)}u^{(i)}_{x}\right)_{x}-\left(\sum_ {1 \leq j < k \leq n} c_{jk}u^{(j)}_{x}u^{(k)}_{x}\right)u^{(i)},\nonumber\\
\hspace{90mm}  i=1,2,\cdots n, \label{MCmSP}
\end{eqnarray}
which is a multi-component generalization of the modified short pulse equation\cite{mSP2} and the multi-component short pulse (MCSP) equation\cite{MCSP}
\begin{eqnarray}
u^{(i)}_{xt}=u^{(i)}+\frac{1}{2}\left(\sum_{1\leq j < k \leq n} c_{jk}u^{(j)}u^{(k)}u^{(i)}_{x}\right)_{x}, \qquad i=1,2,\cdots n,\label{MCSP}
\end{eqnarray}
which is a multi-component generalization of the short pulse equation\cite{SP1,SP5,SP2,SP3,SP4} by considering the consistency condition with the hodograph transformation.
 The coefficients $c_{jk}$ are arbitrary constants satisfying the symmetry $c_{jk}=c_{kj}(j,k=1,2,\cdots n)$, with $c_{jj}=0$. 
 Although the sums are written only for $j<k$ to avoid double counting, this symmetry is used later in the proof of Lemma \ref{lemma1}, 
 where the coefficients are extended to expressions involving all pairs of component labels. We also construct multi-soliton solutions in Pfaffian form 
 for our self-adaptive moving mesh schemes. Using the obtained self-adaptive moving mesh schemes, we perform numerical experiments to evaluate their effectiveness 
 and accuracy as numerical methods. In the present paper, the numerical experiments are restricted to a periodic setting. 
 Their purpose is to demonstrate that the moving-edge mechanism enables computations in a periodic setting even when the field values near the computational boundary are nonzero.

This paper is structured as follows.
In section \ref{sec_MCmSP}, we describe the bilinear form of the MCmSP equation and its multi-soliton solution. In section \ref{sec_disMCmSP}, 
we construct a self-adaptive moving mesh scheme for the MCmSP equation with nonzero boundary values, which has long been considered difficult. 
In section \ref{sec_numexpMCmSP}, we present numerical simulations of the two-component modified short pulse (2-mSP) equation and 
the complex modified short pulse (CmSP) equation in a periodic setting using the scheme constructed in the previous section. 
In sections \ref{sec_MCSP} and \ref{sec_disMCSP}, we discretize the MCSP equation and construct a self-adaptive moving mesh scheme 
for the MCSP equation with nonzero boundary values; 
such a scheme was not previously available. 
In section \ref{sec_numexpMCSP}, we present numerical simulations of the two-component short pulse (2-SP) 
equation and the complex short pulse (CSP) equation in a periodic setting.
Finally, in section \ref{sec_conc}, we discuss the results and give conclusions.

\section{Bilinear equations and N-soliton solution for the MCmSP equation}\label{sec_MCmSP}

In this section, we describe the bilinear form and the hodograph transformation of the MCmSP equation (\ref{MCmSP}) 
and construct its N-soliton solution.
First, we rewrite the MCmSP equation (\ref{MCmSP}) into the conservation law form
\begin{eqnarray}
\left(\frac{1}{\rho}\right)_{t}-\left(\frac{F}{\rho}\right)_{x}=0,\label{m-conservationlaw}
\end{eqnarray}
where
\begin{eqnarray}
\rho=\left(1+\sum_{1\leq j < k\leq n}c_{jk}u^{(j)}_{x}u^{(k)}_{x}\right)^{-1},\qquad F=\sum_{1\leq j < k\leq n}c_{jk}u^{(j)}u^{(k)}.\label{m-conservationlaw2}
\end{eqnarray}

Indeed, if
$S=\sum_{1\leq j<k\leq n}c_{jk}u^{(j)}_{x}u^{(k)}_{x}$, then (\ref{MCmSP}) can be written as
\begin{eqnarray}
u^{(i)}_{xt}=(1-S)u^{(i)}+F_{x}u^{(i)}_{x}+F u^{(i)}_{xx}.
\nonumber
\end{eqnarray}
Using this expression, one obtains
$S_{t}=F_{x}(1+S)+F S_{x}=(F(1+S))_{x}$, which is equivalent to (\ref{m-conservationlaw}) because $1/\rho=1+S$.

Therefore, the hodograph transformation
\begin{eqnarray}
dX=\frac{1}{\rho}dx+\frac{F}{\rho}dt,\qquad dT=dt.\label{m-hodograph}
\end{eqnarray}
is introduced. The conservation law (\ref{m-conservationlaw}) is the closedness condition for this one-form. 
From the  hodograph transformation (\ref{m-hodograph}), we obtain the derivative law
\begin{eqnarray}
\frac{\partial }{\partial X}=\rho \frac{\partial }{\partial x},\qquad\frac{\partial }{\partial T}=\frac{\partial }{\partial t}-F\frac{\partial }{\partial x}.\label{m-derivativelaw}
\end{eqnarray}
Next, transforming the MCmSP equation into
\begin{eqnarray}
\partial_{x}\left(\partial_{t}-\sum_{1\leq j < k \leq n}c_{jk}u^{(j)}u^{(k)}\partial_{x}\right)u^{(i)}=u^{(i)}\left(2-\left(1+\sum_{1 \leq j < k \leq n}c_{jk}u^{(j)}_{x}u^{(k)}_{x}\right)\right),\nonumber\\
\label{7}
\end{eqnarray}
and applying the derivative law (\ref{m-derivativelaw}) and (\ref{m-conservationlaw2}) to (\ref{7}), we obtain
\begin{eqnarray}
u^{(i)}_{XT}=u^{(i)}\left(2\rho-1\right).
\end{eqnarray}
Furthermore, applying the derivative law (\ref{m-derivativelaw}) to the conservation law (\ref{m-conservationlaw}), we then have
\begin{eqnarray}
 \rho_{T}+\left(\displaystyle\sum_{1\leq j < k\leq n}c_{jk}u^{(j)}u^{(k)}\right)_{X}=0.\label{m-conservationlaw3}
 \end{eqnarray}
The equation (\ref{m-conservationlaw3}) is the conservation law for $X$ and $T$, where $\rho$ 
is the conservation density and $\sum_{1\leq j < k\leq n}c_{jk}u^{(j)}u^{(k)}$ is the flux.

Therefore,  the MCmSP equation (\ref{MCmSP}) is transformed into the multi-component coupled integrable dispersionless  (MCCID) equations
\begin{eqnarray}
\left\{
\begin{array}{ll}
u^{(i)}_{XT}=u^{(i)}\left(2\rho-1\right),\\
 \rho_{T}+\left(\displaystyle\sum_{1\leq j < k\leq n}c_{jk}u^{(j)}u^{(k)}\right)_{X}=0,
\end{array}
\right.
\label{m-dispersionless}
\end{eqnarray}
by the hodograph transformation (\ref{m-hodograph}).

Introducing the dependent variable transformation
\begin{eqnarray}
u^{(i)}=\frac{g^{(i)}}{f}\quad(i=1,2,\cdots n),\qquad\rho=1-({\rm log}f)_{XT},\label{m-dependenttransformation}
\end{eqnarray}
the MCCID equations (\ref{m-dispersionless}) lead to
\begin{eqnarray}
\left\{
\begin{array}{ll}
D_{X}D_{T}f\cdot g^{(i)}=fg^{(i)},\\
 D_{T}^{2}f\cdot f=2\displaystyle\sum_{1\leq j < k\leq n}c_{jk}g^{(j)}g^{(k)},
\end{array}
\right.
\label{MCmSPbilinear}
\end{eqnarray}
where $D_{X}$ and $D_{T}$ are Hirota's D-operators which are defined as
\begin{align}
&D_{X}^{m}D_{T}^{n}f(X,T)\cdot g(X,T)\nonumber\\
&\quad=\left(\frac{\partial}{\partial X}-\frac{\partial}{\partial X^{\prime}}\right)^{m}
\left(\frac{\partial}{\partial T}-\frac{\partial}{\partial T^{\prime}}\right)^{n}
f(X,T) g(X^{\prime},T^{\prime})|_{X=X^{\prime},T=T^{\prime}}.\nonumber
 \end{align}


\begin{Def}
Let $A$ be a $2N \times 2N$ antisymmetric matrix defined by
\begin{eqnarray}
A=(a_{i,j})_{1\leq i,j \leq 2N},\nonumber
\end{eqnarray}
where $a_{i,j}=-a_{j,i}$ for $i,j=1,2,\cdots 2N$. The Pfaffian of $A$ is defined by
\begin{align}
{\rm Pf}(A)&={\rm Pf}(a_{i,j})_{1\leq i,j \leq 2N}={\rm Pf}(1,2,\dots 2N)\nonumber\\
&=\sum_{\begin{array}{c}
 i_{1}<i_{2}<\cdots< i_{N}\\
 i_{1}<j_{1},\ i_{2}<j_{2},\ \cdots,\ i_{N}<j_{N}
\end{array}}{\rm sgn}  (\pi) a_{i_{1}, j_{1}}a_{i_{2}, j_{2}}\cdots a_{i_{N}, j_{N}},\nonumber
\end{align}
where
$\pi=\left(
    \begin{array}{ccccc}
    1 & 2 & \cdots & 2N-1 &2N \\
    i_{1} & j_{1} & \cdots & i_{N} & j_{N}  \\
       \end{array}
    \right)
$
is a permutation of $\{1,2,\cdots ,2N-1,2N\}$.

The Pfaffian is defined by the following expansion rule
\begin{align}
{\rm Pf}(1, 2, \cdots , 2N) =\sum_{i=2}^{2N}(-1)^{i}{\rm Pf}(1,i){\rm Pf}(2,3,\cdots, i-1, i+1, \cdots, 2N-1, 2N).\nonumber
\end{align}
For an antisymmetric matrix $A$, we have $[{\rm Pf}(A)]^2={\rm det}(A)$.
\end{Def}

In what follows, ${\rm Pf}(s_{1},s_{2},\ldots,s_{2M})$ denotes the Pfaffian of the antisymmetric matrix whose rows 
and columns are indexed by the formal symbols $s_{1},s_{2},\ldots,s_{2M}$.
Thus symbols such as $a_{j}$, $b_{j}$, $d_{l}$, $d^{l}$ and $B_{\mu}$ are labels of Pfaffian entries, 
and the entries not explicitly specified are taken to be zero. A hat over an argument, for example $\hat{a}_{j}$, means that the argument is omitted.


\begin{lem}
\label{lemma1}
The bilinear equations (\ref{MCmSPbilinear}) have the following Pfaffian solution:
\begin{eqnarray}
f={\rm Pf}(a_{1},\cdots,a_{2N},b_{1},\cdots,b_{2N}), \qquad g^{(i)}={\rm Pf}(d_{0},B_{i},a_{1},\cdots,a_{2N},b_{1},\cdots,b_{2N}),
\label{MCmSP_f}
\end{eqnarray}
where $i=1,2,\cdots,n$ and the elements of the Pfaffians are defined as
\begin{eqnarray}
{\rm Pf}(a_{j},a_{k})=\frac{p_{j}-p_{k}}{p_{j}+p_{k}}e^{\xi_{j}+\xi_{k}},\qquad {\rm Pf}(a_{j},b_{k})=\delta_{j,k},
\label{pf1}
\end{eqnarray}
\begin{eqnarray}
{\rm Pf}(b_{j},b_{k})=\frac{c_{\mu\nu}}{p_{j}^{-2}-p_{k}^{-2}}\quad(b_{j}\in B_{\mu},b_{k}\in B_{\nu} ),
\end{eqnarray}
\begin{eqnarray}
{\rm Pf}(d_{l},a_{k})=p_{k}^{l}e^{\xi_{k}},\qquad
{\rm Pf}(b_{j},B_{\mu})=
\left\{
\begin{array}{ll}
1 & (b_{j}\in B_{\mu}) \\
0 & (b_{j}\notin B_{\mu})
\end{array}
\right..
\label{pf2}
\end{eqnarray}
Here, $j,k =1,2,\cdots 2N$,\quad$\mu,\nu=1,2,\cdots n,\quad\xi_{j}=p_{j}X+p_{j}^{-1}T+\xi_{j0}$, $p_j$ and $\xi_{j0}$ 
are arbitrary constants, $\delta_{j,k}$ denotes the Kronecker delta, and $l$ is an integer. Although the solution $g^{(i)}$ involves 
only the symbol $d_{0}$, additional symbols  $d_{l}$ with different integer values of $l$ are introduced to express derivatives of the Pfaffian compactly in the proof. 
A class of set $B_{\mu}(\mu=1,2,\cdots,n)$ satisfies the following conditions
\begin{eqnarray}
B_{\mu}\cap B_{\nu}=\varnothing\,\,{\rm if}\,\, \mu\neq\nu,\qquad\cup_{\mu=1}^{n}B_{\mu}=\{b_{1},b_{2},\cdots,b_{2N}\}.
\end{eqnarray}
The elements of the Pfaffians not defined above are defined as zero.
\end{lem}

\begin{proof}
First, we prove that the solutions (\ref{MCmSP_f}) satisfy the first equation of the bilinear equation (\ref{MCmSPbilinear}).
As
\begin{eqnarray}
\frac{\partial}{\partial X}{\rm Pf}(a_{j},a_{k})&=&(p_{j}-p_{k})e^{\xi_{j}+\xi_{k}}={\rm Pf}(d_{0},d_{1},a_{j},a_{k}),\nonumber\\
\frac{\partial}{\partial T}{\rm Pf}(a_{j},a_{k})&=&(p_{k}^{-1}-p_{j}^{-1})e^{\xi_{j}+\xi_{k}}={\rm Pf}(d_{-1},d_{0},a_{j},a_{k}),\nonumber\\
\frac{\partial^{2}}{\partial T^{2}}{\rm Pf}(a_{j},a_{k})&=&(p_{k}^{-2}-p_{j}^{-2})e^{\xi_{j}+\xi_{k}}={\rm Pf}(d_{-2},d_{0},a_{j},a_{k}),\nonumber\\
\frac{\partial^{2}}{\partial X \partial T}{\rm Pf}(a_{j},a_{k})&=&(p_{j}p_{k}^{-1}-p_{k}p_{j}^{-1})e^{\xi_{j}+\xi_{k}}={\rm Pf}(d_{-1},d_{1},a_{j},a_{k}),\nonumber
\end{eqnarray}

we have
\begin{eqnarray}
\frac{\partial f}{\partial X}&=&{\rm Pf}(d_{0},d_{1},\cdots),\qquad
\frac{\partial f}{\partial T}={\rm Pf}(d_{-1},d_{0},\cdots),\nonumber\\
\frac{\partial^{2} f}{\partial T^{2}}&=&{\rm Pf}(d_{-2},d_{0},\cdots),\qquad
\frac{\partial^{2} f}{\partial X\partial T}={\rm Pf}(d_{-1},d_{1},\cdots).\nonumber
\end{eqnarray}
Here, ${\rm Pf}(d_{i},d_{j},a_{1},\cdots a_{2N},b_{1},\cdots b_{2N})$ is denoted as ${\rm Pf}(d_{i},d_{j},\cdots )$.

Furthermore,
\begin{align}
\frac{\partial g^{(i)}}{\partial X}&=\frac{\partial}{\partial X}\left[\sum_{j=1}^{2N}(-1)^{j}{\rm Pf}(d_{0},a_{j}){\rm Pf}(B_{i},\cdots,\hat{a}_{j},\cdots)\right]\nonumber\\
&=\sum_{j=1}^{2N}(-1)^{j}(\partial_{X}{\rm Pf}(d_{0},a_{j}))
{\rm Pf}(B_{i},\cdots,\hat{a}_{j},\cdots)\nonumber\\
&\quad+\sum_{j=1}^{2N}(-1)^{j}{\rm Pf}(d_{0},a_{j})
\partial_{X}{\rm Pf}(B_{i},\cdots,\hat{a}_{j},\cdots)\nonumber\\
&=\sum_{j=1}^{2N}(-1)^{j}{\rm Pf}(d_{1},a_{j})
{\rm Pf}(B_{i},\cdots,\hat{a}_{j},\cdots)\nonumber\\
&\quad+\sum_{j=1}^{2N}(-1)^{j}{\rm Pf}(d_{0},a_{j})
{\rm Pf}(d_{0},d_{1},B_{i},\cdots,\hat{a}_{j},\cdots)\nonumber\\
&={\rm Pf}(d_{1},B_{i},\cdots)+{\rm Pf}(d_{0},d_{0},d_{1},B_{i}\cdots)\nonumber\\
&={\rm Pf}(d_{1},B_{i},\cdots).\nonumber
\end{align}
Similarly, we have
\begin{align}
\frac{\partial g^{(i)}}{\partial T}&=\frac{\partial}{\partial T}\left[\sum_{j=1}^{2N}(-1)^{j}{\rm Pf}(d_{0},a_{j}){\rm Pf}(B_{i},\cdots,\hat{a}_{j},\cdots)\right]\nonumber\\
&=\sum_{j=1}^{2N}(-1)^{j}(\partial_{T}{\rm Pf}(d_{0},a_{j}))
{\rm Pf}(B_{i},\cdots,\hat{a}_{j},\cdots)\nonumber\\
&\quad+\sum_{j=1}^{2N}(-1)^{j}{\rm Pf}(d_{0},a_{j})
\partial_{T}{\rm Pf}(B_{i},\cdots,\hat{a}_{j},\cdots)\nonumber\\
&=\sum_{j=1}^{2N}(-1)^{j}{\rm Pf}(d_{-1},a_{j})
{\rm Pf}(B_{i},\cdots,\hat{a}_{j},\cdots)\nonumber\\
&\quad+\sum_{j=1}^{2N}(-1)^{j}{\rm Pf}(d_{0},a_{j})
{\rm Pf}(d_{-1},d_{0},B_{i},\cdots,\hat{a}_{j},\cdots)\nonumber\\
&={\rm Pf}(d_{-1},B_{i},\cdots)+{\rm Pf}(d_{0},d_{-1},d_{0},B_{i},\cdots)\nonumber\\
&={\rm Pf}(d_{-1},B_{i},\cdots)\nonumber,\\
\frac{\partial^{2}g^{(i)}}{\partial X\partial T}&=\frac{\partial}{\partial X}\left[\sum_{j=1}^{2N}(-1)^{j}{\rm Pf}(d_{-1},a_{j}){\rm Pf}(B_{i},\cdots,\hat{a}_{j},\cdots)\right]\nonumber\\
&=\sum_{j=1}^{2N}(-1)^{j}(\partial_{X}{\rm Pf}(d_{-1},a_{j}))
{\rm Pf}(B_{i},\cdots,\hat{a}_{j},\cdots)\nonumber\\
&\quad+\sum_{j=1}^{2N}(-1)^{j}{\rm Pf}(d_{-1},a_{j})
\partial_{X}{\rm Pf}(B_{i},\cdots,\hat{a}_{j},\cdots)\nonumber\\
&=\sum_{j=1}^{2N}(-1)^{j}{\rm Pf}(d_{0},a_{j})
{\rm Pf}(B_{i},\cdots,\hat{a}_{j},\cdots)\nonumber\\
&\quad+\sum_{j=1}^{2N}(-1)^{j}{\rm Pf}(d_{-1},a_{j})
{\rm Pf}(d_{0},d_{1},B_{i},\cdots,\hat{a}_{j},\cdots)\nonumber\\
&={\rm Pf}(d_{0},B_{i},\cdots)+{\rm Pf}(d_{-1},d_{0},d_{1},B_{i},\cdots)\nonumber\\
&={\rm Pf}(d_{0},B_{i},\cdots)+{\rm Pf}(d_{-1},B_{i},d_{0},d_{1}\cdots).\nonumber
\end{align}

An algebraic identity of Pfaffian\cite{Hirotabook}
\begin{align}
&{\rm Pf}(d_{-1},B_{i},d_{0},d_{1},\cdots){\rm Pf}(\cdots)\nonumber\\
&={\rm Pf}(d_{-1},d_{0},\cdots){\rm Pf}(d_{1},B_{i},\cdots)
 -{\rm Pf}(d_{-1},d_{1},\cdots){\rm Pf}(d_{0},B_{i},\cdots)\nonumber\\
&\quad+{\rm Pf}(d_{-1},B_{i},\cdots){\rm Pf}(d_{0},d_{1},\cdots)\nonumber
\end{align}
leads to
\begin{eqnarray}
(\partial_{X}\partial_{T} g^{(i)}-g^{(i)})\times f
=\partial_{T} f \times \partial_{X} g^{(i)}
-\partial_{X}\partial_{T} f \times g^{(i)}
+\partial_{T} g^{(i)} \times \partial_{X} f ,\nonumber
\end{eqnarray}
which is the first equation of the bilinear equations (\ref{MCmSPbilinear}).

Next, we prove that the solutions (\ref{MCmSP_f}) satisfy the second equation of the bilinear equations (\ref{MCmSPbilinear}).
The right-hand side of the second equation of the bilinear equations (\ref{MCmSPbilinear}) leads to
\begin{align}
&2\sum_{1\leq \mu < \nu \leq n}c_{\mu \nu}g^{(\mu)}g^{(\nu)}\nonumber\\
&=\sum_{1\leq \mu , \nu \leq n}c_{\mu \nu}{\rm Pf}(d_{0},B_{\mu},\cdots){\rm Pf}(d_{0},B_{\nu},\cdots)\nonumber\\
&=\sum_{1\leq \mu , \nu \leq n}c_{\mu \nu}\sum_{j,k=1}^{2N}(-1)^{j+k}{\rm Pf}(B_{\mu},b_{j}){\rm Pf}(d_{0},\cdots,\hat{b}_{j},\cdots){\rm Pf}(B_{\nu},b_{k}){\rm Pf}(d_{0},\cdots,\hat{b}_{k},\cdots)\nonumber\\
&=\sum_{j,k=1}^{2N}(-1)^{j+k}\sum_{1\leq \mu , \nu \leq n}c_{\mu \nu}{\rm Pf}(B_{\mu},b_{j}){\rm Pf}(B_{\nu},b_{k}){\rm Pf}(d_{0},\cdots,\hat{b}_{j},\cdots){\rm Pf}(d_{0},\cdots,\hat{b}_{k},\cdots)\nonumber\\
&=\sum_{j,k=1}^{2N}(-1)^{j+k}(p_{j}^{-2}-p_{k}^{-2}){\rm Pf}(b_{j},b_{k}){\rm Pf}(d_{0},\cdots,\hat{b}_{j},\cdots){\rm Pf}(d_{0},\cdots,\hat{b}_{k},\cdots).\label{1.3.18}
\end{align}
In the first equality above, we use $c_{\mu\mu}=0$ and the symmetry $c_{\mu\nu}=c_{\nu\mu}$.
Now, expanding ${\rm Pf}(b_{j},d_{0},\cdots)$ which is trivially zero ${\rm Pfaffian}$, we obtain

\begin{eqnarray}
\sum_{k=1}^{2N}(-1)^{j+k}{\rm Pf}(b_{j},b_{k}){\rm Pf}(d_{0},\cdots,\hat{b}_{k},\cdots)={\rm Pf}(d_{0},\cdots,\hat{a}_{j},\cdots),\nonumber
\end{eqnarray}
which leads to
\begin{align}
&\sum_{j,k=1}^{2N}(-1)^{j+k}p_{j}^{-2}{\rm Pf}(b_{j},b_{k}){\rm Pf}(d_{0},\cdots,\hat{b}_{j},\cdots){\rm Pf}(d_{0},\cdots,\hat{b}_{k},\cdots)\nonumber\\
&=\sum_{j=1}^{2N}p_{j}^{-2}{\rm Pf}(d_{0},\cdots,\hat{a}_{j},\cdots){\rm Pf}(d_{0},\cdots,\hat{b}_{j},\cdots).\label{1.3.25}
\end{align}
Similarly, we have
\begin{align}
&-\sum_{j,k=1}^{2N}(-1)^{j+k}p_{k}^{-2}{\rm Pf}(b_{j},b_{k}){\rm Pf}(d_{0},\cdots,\hat{b}_{j},\cdots){\rm Pf}(d_{0},\cdots,\hat{b}_{k},\cdots)\nonumber\\
&=\sum_{k=1}^{2N}p_{k}^{-2}{\rm Pf}(d_{0},\cdots,\hat{a}_{k},\cdots){\rm Pf}(d_{0},\cdots,\hat{b}_{k},\cdots).\label{1.3.26}
\end{align}
Substituting (\ref{1.3.25}) and (\ref{1.3.26}) into (\ref{1.3.18}), we then have
\begin{eqnarray}
\label{1.3.27}
\sum_{1\leq \mu < \nu \leq n}c_{\mu \nu}g^{(\mu)}g^{(\nu)}=\sum_{j=1}^{2N}p_{j}^{-2}{\rm Pf}(d_{0},\cdots,\hat{a}_{j},\cdots){\rm Pf}(d_{0},\cdots,\hat{b}_{j},\cdots).
\end{eqnarray}
Since ${\rm Pf}(d_0,d_0,\ldots)=0$, we have
\begin{align}
&\frac{\partial^{2}f}{\partial T^{2}}\times 0-\frac{\partial f}{\partial T}\frac{\partial f}{\partial T}\nonumber\\
&={\rm Pf}(d_{-2},d_{0},\cdots){\rm Pf}(d_{0},d_{0},\cdots)-{\rm Pf}(d_{-1},d_{0},\cdots){\rm Pf}(d_{-1},d_{0},\cdots)\nonumber\\
&=\sum_{j=1}^{2N}(-1)^{j}{\rm Pf}(d_{-2},a_{j}){\rm Pf}(d_{0},\cdots,\hat{a}_{j},\cdots)\sum_{k=1}^{2N}(-1)^{k}{\rm Pf}(d_{0},a_{k}){\rm Pf}(d_{0},\cdots,\hat{a}_{k},\cdots)\nonumber\\
&\qquad-\sum_{j=1}^{2N}(-1)^{j}{\rm Pf}(d_{-1},a_{j}){\rm Pf}(d_{0},\cdots,\hat{a}_{j},\cdots)\sum_{k=1}^{2N}(-1)^{k}{\rm Pf}(d_{-1},a_{k}){\rm Pf}(d_{0},\cdots,\hat{a}_{k},\cdots)\nonumber\\
&=\sum_{j,k=1}^{2N}(-1)^{j+k}[{\rm Pf}(d_{-2},a_{j}){\rm Pf}(d_{0},a_{k})-{\rm Pf}(d_{-1},a_{j}){\rm Pf}(d_{-1},a_{k})]\nonumber\\
&\quad\times{\rm Pf}(d_{0},\cdots,\hat{a}_{j},\cdots){\rm Pf}(d_{0},\cdots,\hat{a}_{k},\cdots)\nonumber\\
&=\sum_{j,k=1}^{2N}(-1)^{j+k+1}(p_{j}^{-2}+p_{j}^{-1}p_{k}^{-1}){\rm Pf}(a_{j},a_{k}){\rm Pf}(d_{0},\cdots,\hat{a}_{j},\cdots){\rm Pf}(d_{0},\cdots,\hat{a}_{k},\cdots).
\label{A55}
\end{align}
Since
\begin{align}
&\sum_{j,k=1}^{2N}(-1)^{j+k+1}p_{j}^{-1}p_{k}^{-1}{\rm Pf}(a_{j},a_{k}){\rm Pf}(d_{0},\cdots,\hat{a}_{j},\cdots){\rm Pf}(d_{0},\cdots,\hat{a}_{k},\cdots)\nonumber\\
&=\sum_{k,j=1}^{2N}(-1)^{k+j+1}p_{k}^{-1}p_{j}^{-1}{\rm Pf}(a_{k},a_{j}){\rm Pf}(d_{0},\cdots,\hat{a}_{k},\cdots){\rm Pf}(d_{0},\cdots,\hat{a}_{j},\cdots)\nonumber\\
&=-\sum_{j,k=1}^{2N}(-1)^{j+k+1}p_{j}^{-1}p_{k}^{-1}{\rm Pf}(a_{j},a_{k}){\rm Pf}(d_{0},\cdots,\hat{a}_{j},\cdots){\rm Pf}(d_{0},\cdots,\hat{a}_{k},\cdots),
\end{align}
we obtain
\begin{eqnarray}
\begin{aligned}
&\sum_{j,k=1}^{2N}(-1)^{j+k+1}p_{j}^{-1}p_{k}^{-1}{\rm Pf}(a_{j},a_{k})
{\rm Pf}(d_{0},\cdots,\hat{a}_{j},\cdots)\\
&\qquad\times{\rm Pf}(d_{0},\cdots,\hat{a}_{k},\cdots)=0.
\end{aligned}
\label{A66}
\end{eqnarray}
Substituting (\ref{A66}) into (\ref{A55}), it follows
\begin{align}
-\frac{\partial f}{\partial T}\frac{\partial f}{\partial T}
&=\sum_{j,k=1}^{2N}(-1)^{j+k+1}p_{j}^{-2}{\rm Pf}(a_{j},a_{k})
{\rm Pf}(d_{0},\cdots,\hat{a}_{j},\cdots)\nonumber\\
&\quad\times{\rm Pf}(d_{0},\cdots,\hat{a}_{k},\cdots)\nonumber\\
&=\sum_{j=1}^{2N}(-1)^{j+1}p_{j}^{-2}
{\rm Pf}(d_{0},\cdots,\hat{a}_{j},\cdots)\nonumber\\
&\quad\times\left[\sum_{k=1}^{2N}(-1)^{k}{\rm Pf}(a_{j},a_{k})
{\rm Pf}(d_{0},\cdots,\hat{a}_{k},\cdots)\right].\label{A77}
\end{align}
Now, expanding ${\rm Pf}(a_{j},d_{0},\cdots)$ which is trivially zero ${\rm Pfaffian}$, we obtain
\begin{eqnarray}
\begin{aligned}
&\sum_{k=1}^{2N}(-1)^{k}{\rm Pf}(a_{j},a_{k})
{\rm Pf}(d_{0},\cdots,\hat{a}_{k},\cdots)\\
&\qquad={\rm Pf}(d_{0},a_{j}){\rm Pf}(\cdots)
        +(-1)^{j+1}{\rm Pf}(d_{0},\cdots,\hat{b}_{j},\cdots).
\end{aligned}\nonumber\\
\label{A88}
\end{eqnarray}
Substituting (\ref{A88}) into (\ref{A77}), we have
\begin{align}
&\quad-\frac{\partial f}{\partial T}\frac{\partial f}{\partial T}\nonumber\\
&=\sum_{j=1}^{2N}(-1)^{j+1}p_{j}^{-2}
{\rm Pf}(d_{0},\cdots,\hat{a}_{j},\cdots)
\left[{\rm Pf}(d_{0},a_{j}){\rm Pf(\cdots)}
      +(-1)^{j+1}{\rm Pf}(d_{0},\cdots,\hat{b}_{j},\cdots)\right]\nonumber\\
&=\sum_{j=1}^{2N}(-1)^{j+1}p_{j}^{-2}{\rm Pf}(d_{0},a_{j})
{\rm Pf}(d_{0},\cdots,\hat{a}_{j},\cdots){\rm Pf(\cdots)}\nonumber\\
&\quad+\sum_{j=1}^{2N}p_{j}^{-2}
{\rm Pf}(d_{0},\cdots,\hat{a}_{j},\cdots)
{\rm Pf}(d_{0},\cdots,\hat{b}_{j},\cdots)\nonumber\\
&=-\sum_{j=1}^{2N}(-1)^{j}p_{j}^{-2}{\rm Pf}(d_{0},a_{j})
{\rm Pf}(d_{0},\cdots,\hat{a}_{j},\cdots){\rm Pf(\cdots)}\nonumber\\
&\quad+\sum_{j=1}^{2N}p_{j}^{-2}
{\rm Pf}(d_{0},\cdots,\hat{a}_{j},\cdots)
{\rm Pf}(d_{0},\cdots,\hat{b}_{j},\cdots)\nonumber\\
&=-\sum_{j=1}^{2N}(-1)^{j}{\rm Pf}(d_{-2},a_{j})
{\rm Pf}(d_{0},\cdots,\hat{a}_{j},\cdots){\rm Pf(\cdots)}\nonumber\\
&\quad+\sum_{j=1}^{2N}p_{j}^{-2}
{\rm Pf}(d_{0},\cdots,\hat{a}_{j},\cdots)
{\rm Pf}(d_{0},\cdots,\hat{b}_{j},\cdots)\nonumber\\
&=-{\rm Pf}(d_{-2},d_{0},\cdots){\rm Pf}(\cdots)
 +\sum_{j=1}^{2N}p_{j}^{-2}
{\rm Pf}(d_{0},\cdots,\hat{a}_{j},\cdots)
{\rm Pf}(d_{0},\cdots,\hat{b}_{j},\cdots)\nonumber\\
&=-\frac{\partial^{2} f}{\partial T^{2}}f+\displaystyle\sum_{1\leq \mu < \nu\leq n}c_{\mu\nu}g^{(\mu)}g^{(\nu)}.\nonumber
\end{align}
The last calculation used (\ref{1.3.27}).
Therefore, we arrive at
\begin{eqnarray}
\frac{\partial^{2}f}{\partial T^{2}}f-\frac{\partial f}{\partial T}\frac{\partial f}{\partial T}=\sum_{1\leq \mu < \nu\leq n}c_{\mu\nu}g^{(\mu)}g^{(\nu)},\nonumber
\end{eqnarray}
which is the second equation of the bilinear equations (\ref{MCmSPbilinear}).
\end{proof}
\end{section}

\begin{example}\label{ex1}
\leavevmode
\\
1-soliton : For $N=1, n=2, B_{1}=\{b_{1}\}, B_{2}=\{b_{2}\}, c_{12}=1$, we obtain the $\tau$-functions $f, g^{(1)}$ and $g^{(2)}$ as follows:
\begin{align}
f&={\rm Pf}(a_{1},a_{2},b_{1},b_{2})\nonumber\\
&={\rm Pf}(a_{1},a_{2}){\rm Pf}(b_{1},b_{2})
  -{\rm Pf}(a_{1},b_{1}){\rm Pf}(a_{2},b_{2})\nonumber\\
&\quad +{\rm Pf}(a_{1},b_{2}){\rm Pf}(a_{2},b_{1})\nonumber\\
&=\frac{p_{1}-p_{2}}{p_{1}+p_{2}}\frac{1}{p_{1}^{-2}-p_{2}^{-2}}e^{\xi_{1}+\xi_{2}}-1\nonumber\\
&=-1-b_{12}e^{\xi_{1}+\xi_{2}},\nonumber\\
g^{(1)}&={\rm Pf}(d_{0},B_{1},a_{1},a_{2},b_{1},b_{2})\nonumber\\
&= {\rm Pf}(d_{0},B_{1}){\rm Pf}(a_{1},a_{2},b_{1},b_{2})
   -{\rm Pf}(d_{0},a_{1}){\rm Pf}(B_{1},a_{2},b_{1},b_{2})\nonumber\\
&\quad +{\rm Pf}(d_{0},a_{2}){\rm Pf}(B_{1},a_{1},b_{1},b_{2})
   -{\rm Pf}(d_{0},b_{1}){\rm Pf}(B_{1},a_{1},a_{2},b_{2})\nonumber\\
&\quad +{\rm Pf}(d_{0},b_{2}){\rm Pf}(B_{1},a_{1},a_{2},b_{1})\nonumber\\
&={\rm Pf}(d_{0},B_{1})
\left({\rm Pf}(a_{1},a_{2}){\rm Pf}(b_{1},b_{2})
      -{\rm Pf}(a_{1},b_{1}){\rm Pf}(a_{2},b_{2})
      +{\rm Pf}(a_{1},b_{2}){\rm Pf}(a_{2},b_{1})\right)\nonumber\\
&\quad -{\rm Pf}(d_{0},a_{1})
\left({\rm Pf}(B_{1},a_{2}){\rm Pf}(b_{1},b_{2})
      -{\rm Pf}(B_{1},b_{1}){\rm Pf}(a_{2},b_{2})
      +{\rm Pf}(B_{1},b_{2}){\rm Pf}(a_{2},b_{1})\right)\nonumber\\
&\quad +{\rm Pf}(d_{0},a_{2})
\left({\rm Pf}(B_{1},a_{1}){\rm Pf}(b_{1},b_{2})
      -{\rm Pf}(B_{1},b_{1}){\rm Pf}(a_{1},b_{2})
      +{\rm Pf}(B_{1},b_{2}){\rm Pf}(a_{1},b_{1})\right)\nonumber\\
&\quad -{\rm Pf}(d_{0},b_{1})
\left({\rm Pf}(B_{1},a_{1}){\rm Pf}(a_{2},b_{2})
      -{\rm Pf}(B_{1},a_{2}){\rm Pf}(a_{1},b_{2})
      +{\rm Pf}(B_{1},b_{2}){\rm Pf}(a_{1},a_{2})\right)\nonumber\\
&\quad +{\rm Pf}(d_{0},b_{2})
\left({\rm Pf}(B_{1},a_{1}){\rm Pf}(a_{2},b_{1})
      -{\rm Pf}(B_{1},a_{2}){\rm Pf}(a_{1},b_{1})
      +{\rm Pf}(B_{1},b_{1}){\rm Pf}(a_{1},a_{2})\right)\nonumber\\
&=-e^{\xi_{1}},\nonumber\\
g^{(2)}&={\rm Pf}(d_{0},B_{2},a_{1},a_{2},b_{1},b_{2})\nonumber\\
&= {\rm Pf}(d_{0},B_{2}){\rm Pf}(a_{1},a_{2},b_{1},b_{2})
   -{\rm Pf}(d_{0},a_{1}){\rm Pf}(B_{2},a_{2},b_{1},b_{2})\nonumber\\
&\quad +{\rm Pf}(d_{0},a_{2}){\rm Pf}(B_{2},a_{1},b_{1},b_{2})
   -{\rm Pf}(d_{0},b_{1}){\rm Pf}(B_{2},a_{1},a_{2},b_{2})\nonumber\\
&\quad +{\rm Pf}(d_{0},b_{2}){\rm Pf}(B_{2},a_{1},a_{2},b_{1})\nonumber\\
&={\rm Pf}(d_{0},B_{2})
\left({\rm Pf}(a_{1},a_{2}){\rm Pf}(b_{1},b_{2})
      -{\rm Pf}(a_{1},b_{1}){\rm Pf}(a_{2},b_{2})
      +{\rm Pf}(a_{1},b_{2}){\rm Pf}(a_{2},b_{1})\right)\nonumber\\
&\quad -{\rm Pf}(d_{0},a_{1})
\left({\rm Pf}(B_{2},a_{2}){\rm Pf}(b_{1},b_{2})
      -{\rm Pf}(B_{2},b_{1}){\rm Pf}(a_{2},b_{2})
      +{\rm Pf}(B_{2},b_{2}){\rm Pf}(a_{2},b_{1})\right)\nonumber\\
&\quad +{\rm Pf}(d_{0},a_{2})
\left({\rm Pf}(B_{2},a_{1}){\rm Pf}(b_{1},b_{2})
      -{\rm Pf}(B_{2},b_{1}){\rm Pf}(a_{1},b_{2})
      +{\rm Pf}(B_{2},b_{2}){\rm Pf}(a_{1},b_{1})\right)\nonumber\\
&\quad -{\rm Pf}(d_{0},b_{1})
\left({\rm Pf}(B_{2},a_{1}){\rm Pf}(a_{2},b_{2})
      -{\rm Pf}(B_{2},a_{2}){\rm Pf}(a_{1},b_{2})
      +{\rm Pf}(B_{2},b_{2}){\rm Pf}(a_{1},a_{2})\right)\nonumber\\
&\quad +{\rm Pf}(d_{0},b_{2})
\left({\rm Pf}(B_{2},a_{1}){\rm Pf}(a_{2},b_{1})
      -{\rm Pf}(B_{2},a_{2}){\rm Pf}(a_{1},b_{1})
      +{\rm Pf}(B_{2},b_{1}){\rm Pf}(a_{1},a_{2})\right)\nonumber\\
&=-e^{\xi_{2}},\nonumber
\end{align}
where $b_{jk}=\left(\frac{p_{j}p_{k}}{p_{j}+p_{k}}\right)^{2}$ and $\xi_{j}=p_{j}X+\frac{1}{p_{j}}T+\xi_{j0}$.

Letting $\xi_{j0}=\xi_{j0}^{\prime}+\log{a_{j}}$, the $\tau$-function can be rewritten as
\begin{eqnarray}
f=-1-a_{1}a_{2}b_{12}e^{\eta_{1}+\eta_{2}}, \qquad g^{(1)}=-a_{1}e^{\eta_{1}},\qquad
g^{(2)}=-a_{2}e^{\eta_{2}}.\label{g21}
\end{eqnarray}
where $b_{jk}=\left(\frac{p_{j}p_{k}}{p_{j}+p_{k}}\right)^{2}$ and $\eta_{j} =p_{j}X+\frac{1}{p_{j}}T+\xi_{j0}^{\prime}$.
\\
\\
2-soliton : For $N=2, n=2, B_{1}=\{b_{1},b_{2}\}, B_{2}=\{b_{3}, b_{4}\}, c_{12}=1$.
If a term contains a Pfaffian with zero elements, we omit that term. By the expansion rule, 
we obtain the $\tau$-function $f$ as follows:
\begin{align}
f&={\rm Pf}(a_{1},a_{2},a_{3},a_{4},b_{1},b_{2},b_{3},b_{4})\nonumber\\
&={\rm Pf}(a_{1},a_{2}){\rm Pf}(a_{3},a_{4},b_{1},b_{2},b_{3},b_{4})
 -{\rm Pf}(a_{1},a_{3}){\rm Pf}(a_{2},a_{4},b_{1},b_{2},b_{3},b_{4})\nonumber\\
&\quad+{\rm Pf}(a_{1},a_{4}){\rm Pf}(a_{2},a_{3},b_{1},b_{2},b_{3},b_{4})
 -{\rm Pf}(a_{1},b_{1}){\rm Pf}(a_{2},a_{3},a_{4},b_{2},b_{3},b_{4})\nonumber\\
&\quad+{\rm Pf}(a_{1},b_{2}){\rm Pf}(a_{2},a_{3},a_{4},b_{1},b_{3},b_{4})
 -{\rm Pf}(a_{1},b_{3}){\rm Pf}(a_{2},a_{3},a_{4},b_{1},b_{2},b_{4})\nonumber\\
&\quad+{\rm Pf}(a_{1},b_{4}){\rm Pf}(a_{2},a_{3},a_{4},b_{1},b_{2},b_{3})\nonumber\\
&={\rm Pf}(a_{1},a_{2})
\left({\rm Pf}(a_{3},a_{4}){\rm Pf}(b_{1},b_{2},b_{3},b_{4})
      -{\rm Pf}(a_{3},b_{3}){\rm Pf}(a_{4},b_{1},b_{2},b_{4})\right)\nonumber\\
&\quad-{\rm Pf}(a_{1},a_{3})
\left({\rm Pf}(a_{2},a_{4}){\rm Pf}(b_{1},b_{2},b_{3},b_{4})
      +{\rm Pf}(a_{2},b_{2}){\rm Pf}(a_{4},b_{1},b_{3},b_{4})\right)\nonumber\\
&\quad+{\rm Pf}(a_{1},a_{4})
\left({\rm Pf}(a_{2},a_{3}){\rm Pf}(b_{1},b_{2},b_{3},b_{4})
      +{\rm Pf}(a_{2},b_{2}){\rm Pf}(a_{3},b_{1},b_{3},b_{4})\right)\nonumber\\
&\quad-{\rm Pf}(a_{1},b_{1})
\left({\rm Pf}(a_{2},a_{3}){\rm Pf}(a_{4},b_{2},b_{3},b_{4})
      -{\rm Pf}(a_{2},a_{4}){\rm Pf}(a_{3},b_{2},b_{3},b_{4})\right.\nonumber\\
&\qquad\left.
      +{\rm Pf}(a_{2},b_{2}){\rm Pf}(a_{3},a_{4},b_{3},b_{4})\right)\nonumber\\
&=-{\rm Pf}(a_{1},a_{2}){\rm Pf}(a_{3},a_{4})
   {\rm Pf}(b_{1},b_{3}){\rm Pf}(b_{2},b_{4})
  +{\rm Pf}(a_{1},a_{2}){\rm Pf}(a_{3},a_{4})
   {\rm Pf}(b_{1},b_{4}){\rm Pf}(b_{2},b_{3})\nonumber\\
&\quad+{\rm Pf}(a_{1},a_{3}){\rm Pf}(a_{2},a_{4})
   {\rm Pf}(b_{1},b_{3}){\rm Pf}(b_{2},b_{4})
  -{\rm Pf}(a_{1},a_{3}){\rm Pf}(a_{2},a_{4})
   {\rm Pf}(b_{1},b_{4}){\rm Pf}(b_{2},b_{3})\nonumber\\
&\quad-{\rm Pf}(a_{1},a_{3}){\rm Pf}(a_{2},b_{2})
   {\rm Pf}(a_{4},b_{4}){\rm Pf}(b_{1},b_{3})
  -{\rm Pf}(a_{1},a_{4}){\rm Pf}(a_{2},a_{3})
   {\rm Pf}(b_{1},b_{3}){\rm Pf}(b_{2},b_{4})\nonumber\\
&\quad+{\rm Pf}(a_{1},a_{4}){\rm Pf}(a_{2},a_{3})
   {\rm Pf}(b_{1},b_{4}){\rm Pf}(b_{2},b_{3})
  -{\rm Pf}(a_{1},a_{4}){\rm Pf}(a_{2},b_{2})
   {\rm Pf}(a_{3},b_{3}){\rm Pf}(b_{1},b_{4})\nonumber\\
&\quad-{\rm Pf}(a_{1},b_{1}){\rm Pf}(a_{2},a_{3})
   {\rm Pf}(a_{4},b_{4}){\rm Pf}(b_{2},b_{3})
  -{\rm Pf}(a_{1},b_{1}){\rm Pf}(a_{2},a_{4})
   {\rm Pf}(a_{3},b_{3}){\rm Pf}(b_{2},b_{4})\nonumber\\
&\quad+{\rm Pf}(a_{1},b_{1}){\rm Pf}(a_{2},b_{2}){\rm Pf}(a_{3},b_{3}){\rm Pf}(a_{4},b_{4})\nonumber\\
&=1+b_{13}e^{\xi_{1}+\xi_{3}}+b_{23}e^{\xi_{2}+\xi_{3}}
  +b_{14}e^{\xi_{1}+\xi_{4}}+b_{24}e^{\xi_{2}+\xi_{4}}\nonumber\\
&\quad +(p_{1}-p_{2})^{2}(p_{3}-p_{4})^{2}
   \frac{b_{13}b_{23}b_{14}b_{24}}{p_{1}^{2}p_{2}^{2}p_{3}^{2}p_{4}^{2}}
   e^{\xi_{1}+\xi_{2}+\xi_{3}+\xi_{4}}.\nonumber
\end{align}

Similarly, we obtain the $\tau$-functions $g^{(1)}$ and $g^{(2)}$ as follows:
\begin{align}
g^{(1)}&={\rm Pf}(d_{0},B_{1},a_{1},a_{2},a_{3},a_{4},b_{1},b_{2},b_{3},b_{4})\nonumber\\
&=e^{\xi_{1}}+e^{\xi_{2}}+\frac{(p_{1}-p_{2})^{2}p_{3}^{4}}{(p_{1}+p_{3})^{2}(p_{2}+p_{3})^{2}}e^{\xi_{1}+\xi_{2}+\xi_{3}}+\frac{(p_{1}-p_{2})^{2}p_{4}^{4}}{(p_{1}+p_{4})^{2}(p_{2}+p_{4})^{2}}e^{\xi_{1}+\xi_{2}+\xi_{4}},\nonumber\\
g^{(2)}&={\rm Pf}(d_{0},B_{2},a_{1},a_{2},a_{3},a_{4},b_{1},b_{2},b_{3},b_{4})\nonumber\\
&=e^{\xi_{3}}+e^{\xi_{4}}+\frac{(p_{3}-p_{4})^{2}p_{2}^{4}}{(p_{2}+p_{3})^{2}(p_{2}+p_{4})^{2}}e^{\xi_{2}+\xi_{3}+\xi_{4}}+\frac{(p_{3}-p_{4})^{2}p_{1}^{4}}{(p_{1}+p_{3})^{2}(p_{1}+p_{4})^{2}}e^{\xi_{1}+\xi_{3}+\xi_{4}},\nonumber
\end{align}
where $\xi_{j}=p_{j}X+\frac{1}{p_{j}}T+\xi_{j0}$.

Letting $\xi_{j0}=\xi_{j0}^{\prime}+\log{a_{j}}$, the $\tau$-functions can be rewritten as
\begin{align}
f&=1+a_{1}a_{3}b_{13}e^{\eta_{1}+\eta_{3}}
  +a_{2}a_{3}b_{23}e^{\eta_{2}+\eta_{3}}
  +a_{1}a_{4}b_{14}e^{\eta_{1}+\eta_{4}}\nonumber\\
&\quad+a_{2}a_{4}b_{24}e^{\eta_{2}+\eta_{4}}
  +a_{1}a_{2}a_{3}a_{4}(p_{1}-p_{2})^{2}(p_{3}-p_{4})^{2}
    \frac{b_{13}b_{23}b_{14}b_{24}}{p_{1}^{2}p_{2}^{2}p_{3}^{2}p_{4}^{2}}
    e^{\eta_{1}+\eta_{2}+\eta_{3}+\eta_{4}},\label{f2}
\end{align}
\begin{align}
g^{(1)}&=a_{1}e^{\eta_{1}}+a_{2}e^{\eta_{2}}
 +\frac{a_{1}a_{2}a_{3}(p_{1}-p_{2})^{2}p_{3}^{4}}
        {(p_{1}+p_{3})^{2}(p_{2}+p_{3})^{2}}
  e^{\eta_{1}+\eta_{2}+\eta_{3}}\nonumber\\
&\quad+\frac{a_{1}a_{2}a_{4}(p_{1}-p_{2})^{2}p_{4}^{4}}
        {(p_{1}+p_{4})^{2}(p_{2}+p_{4})^{2}}
  e^{\eta_{1}+\eta_{2}+\eta_{4}},\label{g12}
\end{align}
\begin{align}
g^{(2)}&=a_{3}e^{\eta_{3}}+a_{4}e^{\eta_{4}}
 +\frac{a_{2}a_{3}a_{4}(p_{3}-p_{4})^{2}p_{2}^{4}}
        {(p_{2}+p_{3})^{2}(p_{2}+p_{4})^{2}}
  e^{\eta_{2}+\eta_{3}+\eta_{4}}\nonumber\\
&\quad+\frac{a_{1}a_{3}a_{4}(p_{3}-p_{4})^{2}p_{1}^{4}}
        {(p_{1}+p_{3})^{2}(p_{1}+p_{4})^{2}}
  e^{\eta_{1}+\eta_{3}+\eta_{4}},\label{g22}
\end{align}
where $b_{jk}=\left(\frac{p_{j}p_{k}}{p_{j}+p_{k}}\right)^{2}$ and $\eta_{j} =p_{j}X+\frac{1}{p_{j}}T+\xi_{j0}^{\prime}$.

These solutions  (\ref{g21}) - (\ref{g22}) 
can also be obtained from Hirota's direct method.
\end{example}


\begin{section}{A self-adaptive moving mesh scheme for the MCmSP equation}
\label{sec_disMCmSP}
A semi-discrete analogue of bilinear equations (\ref{MCmSPbilinear}) is
\begin{eqnarray}
\left\{
\begin{array}{ll}
  \displaystyle\frac{1}{a}D_{T}(f_{l+1}\cdot g_{l}^{(i)}-f_{l}\cdot g_{l+1}^{(i)})=g_{l+1}^{(i)}f_{l}+g_{l}^{(i)}f_{l+1},\qquad i=1,2,\cdots,n,\\
  D_{T}^{2}f_{l}\cdot f_{l}=2\displaystyle\sum_{1\leq j < k \leq n}c_{jk}g_{l}^{(j)}g_{l}^{(k)},
\end{array}
\right.
\label{disbilinear}
\end{eqnarray}
where $2a$ is the parameter related to a discrete interval.

\begin{lem}
The bilinear equations (\ref{disbilinear}) have the following Pfaffian solution:
\begin{eqnarray}
f_{l}={\rm Pf}(a_{1},\cdots,a_{2N},b_{1},\cdots,b_{2N})_{l},\qquad g_{l}^{(i)}={\rm Pf}(d_{0},B_{i},a_{1},\cdots,a_{2N},b_{1},\cdots,b_{2N})_{l},
\label{dis_f_MCmSP}
\end{eqnarray}
where $i=1,2,\cdots,n$ and the elements of the Pfaffians are defined as
\begin{eqnarray}
{\rm Pf}(a_{j},a_{k})_{l}=\frac{p_{j}-p_{k}}{p_{j}+p_{k}}\phi_{j}^{(0)}(l)\phi_{k}^{(0)}(l),\qquad{\rm Pf}(a_{j},b_{k})_{l}=\delta_{j,k},
\label{dmpf1}
\end{eqnarray}
\begin{eqnarray}
{\rm Pf}(b_{j},b_{k})_{l}=\frac{c_{\mu\nu}}{p_{j}^{-2}-p_{k}^{-2}}\quad(b_{j}\in B_{\mu},b_{k}\in B_{\nu} ),
\end{eqnarray}
\begin{eqnarray}
{\rm Pf}(d_{m},a_{j})_{l}=\phi_{j}^{(m)}(l),\qquad{\rm Pf}(d^{l},a_{j})_{l}=\phi_{j}^{(0)}(l+1),
\end{eqnarray}
\begin{eqnarray}
{\rm Pf}(b_{k},B_{\mu})_{l}=
\left\{
\begin{array}{ll}
 1 & (b_{k}\in B_{\mu}),\\
 0 & (b_{k}\notin B_{\mu}),
\end{array}
\right.\\
\end{eqnarray}
\begin{eqnarray}
{\rm Pf}(d_{0},d^{l})_{l}=1,\qquad{\rm Pf}(d_{-1},d^{l})_{l}=-a,
\label{dmpf2}
\end{eqnarray}
where
\begin{eqnarray}
\phi_{j}^{(n)}(l)=p_{j}^{n}\left(\frac{1+ap_{j}}{1-ap_{j}}\right)^{l}e^{p_{j}^{-1}T+\xi_{j0}},
\end{eqnarray}
and it satisfies the following equation
\begin{eqnarray}
\frac{\phi_{j}^{(n)}(l+1)-\phi_{j}^{(n)}(l)}{a}=\phi_{j}^{(n+1)}(l+1)+\phi_{j}^{(n+1)}(l).
\end{eqnarray}
\end{lem}
The elements of the Pfaffians not defined above are defined as zero.

\begin{proof}
First, we show the solutions (\ref{dis_f_MCmSP}) satisfy the first equation of the bilinear equations (\ref{disbilinear}). Since
\begin{align}
\frac{\partial}{\partial T}{\rm Pf}(a_{j},a_{k})_{l}&=\phi_{j}^{(0)}(l)\phi_{k}^{(-1)}(l)-\phi_{j}^{(-1)}(l)\phi_{k}^{(0)}(l)\nonumber\\
&={\rm Pf}(d_{0},a_{j})_{l}{\rm Pf}(d_{-1},a_{k})_{l}-{\rm Pf}(d_{-1},a_{j})_{l}{\rm Pf}(d_{0},a_{k})_{l}\nonumber\\
&={\rm Pf}(d_{-1},d_{0})_{l}{\rm Pf}(a_{j},a_{k})_{l}
  -{\rm Pf}(d_{-1},a_{j})_{l}{\rm Pf}(d_{0},a_{k})_{l}\nonumber\\
&\quad+{\rm Pf}(d_{-1},a_{k})_{l}{\rm Pf}(d_{0},a_{j})_{l}\nonumber\\
&={\rm Pf}(d_{-1},d_{0},a_{j},a_{k})_{l},\nonumber
\end{align}
\begin{align}
{\rm Pf}(a_{j},a_{k})_{l+1}&={\rm Pf}(a_{j},a_{k})_{l}
 +\phi_{j}^{(0)}(l+1)\phi_{k}^{(0)}(l)
 -\phi_{j}^{(0)}(l)\phi_{k}^{(0)}(l+1)\nonumber\\
&={\rm Pf}(d_{0},d^{l})_{l}{\rm Pf}(a_{j},a_{k})_{l}
  -{\rm Pf}(d_{0},a_{j})_{l}{\rm Pf}(d^{l},a_{k})_{l}\nonumber\\
&\quad+{\rm Pf}(d_{0},a_{k})_{l}{\rm Pf}(d^{l},a_{j})_{l}\nonumber\\
&={\rm Pf}(d_{0},d^{l},a_{j},a_{k})_{l},\nonumber
\end{align}
\begin{align}
&(\partial_{T}-a){\rm Pf}(a_{j},a_{k})_{l+1}\nonumber\\
&=\partial_{T}{\rm Pf}(a_{j},a_{k})_{l+1}-a{\rm Pf}(a_{j},a_{k})_{l+1}\nonumber\\
&=\phi_{j}^{(0)}(l+1)\phi_{k}^{(-1)}(l+1)-\phi_{j}^{(-1)}(l+1)\phi_{k}^{(0)}(l+1)\nonumber\\
&\qquad-a\left({\rm Pf}(a_{j},a_{k})_{l}+\phi_{j}^{(0)}(l+1)\phi_{k}^{(0)}(l)-\phi_{j}^{(0)}(l)\phi_{k}^{(0)}(l+1)\right)\nonumber\\
&=-a{\rm Pf}(a_{j},a_{k})_{l}
 +\phi_{j}^{(0)}(l+1)\left(\phi_{k}^{(-1)}(l+1)-a\phi_{k}^{(0)}(l)\right)\nonumber\\
&\quad-\phi_{k}^{(0)}(l+1)\left(\phi_{j}^{(-1)}(l+1)-a\phi_{j}^{(0)}(l)\right)\nonumber\\
&=-a{\rm Pf}(a_{j},a_{k})_{l}
 +\phi_{j}^{(0)}(l+1)\left(a\phi_{k}^{(0)}(l+1)+\phi_{k}^{(-1)}(l)\right)\nonumber\\
&\quad-\phi_{k}^{(0)}(l+1)\left(a\phi_{j}^{(0)}(l+1)+\phi_{j}^{(-1)}(l)\right)\nonumber\\
&=-a{\rm Pf}(a_{j},a_{k})_{l}+\phi_{j}^{(0)}(l+1)\phi_{k}^{(-1)}(l)-\phi_{j}^{(-1)}(l)\phi_{k}^{(0)}(l+1)\nonumber\\
&={\rm Pf}(d_{-1},d^{l})_{l}{\rm Pf}(a_{j},a_{k})_{l}
  -{\rm Pf}(d_{-1},a_{j})_{l}{\rm Pf}(d^{l},a_{k})_{l}\nonumber\\
&\quad+{\rm Pf}(d_{-1},a_{k})_{l}{\rm Pf}(d^{l},a_{j})_{l}\nonumber\\
&={\rm Pf}(d_{-1},d^{l},a_{j},a_{k})_{l},\nonumber
\end{align}
we obtain
\begin{eqnarray}
\begin{aligned}
\partial_{T}f_{l}&={\rm Pf}(d_{-1},d_{0},\cdots)_{l},\qquad
f_{l+1}={\rm Pf}(d_{0},d^{l},\cdots)_{l},\\
(\partial_{T}-a)f_{l+1}&={\rm Pf}(d_{-1},d^{l},\cdots)_{l}.
\end{aligned}\nonumber
\end{eqnarray}
Furthermore, we have
\begin{align}
\partial_{T}g_{l}^{(\mu)}&=\partial_{T}\left(\sum_{j=1}^{2N}(-1)^{j}{\rm Pf}(d_{0},a_{j})_{l}{\rm Pf}(B_{\mu},\cdots,\hat{a}_{j},\cdots)_{l}\right)\nonumber\\
&=\sum_{j=1}^{2N}(-1)^{j}(\partial_{T}{\rm Pf}(d_{0},a_{j})_{l})
{\rm Pf}(B_{\mu},\cdots,\hat{a}_{j},\cdots)_{l}\nonumber\\
&\quad+\sum_{j=1}^{2N}(-1)^{j}{\rm Pf}(d_{0},a_{j})_{l}
\partial_{T}{\rm Pf}(B_{\mu},\cdots,\hat{a}_{j},\cdots)_{l}\nonumber\\
&=\sum_{j=1}^{2N}(-1)^{j}{\rm Pf}(d_{-1},a_{j})_{l}
{\rm Pf}(B_{\mu},\cdots,\hat{a}_{j},\cdots)_{l}\nonumber\\
&\quad+\sum_{j=1}^{2N}(-1)^{j}{\rm Pf}(d_{0},a_{j})_{l}
{\rm Pf}(B_{\mu},d_{-1},d_{0},\cdots,\hat{a}_{j},\cdots)_{l}\nonumber\\
&={\rm Pf}(d_{-1},B_{\mu},\cdots)_{l}+{\rm Pf}(d_{0},B_{\mu},d_{-1},d_{0},\cdots)_{l}\nonumber\\
&={\rm Pf}(d_{-1},B_{\mu},\cdots)_{l},\nonumber
\end{align}
\begin{align}
g_{l+1}^{(\mu)}&=\sum_{j=1}^{2N}(-1)^{j}{\rm Pf}(d_{0},a_{j})_{l+1}
{\rm Pf}(B_{\mu},\cdots,\hat{a}_{j},\cdots)_{l+1}\nonumber\\
&=\sum_{j=1}^{2N}(-1)^{j}{\rm Pf}(d^{l},a_{j})_{l}
{\rm Pf}(B_{\mu},d_{0},d^{l},\cdots,\hat{a}_{j},\cdots)_{l}\nonumber\\
&={\rm Pf}(d^{l},B_{\mu},\cdots)_{l},\nonumber
\end{align}
by expanding Pf$(d^{l},B_{\mu},d_{0},d^{l}\cdots)$ which is trivially zero Pfaffian.
The simple expression of $(\partial_{T}-a)g_{l+1}^{(\mu)}$ is obtained as follows:
\begin{align}
&(\partial_{T}-a)g_{l+1}^{(\mu)}\nonumber\\
&=\sum_{j=1}^{2N}(-1)^{j}(\partial_{T}-a){\rm Pf}(d^{l},a_{j})_{l}
{\rm Pf}(B_{\mu},\cdots,\hat{a}_{j},\cdots)_{l}\nonumber\\
&\quad+\sum_{j=1}^{2N}(-1)^{j}{\rm Pf}(d^{l},a_{j})_{l}
\partial_{T}{\rm Pf}(B_{\mu},\cdots,\hat{a}_{j},\cdots)_{l}\nonumber\\
&=\sum_{j=1}^{2N}(-1)^{j}
({\rm Pf}(d_{-1},a_{j})_{l}+a{\rm Pf}(d_{0},a_{j})_{l})
{\rm Pf}(B_{\mu},\cdots,\hat{a}_{j},\cdots)_{l}\nonumber\\
&\quad+\sum_{j=1}^{2N}(-1)^{j}{\rm Pf}(d^{l},a_{j})_{l}
{\rm Pf}(d_{-1},d_{0},B_{\mu},\cdots,\hat{a}_{j},\cdots)_{l}\nonumber\\
&={\rm Pf}(d_{-1},B_{\mu},\cdots)_{l}
  +a {\rm Pf}(d_{0},B_{\mu},\cdots)_{l}\nonumber\\
&\quad+\sum_{j=1}^{2N}(-1)^{j}{\rm Pf}(d^{l},a_{j})_{l}
{\rm Pf}(d_{-1},d_{0},B_{\mu},\cdots,\hat{a}_{j},\cdots)_{l}\nonumber\\
&={\rm Pf}(d^{l},d_{-1},d_{0},B_{\mu},\cdots)_{l}={\rm Pf}(d_{-1},B_{\mu},d_{0},d^{l},\cdots)_{l}.\nonumber
\end{align}

An algebraic identity of {\rm Pfaffian}\cite{Hirotabook}
\begin{align}
&{\rm Pf}(d_{-1},B_{\mu},d_{0},d^{l},\cdots)_{l}{\rm Pf}(\cdots)_{l}\nonumber\\
&={\rm Pf}(d_{-1},d_{0},\cdots)_{l}{\rm Pf}(d^{l},B_{\mu},\cdots)_{l}
 -{\rm Pf}(d_{-1},d^{l},\cdots)_{l}{\rm Pf}(d_{0},B_{\mu},\cdots)_{l}\nonumber\\
&\quad+{\rm Pf}(d_{-1},B_{\mu},\cdots)_{l}{\rm Pf}(d_{0},d^{l},\cdots)_{l},\nonumber
\end{align}
gives
\begin{eqnarray}
(\partial_{T}-a)g_{l+1}^{(\mu)}\times f_{l}
=g_{l+1}^{(\mu)}\times \partial_{T}f_{l}
-(\partial_{T}-a)f_{l+1}\times g_{l}^{(\mu)}
+\partial_{T}g_{l}^{(\mu)}\times f_{l+1},\nonumber
\end{eqnarray}
which is the first equation of the bilinear equations (\ref{disbilinear}). The second equation of the bilinear equations (\ref{disbilinear}) can be proved as in the continuous case.

\end{proof}


\begin{lem}
A semi-discrete analogue of the MCCID equations (\ref{m-dispersionless})
\begin{eqnarray}
\left\{
\begin{array}{ll}
  \displaystyle\frac{d}{dT}(u_{l+1}^{(i)}-u_{l}^{(i)})=a(2\rho_{l}-1)(u_{l+1}^{(i)}+u_{l}^{(i)}),\\
  \displaystyle\frac{d}{dT}\rho_{l}=-\displaystyle\frac{1}{2a}\sum_{1\leq j < k \leq n}c_{jk}(u_{l+1}^{(j)}u_{l+1}^{(k)}-u_{l}^{(j)}u_{l}^{(k)}),
\end{array}
\right.
\label{m-semidisdispersionless}
\end{eqnarray}
is obtained from the bilinear equations (\ref{disbilinear}) through the dependent variable transformation
\begin{eqnarray}
u_{l}^{(i)}=\frac{g_{l}^{(i)}}{f_{l}},\qquad
\rho_{l}=1-\frac{1}{2a}\left(\log{\frac{f_{l+1}}{f_{l}}}\right)_{T}.
\label{disdependenttransformation}
\end{eqnarray}

\end{lem}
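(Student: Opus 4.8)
The plan is to substitute the dependent variable transformation (\ref{disdependenttransformation}) directly into the two bilinear equations (\ref{disbilinear}) and reduce each of them by dividing by a suitable product of $\tau$-functions, exactly paralleling the continuous reduction carried out in Section~\ref{sec_MCmSP}. The whole argument rests on two elementary identities for the Hirota operator: for any smooth $f,g$ one has $D_T(g\cdot f)=f^{2}\,\partial_T(g/f)$ and $D_T^{2}f\cdot f=2f^{2}(\log f)_{TT}$. These let me convert bilinear expressions into ordinary derivatives of the physical fields $u_l^{(i)}=g_l^{(i)}/f_l$ and $\rho_l$.

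I would handle the second equation first, since it is local in the lattice index $l$. Dividing the second bilinear equation $D_T^{2}f_l\cdot f_l=2\sum_{1\le j<k\le n}c_{jk}g_l^{(j)}g_l^{(k)}$ by $f_l^{2}$ and using $D_T^{2}f_l\cdot f_l/f_l^{2}=2(\log f_l)_{TT}$ together with $g_l^{(j)}g_l^{(k)}/f_l^{2}=u_l^{(j)}u_l^{(k)}$ gives
\begin{equation}
(\log f_l)_{TT}=\sum_{1\le j<k\le n}c_{jk}\,u_l^{(j)}u_l^{(k)}.\nonumber
\end{equation}
Since $\rho_l=1-\frac{1}{2a}\big(\log(f_{l+1}/f_l)\big)_T$, differentiating once in $T$ and applying the above identity at both indices $l+1$ and $l$ yields $\frac{d}{dT}\rho_l=-\frac{1}{2a}\big[(\log f_{l+1})_{TT}-(\log f_l)_{TT}\big]=-\frac{1}{2a}\sum_{1\le j<k\le n}c_{jk}\big(u_{l+1}^{(j)}u_{l+1}^{(k)}-u_l^{(j)}u_l^{(k)}\big)$, which is the second equation of (\ref{m-semidisdispersionless}).

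For the first equation I would divide the first bilinear equation by $f_l f_{l+1}$. The right-hand side $g_{l+1}^{(i)}f_l+g_l^{(i)}f_{l+1}$ collapses immediately to $u_{l+1}^{(i)}+u_l^{(i)}$. On the left, expanding $D_T(g_{l+1}^{(i)}\cdot f_l-g_l^{(i)}\cdot f_{l+1})$ and rewriting every ratio in terms of the $u$'s and the logarithmic derivatives $(\log f_l)_T,(\log f_{l+1})_T$, I would isolate the combination $\frac{d}{dT}(u_{l+1}^{(i)}-u_l^{(i)})$ plus a remainder. The cross terms recombine as $(u_{l+1}^{(i)}+u_l^{(i)})\big(\log(f_{l+1}/f_l)\big)_T$, so that
\begin{equation}
\frac{d}{dT}(u_{l+1}^{(i)}-u_l^{(i)})=a\big(u_{l+1}^{(i)}+u_l^{(i)}\big)-\big(u_{l+1}^{(i)}+u_l^{(i)}\big)\Big(\log\tfrac{f_{l+1}}{f_l}\Big)_T.\nonumber
\end{equation}
Substituting $\big(\log(f_{l+1}/f_l)\big)_T=2a(1-\rho_l)$ from the definition of $\rho_l$ then gives $a(u_{l+1}^{(i)}+u_l^{(i)})\big[1-2(1-\rho_l)\big]=a(2\rho_l-1)(u_{l+1}^{(i)}+u_l^{(i)})$, which is the first equation of (\ref{m-semidisdispersionless}).

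The only place requiring genuine care is the bookkeeping in this last step: one must check that the mismatched shifts in the cross terms $u_{l+1}^{(i)}(\log f_l)_T$ and $u_l^{(i)}(\log f_{l+1})_T$ reassemble precisely into the single difference $\big(\log(f_{l+1}/f_l)\big)_T$ multiplying $u_{l+1}^{(i)}+u_l^{(i)}$, rather than producing leftover terms. I do not expect a structural obstacle: the computation is the discrete counterpart of the continuous reduction of Section~\ref{sec_MCmSP}, with the continuous $X$-derivative replaced by the backward/forward difference in the lattice index $l$, and every intermediate identity is purely algebraic.
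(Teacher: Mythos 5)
Your proposal is correct and follows essentially the same route as the paper: dividing the first bilinear equation by $f_{l}f_{l+1}$ and using $\bigl(\log(f_{l+1}/f_{l})\bigr)_{T}=2a(1-\rho_{l})$, and dividing the second by $f_{l}^{2}$ to get $(\log f_{l})_{TT}=\sum_{1\leq j<k\leq n}c_{jk}u_{l}^{(j)}u_{l}^{(k)}$ before differentiating the definition of $\rho_{l}$. The cross-term bookkeeping you flag does work out exactly as you expect, since $D_{T}(g_{l+1}^{(i)}\cdot f_{l}-g_{l}^{(i)}\cdot f_{l+1})/(f_{l}f_{l+1})=\partial_{T}(u_{l+1}^{(i)}-u_{l}^{(i)})+(u_{l+1}^{(i)}+u_{l}^{(i)})\bigl(\log(f_{l+1}/f_{l})\bigr)_{T}$, which is precisely the paper's intermediate identity.
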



\begin{proof}
At first, dividing both sides of the first equation of the bilinear equations (\ref{disbilinear}) by $f_{l+1}f_{l}$, we have
\begin{eqnarray}
\left(\frac{g_{l+1}^{(i)}}{f_{l+1}}-\frac{g_{l}^{(i)}}{f_{l}}\right)_{T}=\left(a-\left(\log{\frac{f_{l+1}}{f_{l}}}\right)_{T}\right)\left(\frac{g_{l+1}^{(i)}}{f_{l+1}}+\frac{g_{l}^{(i)}}{f_{l}}\right).
\end{eqnarray}
It follows
\begin{eqnarray}
\frac{d}{dT}(u_{l+1}^{(i)}-u_{l}^{(i)})=a(2\rho_{l}-1)(u_{l+1}^{(i)}+u_{l}^{(i)}),
\end{eqnarray}
by the dependent variable transformation (\ref{disdependenttransformation}).

Next, from the dependent variable transformation (\ref{disdependenttransformation}), 
the second equation of the bilinear equations (\ref{disbilinear}) is rewritten as
\begin{eqnarray}
(\log{f_{l}})_{TT}=\sum_{1\leq j < k\leq n}c_{jk}u_{l}^{(j)}u_{l}^{(k)}.\label{33}
\end{eqnarray}
By (\ref{33}) and the dependent variable transformation (\ref{disdependenttransformation}), we obtain
 \begin{eqnarray}
\frac{d}{dT}\rho_{l}=-\frac{1}{2a}\sum_{1\leq j < k \leq n}c_{jk}(u_{l+1}^{(j)}u_{l+1}^{(k)}-u_{l}^{(j)}u_{l}^{(k)}).
\end{eqnarray}

\end{proof}


\begin{lem}
A semi-discrete analogue of the MCmSP equation is of the form:
\begin{eqnarray}
\left\{
\begin{array}{ll}
  \displaystyle\frac{d} {dT}(u_{l+1}^{(i)}-u_{l}^{(i)})= (\delta_{l}-a)(u_{l+1}^{(i)}+u_{l}^{(i)}),\\
  \\
 \displaystyle\frac{d\delta_{l}}{dT}=-\displaystyle\sum_{1\leq j < k\leq n}c_{jk}(u_{l+1}^{(j)}u_{l+1}^{(k)}-u_{l}^{(j)}u_{l}^{(k)}),\\
  x_{l}=x_{0}+ \displaystyle\sum_{m=0}^{l-1}\delta_{m},\quad t=T.
\end{array}
\right.
\label{semidisMCmSP}
\end{eqnarray}
Here, $u_{l}^{(i)}$ is the value at $x_{l}$.
\end{lem}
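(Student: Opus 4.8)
The plan is to obtain (\ref{semidisMCmSP}) as the image of the semi-discrete MCCID system (\ref{m-semidisdispersionless}) under a \emph{discrete} hodograph transformation, in exact parallel with the passage from the continuous MCCID equations (\ref{m-dispersionless}) to the MCmSP equation (\ref{MCmSP}) in Section \ref{sec_MCmSP}. Here the lattice index $l$ plays the role of the computational variable $X$ with uniform spacing $2a$, whereas the physical node $x_l$ is the nonuniform mesh to be reconstructed. The crux is to identify the mesh spacing with the conserved density: since in the continuum $\rho=\partial x/\partial X$, its natural discrete analogue is
\begin{eqnarray}
\delta_l:=x_{l+1}-x_l=2a\,\rho_l, \nonumber
\end{eqnarray}
the physical spacing equal to the computational spacing $2a$ times the local density $\rho_l$.

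First I would carry out the purely algebraic part. Substituting $\rho_l=\delta_l/(2a)$ into the first equation of (\ref{m-semidisdispersionless}) replaces the factor $a(2\rho_l-1)$ by $\delta_l-a$, which is precisely the first equation of (\ref{semidisMCmSP}); multiplying the second equation of (\ref{m-semidisdispersionless}) by $2a$ and using $d\delta_l/dT=2a\,d\rho_l/dT$ yields the evolution $d\delta_l/dT=-\sum_{1\le j<k\le n}c_{jk}(u_{l+1}^{(j)}u_{l+1}^{(k)}-u_l^{(j)}u_l^{(k)})$. These are immediate rewritings once $\delta_l=2a\rho_l$ is adopted.

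The substantive step, and the main obstacle, is reconstructing the mesh from the spacings and checking that $x_l=x_0+\sum_{m=0}^{l-1}\delta_m$, $t=T$, is a \emph{consistent} discrete hodograph transformation; this is exactly where general boundary conditions come in. In the continuum the time part of (\ref{m-hodograph}) forces $\partial x/\partial T|_X=-F$ with flux $F=\sum_{1\le j<k\le n}c_{jk}u^{(j)}u^{(k)}$. I would differentiate the reconstruction in $T$ and telescope the just-derived evolution of $\delta_m$, obtaining
\begin{eqnarray}
\fl{\frac{dx_l}{dT}=\frac{dx_0}{dT}+\sum_{m=0}^{l-1}\frac{d\delta_m}{dT}=\frac{dx_0}{dT}+\sum_{1\le j<k\le n}c_{jk}\big(u_0^{(j)}u_0^{(k)}-u_l^{(j)}u_l^{(k)}\big).} \nonumber
\end{eqnarray}
Thus every node satisfies the required velocity law $dx_l/dT=-F_l$ if and only if the edge node is permitted to move with $dx_0/dT=-\sum_{1\le j<k\le n}c_{jk}u_0^{(j)}u_0^{(k)}$. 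In the classical vanishing-boundary case this velocity is zero and $x_0$ stays frozen, which is precisely why earlier schemes could not treat nonzero or periodic data; allowing the moving edge is the new ingredient that makes the construction work under general boundary conditions.

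Granting the moving edge, the spatial relation $x_{l+1}-x_l=2a\rho_l$ and the temporal relation $dx_l/dT=-F_l$ are mutually compatible for all $l$ (the telescoping above is what guarantees this), so the reconstructed mesh $x_l(T)$ together with $u_l^{(i)}$ solving (\ref{semidisMCmSP}) is a genuine self-adaptive moving mesh, with $u_l^{(i)}$ the value at $x_l$ and the nodes clustering automatically where $\rho_l$ is small. If a full justification of the name ``semi-discrete analogue'' is wanted, I would finish by checking that the continuum limit $a\to0$ of (\ref{semidisMCmSP}) recovers the MCmSP equation (\ref{MCmSP}) through the limiting hodograph transformation, but the core of the proof is the discrete hodograph consistency established above.
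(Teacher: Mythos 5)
Your proposal is correct and follows essentially the same route as the paper: discretize the hodograph relation $\partial x/\partial X=\rho$ to get $x_{l}=x_{0}+\sum_{m=0}^{l-1}2a\rho_{m}$, identify $\delta_{l}=x_{l+1}-x_{l}=2a\rho_{l}$, and substitute into the semi-discrete MCCID system (\ref{m-semidisdispersionless}) to produce (\ref{semidisMCmSP}). The additional material on the moving edge point and the consistency condition $dx_{0}/dT=-\sum_{1\leq j<k\leq n}c_{jk}u_{0}^{(j)}u_{0}^{(k)}$ is not needed for this lemma (here $x_{0}$ is simply a constant of integration); it is precisely the content of the paper's subsequent theorem (\ref{selfMCmSP}) on general boundary conditions, so you have anticipated that step rather than erred.
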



\begin{proof}
From the derivative law (\ref{m-derivativelaw}), we obtain
\begin{eqnarray}
\frac{\partial x}{\partial X}=\rho.\label{6.19}
\end{eqnarray}
Integrating (\ref{6.19}) with respect to $X$, we have the integral form of the hodograph transformation (\ref{m-hodograph})
\begin{eqnarray}
x=\int\rho(X,T)dX=x_{0}+\int_{0}^{X}\rho(\bar{X},T)d\bar{X},
\label{999}
\end{eqnarray}
where $x_{0}$ is the integration constant with respect to $X$, determined by the value 
at the left-hand edge $(x=x_{0})$; it may depend on $T$ when the boundary flux is nonzero.
Discretizing (\ref{999}), we then have the discrete hodograph transformation
\begin{eqnarray}
x_{l}=x_{0}+\sum_{m=0}^{l-1}2a\rho_{m},
\label{dishodographMCmSP}
\end{eqnarray}
where $X_l=2al$ is the uniform lattice coordinate in the $X$ variable and $ \rho_{l}\equiv\rho(X_{l},T)$.

Introducing a mesh interval
\begin{eqnarray}
\delta_{l}:=x_{l+1}-x_{l},
\end{eqnarray}
we obtain
\begin{eqnarray}
\delta_{l}=2a\rho_{l},\label{meshintMCmSP}
\end{eqnarray}
by substituting the discrete hodograph transformation (\ref{dishodographMCmSP}).
Substituting  (\ref{meshintMCmSP}) into the semi-discrete MCCID equations (\ref{m-semidisdispersionless}), 
we have a semi-discrete analogue of the MCmSP equation (\ref{semidisMCmSP}).
\end{proof}

\begin{remark}

Equation (\ref{dishodographMCmSP}) is the summation form of the discrete hodograph transformation. At fixed $T$, its local form is
\begin{eqnarray}
\Delta x_{l}=\rho_{l}\Delta X_{l},\qquad \Delta X_l=2a.
\label{ddishodographMCmSP}
\end{eqnarray}
Here $\Delta x_l=x_{l+1}-x_l$. Dividing (\ref{ddishodographMCmSP}) by $\Delta X_l$ gives
\begin{eqnarray}
\frac{\Delta x_{l}}{\Delta X_{l}}=\rho_{l},
\end{eqnarray}
and summing this relation from $0$ to $l-1$ recovers (\ref{dishodographMCmSP}). The second equation of (\ref{m-semidisdispersionless}), 
which is the discrete version of the conservation law (\ref{m-conservationlaw3}), gives the compatible time evolution of the mesh interval,
\begin{eqnarray}
\frac{d}{dT}\Delta x_l
=-\sum_{1\leq j<k\leq n}c_{jk}
(u_{l+1}^{(j)}u_{l+1}^{(k)}-u_l^{(j)}u_l^{(k)}).
\label{compatible time evolution}
\end{eqnarray}
Since the hodograph transformation (\ref{dishodographMCmSP}) determines the mesh points only up to the additive term $x_0(T)$, 
while (\ref{compatible time evolution}) governs only the evolution of the mesh intervals, the evolution of $x_0(T)$ must be specified separately. 
This yields the consistency condition stated in Theorem~1.
\end{remark}


\begin{remark}
If the edge point $x_{0}$ is fixed, then $x_{0}$ in the hodograph transformation becomes time-independent. This is consistent only when the boundary flux vanishes. 
In particular, it is not applicable to problems with nonzero boundary values or to periodic computations in which the field values are nonzero near the computational boundary.
In such cases, the edge point must evolve in time.
\end{remark}


\begin{theorem}
Assume that the edge point is allowed to move and satisfies the consistency condition 
with the hodograph transformation, namely $dx_{0}/dT=-\sum_{1\leq j<k\leq n}c_{jk}u_{0}^{(j)}u_{0}^{(k)}$. 
Then the semi-discrete system (\ref{semidisMCmSP}) can be rewritten as the following self-adaptive moving mesh scheme for the MCmSP equation:
\begin{eqnarray}
\left\{
\begin{array}{ll}
\displaystyle\frac{d}{dT}(u_{l+1}^{(i)}-u_{l}^{(i)})=(\delta_{l}-a)(u_{l+1}^{(i)}+u_{l}^{(i)}), \\
\\
\displaystyle\frac{d x_{l}}{d T}=-\displaystyle\sum_{1\leq j < k \leq n}c_{jk}u_{l}^{(j)}u_{l}^{(k)}.
\end{array}
\right.
\label{selfMCmSP}
\end{eqnarray}
Here, $u_{l}^{(i)}$ is the value at $x_{l}$.
\end{theorem}

\begin{proof}
We first recall the corresponding continuous calculation, which motivates the consistency condition for the moving edge point.
From the derivative law (\ref{m-derivativelaw}), we have the evolution equation for $x$
\begin{eqnarray}
\frac{\partial x}{\partial T}=-\displaystyle\sum_{1\leq j < k \leq n}c_{jk}u^{(j)}u^{(k)}.
\label{6.27}
\end{eqnarray}
Equation (\ref{6.27}) means that the edge point $x_0(T)=x(0,T)$ is fixed only when the boundary flux $\sum_{1\leq j < k \leq n}c_{jk}u^{(j)}(0,T)u^{(k)}(0,T)
$ vanishes. When the boundary flux is nonzero, $x_{0}$ must evolve with $T$, and (\ref{999}) can be rewritten as
\begin{eqnarray}
x=x_{0}(T)+\int_{0}^{X} \rho(\bar{X},T)d\bar{X}.\label{semidis-hod-SP}
\end{eqnarray}
The following calculation derives the same edge-point condition from the integral representation. 
Differentiating the integral form of the hodograph transformation (\ref{semidis-hod-SP}) with respect to $T$, we obtain
\begin{align}
\frac{\partial x}{\partial T}&=\frac{\partial x_{0}(T)}{\partial T}+\frac{\partial}{\partial T}\int_{0}^{X}\rho(\bar{X},T)d\bar{X}\nonumber\\
&=\frac{\partial x_{0}(T)}{\partial T}-\int_{0}^{X}\frac{\partial}{\partial \bar{X}}\left(\sum_{1\leq j < k\leq n}c_{jk}u^{(j)}(\bar{X},T)u^{(k)}(\bar{X},T)\right)d\bar{X}\nonumber\\
&=\frac{\partial x_{0}(T)}{\partial T}
 -\displaystyle{\sum_{1\leq j < k \leq n}}c_{jk}u^{(j)}(X,T)u^{(k)}(X,T)\nonumber\\
&\quad
 +\displaystyle{\sum_{1\leq j < k \leq n}}c_{jk}u^{(j)}(0,T)u^{(k)}(0,T).\nonumber\\
\label{6.28}
\end{align}
The calculation from the first to the second line uses the conservation law (\ref{m-conservationlaw3}).
Comparing (\ref{6.27}) with (\ref{6.28}), the $X$-dependent flux terms agree, and the remaining boundary term gives the evolution equation for the edge point $x_{0}$:
\begin{eqnarray}
\frac{\partial x_{0}(T)}{\partial T}=-\displaystyle{\sum_{1\leq j < k \leq n}}c_{jk}u^{(j)}(0,T)u^{(k)}(0,T).
\label{mSP_const}
\end{eqnarray}
We call (\ref{mSP_const}) the consistency condition with the hodograph transformation.

We now turn to the semi-discrete case, which gives the actual proof of the theorem.
Differentiating the summation form of the discrete hodograph transformation (\ref{dishodographMCmSP}) with respect to $T$ and 
using the second equation of (\ref{m-semidisdispersionless}), we obtain
\begin{align}
\frac{d x_{l}}{d T}&=\frac{d x_{0}(T)}{d T}+\frac{d}{d T}\sum_{m=0}^{l-1}2a\rho_{m}=\frac{d x_{0}(T)}{d T}+\sum_{m=0}^{l-1}2a\frac{d \rho_{m}}{d T}\nonumber\\
=&\frac{d x_{0}(T)}{d T}-\sum_{m=0}^{l-1}\sum_{1\leq j < k \leq n}c_{jk}(u_{m+1}^{(j)}u_{m+1}^{(k)}-u_{m}^{(j)}u_{m}^{(k)})\nonumber\\
=&\frac{d x_{0}(T)}{d T}+\sum_{1\leq j < k \leq n}c_{jk}(-u_{l}^{(j)}u_{l}^{(k)}+u_{0}^{(j)}u_{0}^{(k)}).
\label{43}
\end{align}
Equation (\ref{43}) corresponds to (\ref{6.28}) in the continuous case.
As in the continuous case, we obtain the consistency condition, i.e., the evolution equation for the edge point $x_{0}$ :
\begin{eqnarray}
\frac{dx_{0}(T)}{d T}=-\displaystyle{\sum_{1\leq j < k \leq n}}c_{jk}u_{0}^{(j)}u_{0}^{(k)}.
\label{mSP_disconst}
\end{eqnarray}
Substituting (\ref{mSP_disconst}) into (\ref{43}) yields
\begin{eqnarray}
\frac{d x_{l}}{d T}=-\sum_{1\leq j < k \leq n}c_{jk}u_{l}^{(j)}u_{l}^{(k)}.
\end{eqnarray}
\end{proof}

\begin{remark}
 The integral form of the hodograph transformation (\ref{999}) and the summation form of 
 the discrete hodograph transformation (\ref{dishodographMCmSP}) can be expressed by the $\tau$-function as
 \begin{eqnarray}
 x=X-(\log{f})_{T},
 \end{eqnarray}
 \begin{eqnarray}
 x_{l}=2al-(\log{f_{l}})_{T}.
  \end{eqnarray}
 See Appendix A for details.
\end{remark}

Dividing the first equation of (\ref{selfMCmSP}) by $\delta_{l}$, one arrives at
\begin{eqnarray}
\frac{(u_{l+1}^{(i)}-u_{l}^{(i)})_{T}}{\delta_{l}}=\left(1-\frac{a}{\delta_{l}}\right)(u_{l+1}^{(i)}+u_{l}^{(i)}).
\label{6.36}
\end{eqnarray}

In taking the continuous limit $a\rightarrow0$ and $\delta_l\rightarrow0$, the left-hand side of (\ref{6.36}) is 
\begin{eqnarray}
\frac{(u_{l+1}^{(i)}-u_l^{(i)})_T}{\delta_l}
\rightarrow \partial_x(\partial_T u^{(i)}).\nonumber
\end{eqnarray}
On the right-hand side, we use
$u_{l+1}^{(i)}+u_l^{(i)}=2u^{(i)}+O(\delta_l)$. Since $\delta_l=2a\rho_l$,
\begin{eqnarray}
\frac{a}{\delta_{l}}=\frac{1}{2\rho_{l}}\rightarrow \frac{1}{2}\left(1+\sum_{1 \leq j < k \leq n}c_{jk}u^{(j)}_{x}u^{(k)}_{x}\right).
\end{eqnarray}

Thus the limit of (\ref{6.36}), before returning to the Eulerian variables, is
\begin{eqnarray}
u^{(i)}_{Tx}=\left(2-\left(1+\sum_{1 \leq j < k \leq n}c_{jk}u^{(j)}_{x}u^{(k)}_{x}\right)\right)u^{(i)}.
\label{68}
\end{eqnarray}
Finally, the derivative with respect to $T$ at fixed lattice label is transformed back to the Eulerian variables as
\begin{eqnarray}
\frac{d}{dT}&=&\frac{dx_{l}}{dT}\frac{d}{d x_{l}}+\frac{d}{dt}\rightarrow-\sum_{1\leq j < k \leq n}c_{jk}u^{(j)}u^{(k)} \partial_{x}+\partial_{t}\quad  {\rm for}\quad a\rightarrow 0.
\label{69}
\end{eqnarray}
Substituting (\ref{69}) into (\ref{68}), equation (\ref{6.36}) converges to the MCmSP equation
\begin{eqnarray}
\partial_{x}\left(\partial_{t}-\sum_{1\leq j<k \leq n}c_{jk}u^{(j)}u^{(k)}\partial_{x}\right)u^{(i)}=u^{(i)}\left(2-\left(1+\sum_{1 \leq j < k \leq n}c_{jk}u^{(j)}_{x}u^{(k)}_{x}\right)\right).\nonumber
\end{eqnarray}

\end{section}


\begin{section}{Numerical simulations of the 2-mSP equation and the CmSP equation}
\label{sec_numexpMCmSP}

In this section, we construct self-adaptive moving mesh schemes of 
the 2-mSP equation and the CmSP equation.

The 2-mSP equation
\begin{eqnarray}
\left\{
\begin{array}{ll}
u_{xt}=u+\displaystyle\frac{1}{2}v(u^{2})_{xx}, \\
\\
v_{xt}=v+\displaystyle\frac{1}{2}u(v^{2})_{xx},
\end{array}
\right.
\label{2mSP}
\end{eqnarray}
is a special case of the MCmSP equation (\ref{MCmSP}) with $n=2, u^{(1)}=u, u^{(2)}=v$ and $c_{12}=1$.

Based on the results in the previous section, we have a self-adaptive moving mesh scheme for the 2-mSP equation
\begin{eqnarray}
\left\{
\begin{array}{ll}
\displaystyle\frac{d }{d T}(u_{l+1}-u_{l})=(\delta_{l}-a)(u_{l+1}+u_{l}), \\
\\
\displaystyle\frac{d}{d T}(v_{l+1}-v_{l})=(\delta_{l}-a)(v_{l+1}+v_{l}),\\
\\
\displaystyle\frac{d x_{l}}{d T}=-u_{l}v_{l}.
\end{array}
\right.
\label{semidis2mSP}
\end{eqnarray}
Here, $u_{l}$ and $v_{l}$ are values at $x_{l}$.

Equation (\ref{semidis2mSP}) admits the following N-soliton solution
\begin{eqnarray}
u_{l}=\frac{g_{l}^{(1)}}{f_{l}},\qquad v_{l}=\frac{g_{l}^{(2)}}{f_{l}}, \qquad x_{l}=2al-(\log{f_{l}})_{T},\qquad t=T,
\end{eqnarray}
with
\begin{eqnarray}
f_{l}={\rm Pf}(a_{1},\cdots,a_{2N},b_{1},\cdots,b_{2N})_{l}, \quad
g^{(i)}_{l}={\rm Pf}(d_{0},B_{i},a_{1},\cdots,a_{2N},b_{1},\cdots,b_{2N})_{l},
\end{eqnarray}
where $i=1,2$ and the elements of the Pfaffians are defined as (\ref{dmpf1})-(\ref{dmpf2}).


Here we show examples of numerical simulations of the 2-mSP equation  (\ref{2mSP}) 
with a self-adaptive moving mesh scheme (\ref{semidis2mSP}) in a periodic setting. The initial conditions are given by
\begin{align}
&u=\frac{g^{(1)}}{f},\quad v=\frac{g^{(2)}}{f},\quad x=X-(\log{f})_{T},\quad t=T,\nonumber\\
&f=-1-a_{1}a_{2}b_{12}e^{\eta_{1}+\eta_{2}},\qquad g^{(1)}=-a_{1}e^{\eta_{1}}, \qquad g^{(2)}=-a_{2}e^{\eta_{2}},\nonumber\\
&\eta_{i}=p_{i}X+\frac{1}{p_{i}}T+\xi_{i}^{\prime},\quad b_{ij}=\left(\frac{p_{i}p_{j}}{p_{i}+p_{j}}\right)^{2},\quad i,j=1,2,\label{incon_mSP_1}
\end{align}
and
\begin{align}
&u=\frac{g^{(1)}}{f},\quad v=\frac{g^{(2)}}{f},\quad x=X-(\log{f})_{T},\quad t=T, \nonumber\\
&f=1+a_{1}a_{3}b_{13}e^{\eta_{1}+\eta_{3}}
  +a_{2}a_{3}b_{23}e^{\eta_{2}+\eta_{3}}
  +a_{1}a_{4}b_{14}e^{\eta_{1}+\eta_{4}}\nonumber\\
&\quad+a_{2}a_{4}b_{24}e^{\eta_{2}+\eta_{4}}
  +a_{1}a_{2}a_{3}a_{4}(p_{1}-p_{2})^{2}(p_{3}-p_{4})^{2}
   \frac{b_{13}b_{23}b_{14}b_{24}}{p_{1}^{2}p_{2}^{2}p_{3}^{2}p_{4}^{2}}
   e^{\eta_{1}+\eta_{2}+\eta_{3}+\eta_{4}},\nonumber\\
&g^{(1)}=a_{1}e^{\eta_{1}}+a_{2}e^{\eta_{2}}
 +\frac{a_{1}a_{2}a_{3}(p_{1}-p_{2})^{2}p_{3}^{4}}
        {(p_{1}+p_{3})^{2}(p_{2}+p_{3})^{2}}
  e^{\eta_{1}+\eta_{2}+\eta_{3}}\nonumber\\
&\quad+\frac{a_{1}a_{2}a_{4}(p_{1}-p_{2})^{2}p_{4}^{4}}
        {(p_{1}+p_{4})^{2}(p_{2}+p_{4})^{2}}
  e^{\eta_{1}+\eta_{2}+\eta_{4}},\nonumber\\
&g^{(2)}=a_{3}e^{\eta_{3}}+a_{4}e^{\eta_{4}}
 +\frac{a_{2}a_{3}a_{4}(p_{3}-p_{4})^{2}p_{2}^{4}}
        {(p_{2}+p_{3})^{2}(p_{2}+p_{4})^{2}}
  e^{\eta_{2}+\eta_{3}+\eta_{4}}\nonumber\\
&\quad+\frac{a_{1}a_{3}a_{4}(p_{3}-p_{4})^{2}p_{1}^{4}}
        {(p_{1}+p_{3})^{2}(p_{1}+p_{4})^{2}}
  e^{\eta_{1}+\eta_{3}+\eta_{4}},\nonumber\\
&\eta_{i}=p_{i}X+\frac{1}{p_{i}}T+\xi_{i}^{\prime},\quad b_{ij}=\left(\frac{p_{i}p_{j}}{p_{i}+p_{j}}\right)^{2},\quad i,j=1,2,3,4,\label{incon_mSP_2}
\end{align}
which are exact one- and two-soliton solutions of the 2-mSP equation obtained in {\it Example \ref{ex1}}.

As a time marching method, we use the improved Euler method.
The number of mesh intervals is $L=8000$, the computational-domain width is $D=80$, and the time step is $\Delta t=0.0001$. 
The uniform lattice spacing in the $X$ variable is chosen as $2a=D/L$, and the computational grid is $X_{l}=-\frac{D}{2}+2al$, for $l=0,\cdots, L$.

For the numerical simulations, the endpoint values of the field variables are identified periodically. 
For the 2-mSP scheme (\ref{semidis2mSP}), we impose
\begin{eqnarray}
u_{l+L}(T)=u_{l}(T),\quad v_{l+L}(T)=v_{l}(T).
\label{mSPperiodic}
\end{eqnarray}
The same periodic setting is used for the CmSP reduction.

In the numerical implementation, the mesh points are updated by the third
equation of (\ref{semidis2mSP}), while the field variables are reconstructed recursively 
from the first and second equations of (\ref{semidis2mSP}). 
During the reconstruction, the value at the right endpoint is first initialized 
with the value at the left endpoint from the previous time level. 
This initialization supplies the starting value required for the recursive reconstruction. 
The field variables are then reconstructed recursively by backward substitution, 
after which the periodic identification (\ref{mSPperiodic}) is imposed on the endpoint values. 
The number of mesh points is kept fixed throughout the computation. Consequently, 
when a soliton reaches one end of the computational interval, 
it re-enters from the opposite end through the periodic identification of the field variables.

\begin{figure}[h]
\centering
 \begin{tabular}{cc}
      \begin{minipage}[t]{0.4\hsize}
       \centering
        \includegraphics[keepaspectratio, scale=0.25]{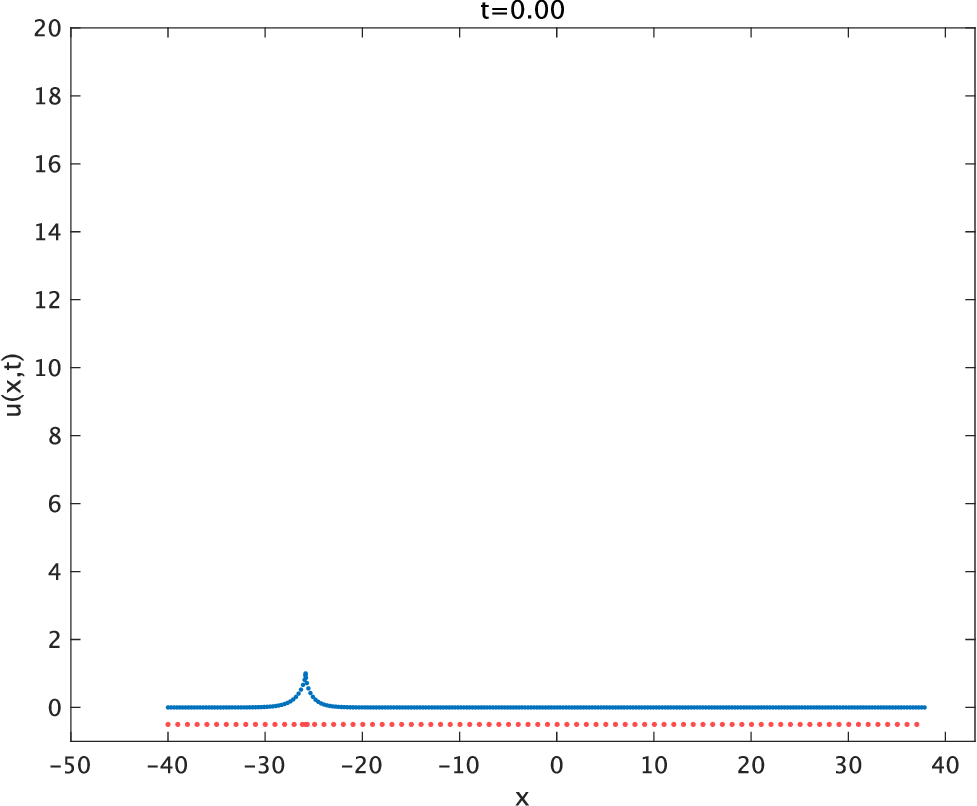}
      \end{minipage} &
      \begin{minipage}[t]{0.4\hsize}
        \centering
        \includegraphics[keepaspectratio, scale=0.25]{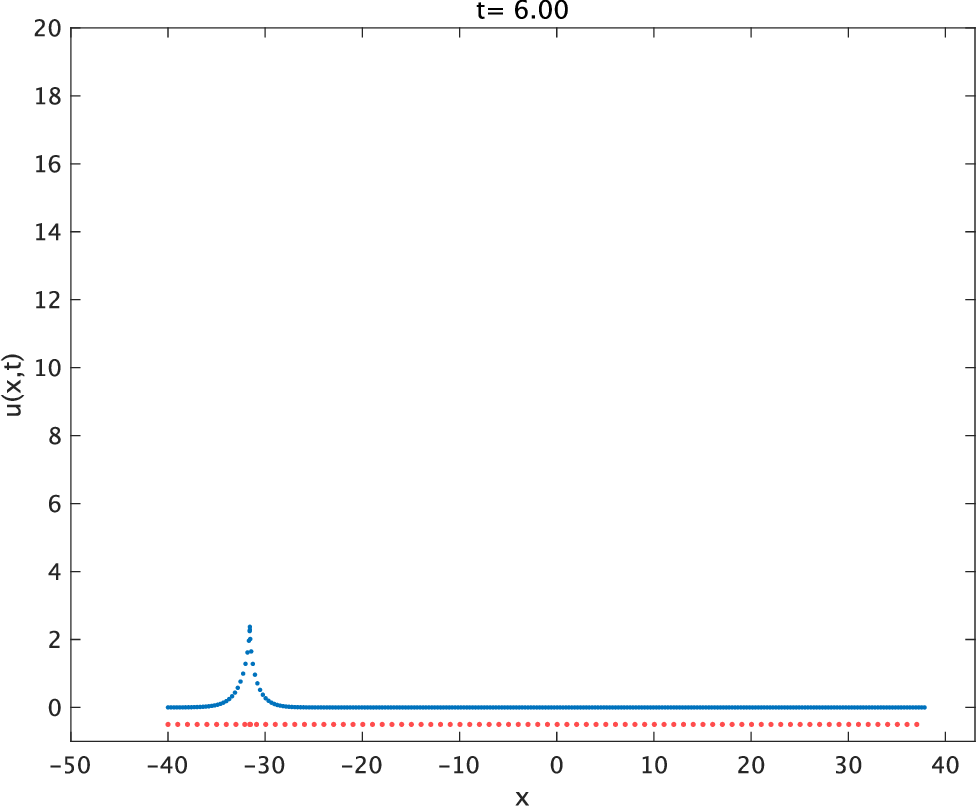}
      \end{minipage}\\

      \begin{minipage}[t]{0.4\hsize}
        \centering
        \includegraphics[keepaspectratio, scale=0.25]{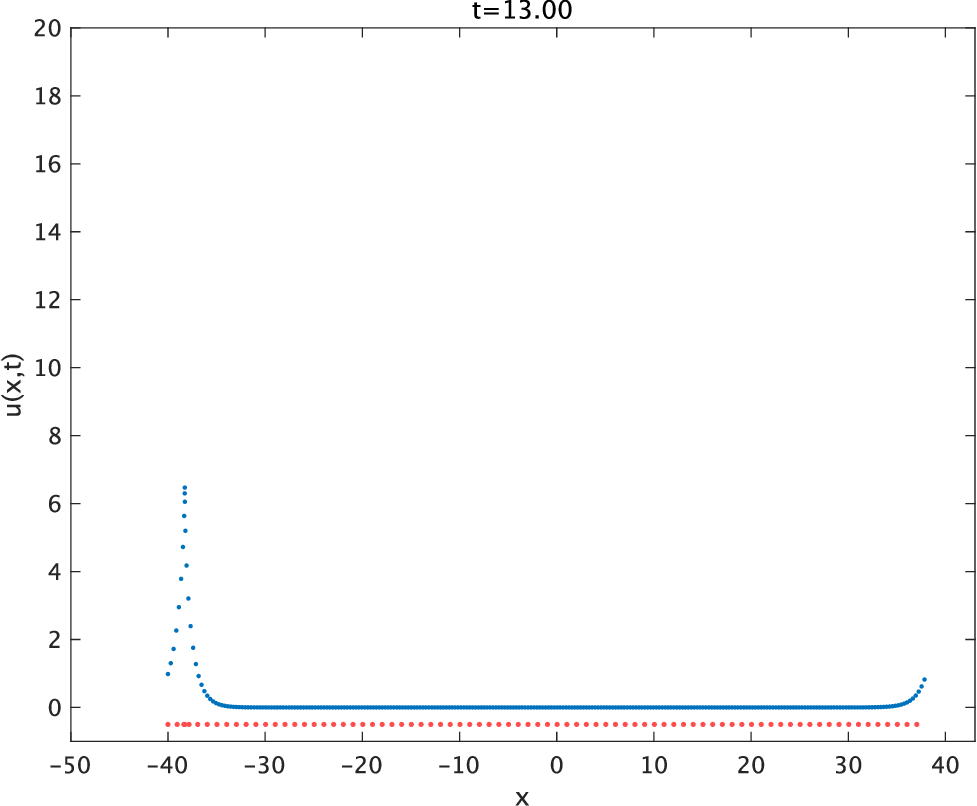}
      \end{minipage} &
      \begin{minipage}[t]{0.4\hsize}
        \centering
        \includegraphics[keepaspectratio, scale=0.25]{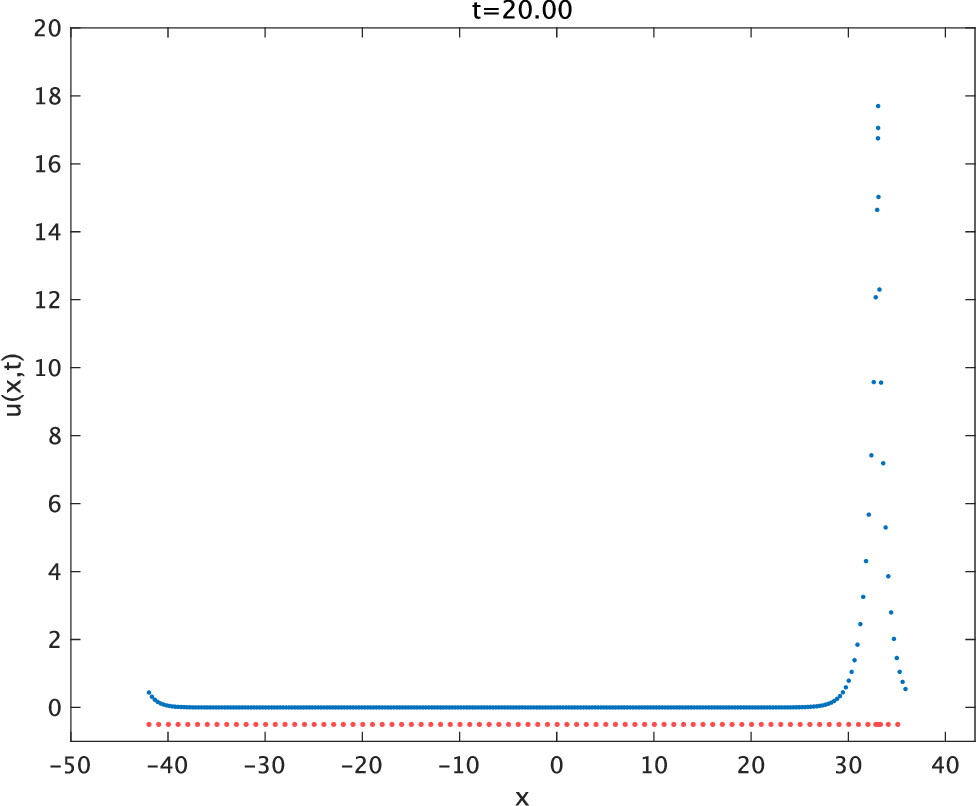}
      \end{minipage}
       \end{tabular}
     \caption{The numerical simulation of the {\it u}-profile of the one-soliton solution 
     for the 2-mSP equation. maxerr=1.18$\times 10^{-4}$}
              \label{2mSP_u_1sol}
  \end{figure}
\begin{figure}[h]
\centering
 \begin{tabular}{cc}
      \begin{minipage}[t]{0.4\hsize}
       \centering
        \includegraphics[keepaspectratio, scale=0.25]{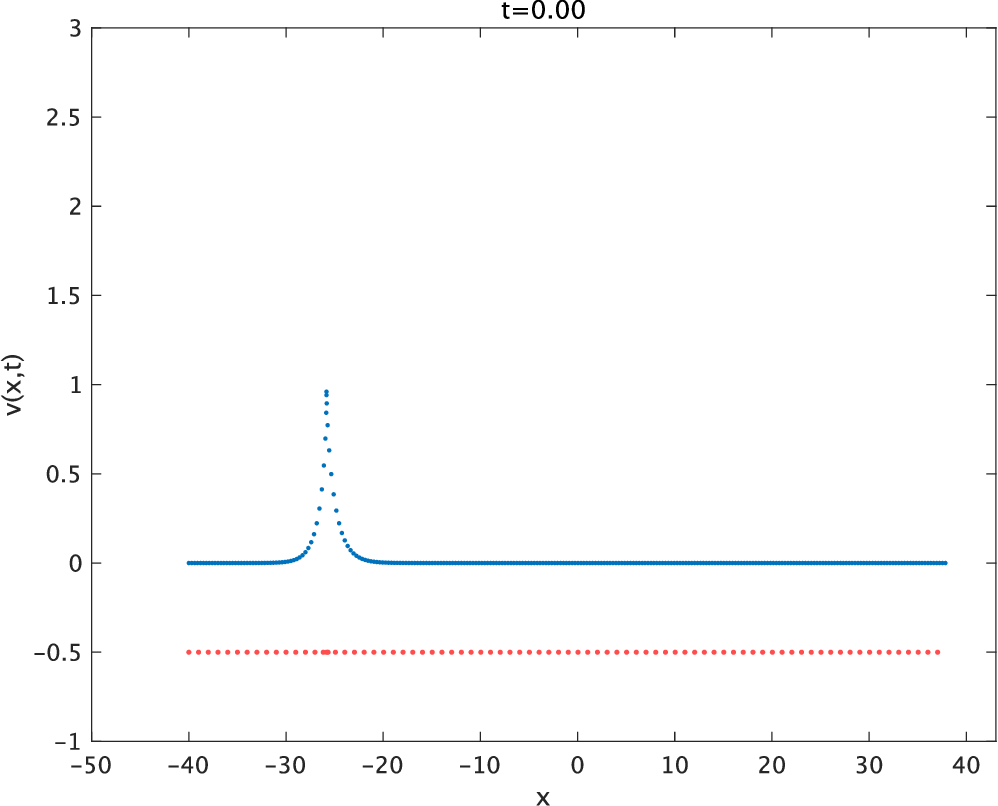}
      \end{minipage} &
      \begin{minipage}[t]{0.4\hsize}
        \centering
        \includegraphics[keepaspectratio, scale=0.25]{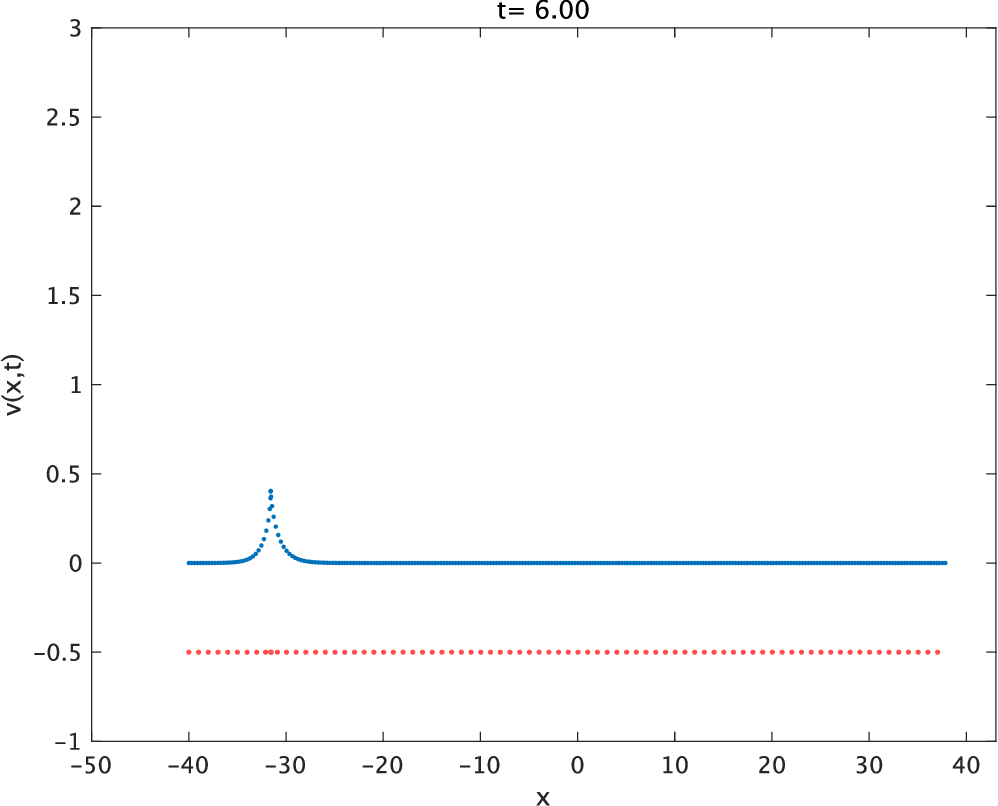}
      \end{minipage}\\

      \begin{minipage}[t]{0.4\hsize}
        \centering
        \includegraphics[keepaspectratio, scale=0.25]{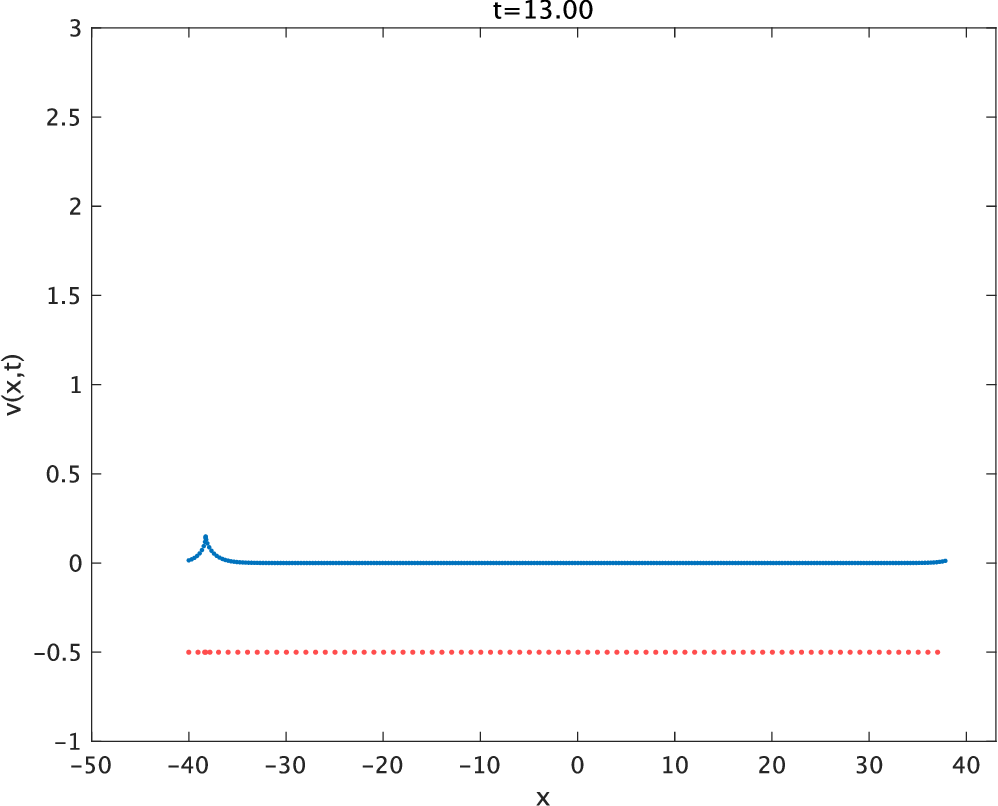}
      \end{minipage} &
      \begin{minipage}[t]{0.4\hsize}
        \centering
        \includegraphics[keepaspectratio, scale=0.25]{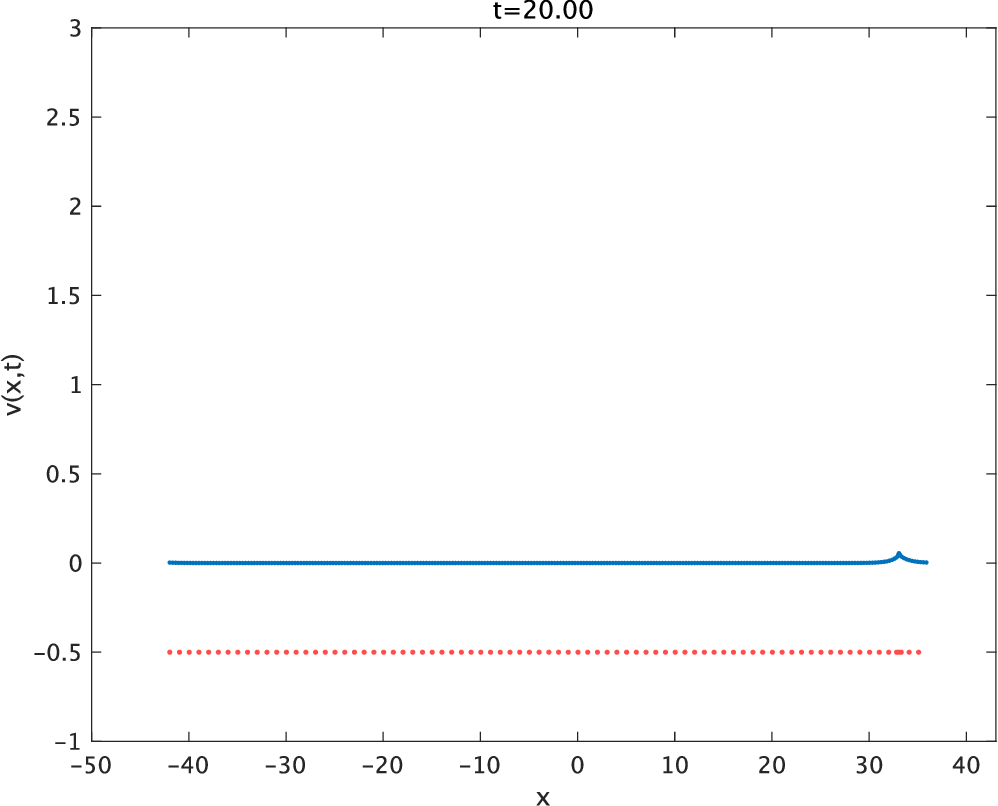}
      \end{minipage}
       \end{tabular}
     \caption{The numerical simulation of the {\it v}-profile of the one-soliton solution 
     for the 2-mSP equation. maxerr=1.23$\times 10^{-4}$}
              \label{2mSP_v_1sol}
  \end{figure}
\begin{figure}[h]
\centering
 \begin{tabular}{cc}
      \begin{minipage}[t]{0.4\hsize}
       \centering
        \includegraphics[keepaspectratio, scale=0.25]{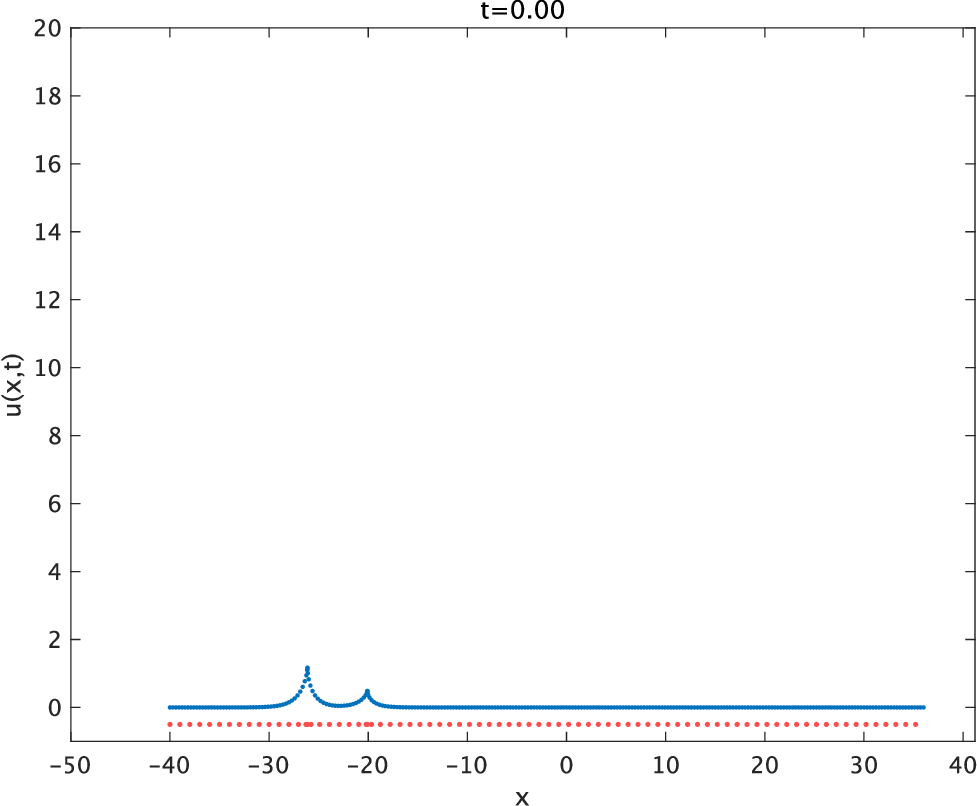}
      \end{minipage} &
      \begin{minipage}[t]{0.4\hsize}
        \centering
        \includegraphics[keepaspectratio, scale=0.25]{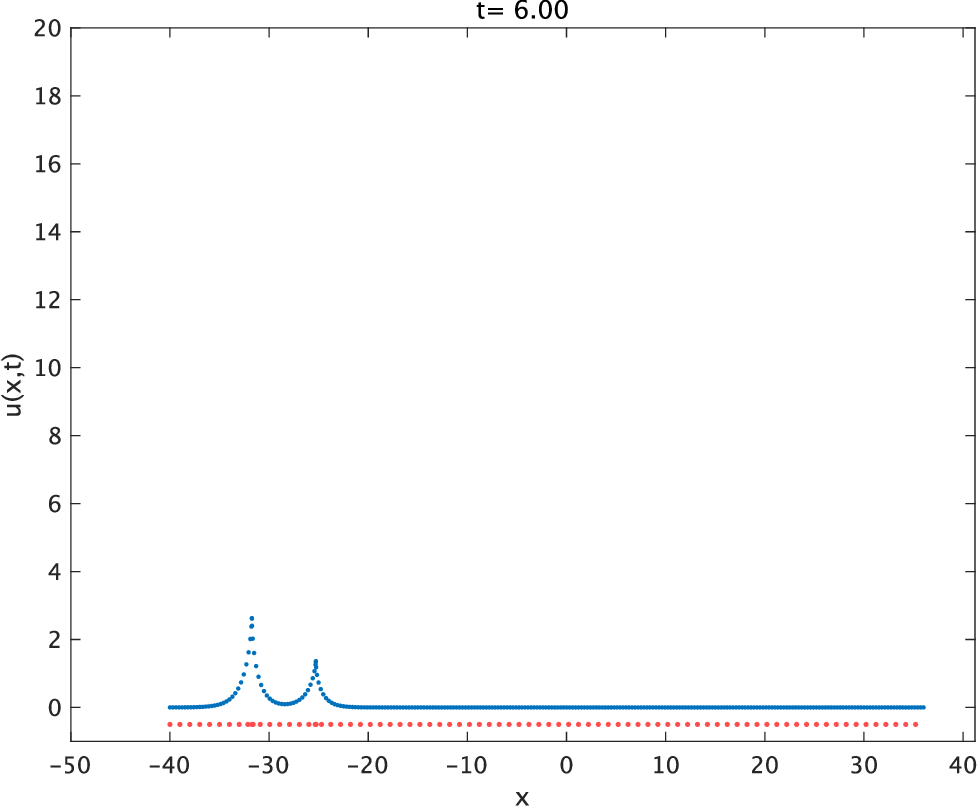}
      \end{minipage}\\

      \begin{minipage}[t]{0.4\hsize}
        \centering
        \includegraphics[keepaspectratio, scale=0.25]{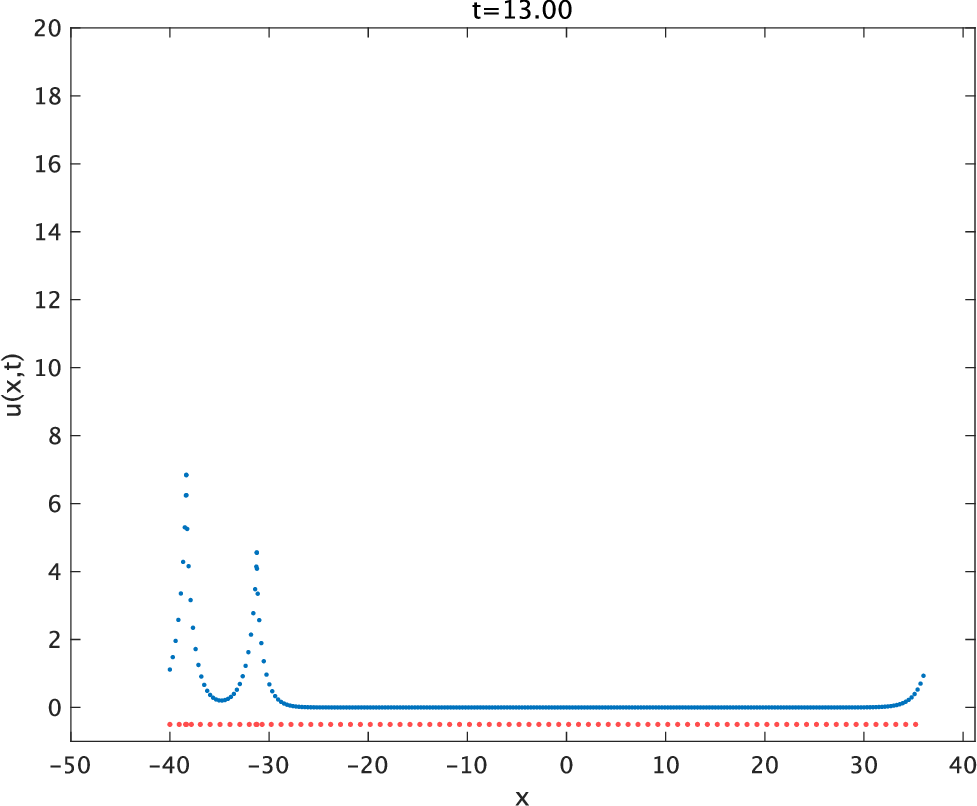}
      \end{minipage} &
      \begin{minipage}[t]{0.4\hsize}
        \centering
        \includegraphics[keepaspectratio, scale=0.25]{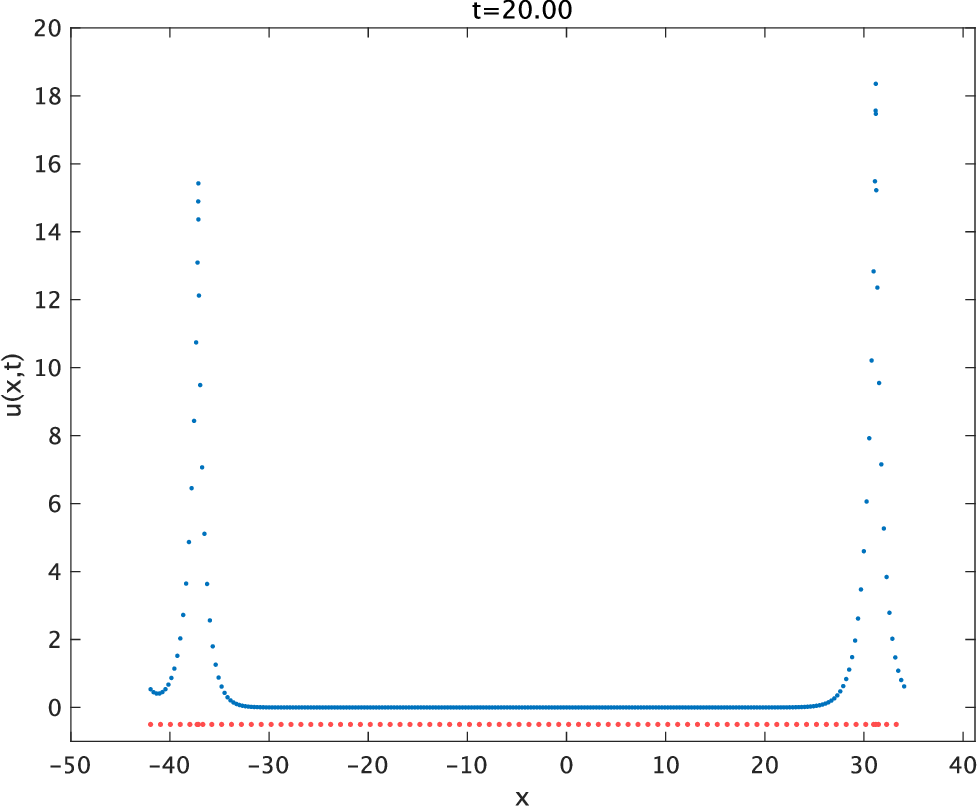}
      \end{minipage}
       \end{tabular}
     \caption{The numerical simulation of the {\it u}-profile of 
     the two-soliton solution for the 2-mSP equation. maxerr=1.19$\times 10^{-4}$}
              \label{2mSP_u}
  \end{figure}
\begin{figure}[h]
\centering
 \begin{tabular}{cc}
      \begin{minipage}[t]{0.4\hsize}
       \centering
        \includegraphics[keepaspectratio, scale=0.25]{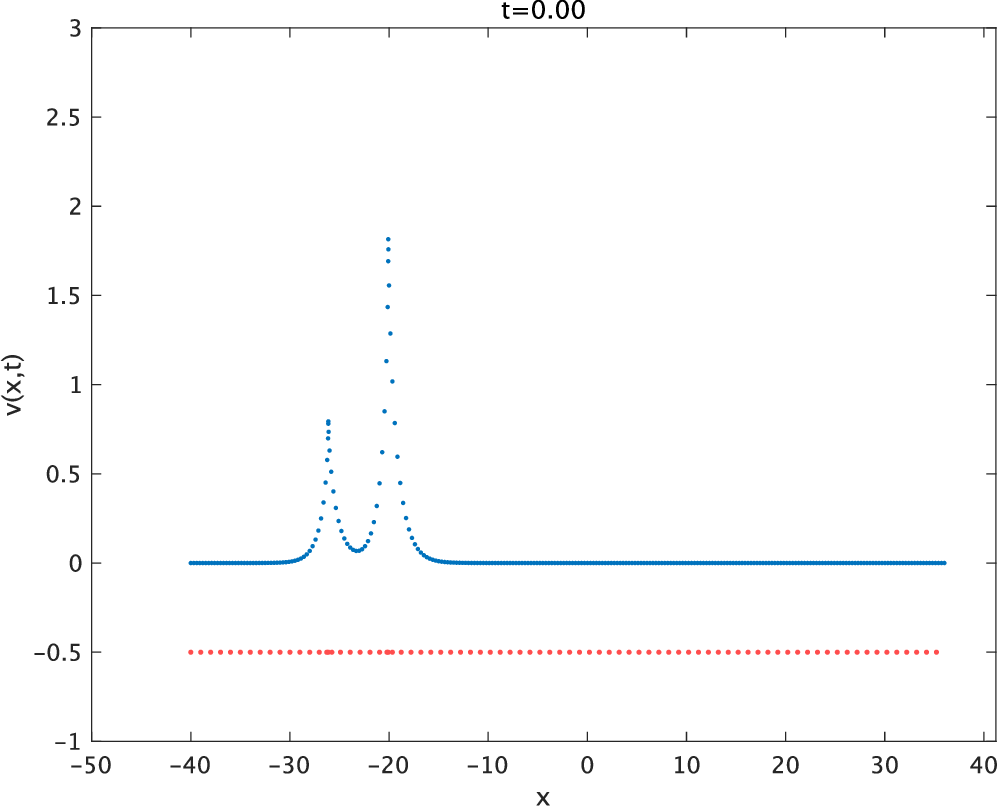}
      \end{minipage} &
      \begin{minipage}[t]{0.4\hsize}
        \centering
        \includegraphics[keepaspectratio, scale=0.25]{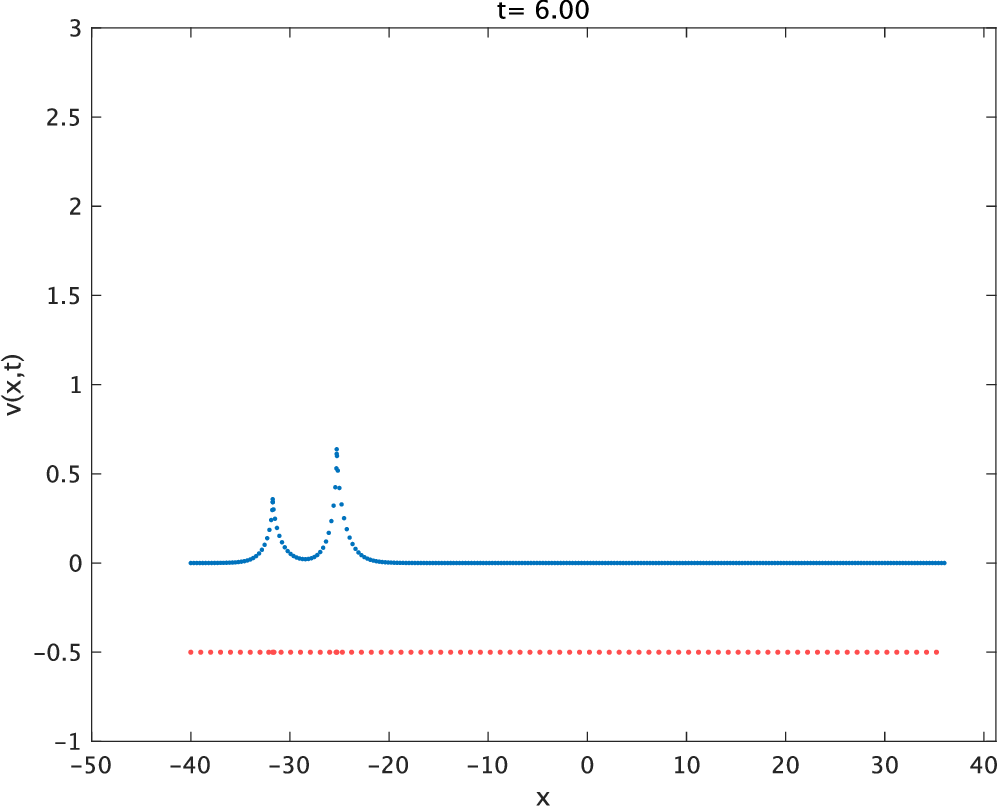}
      \end{minipage}\\

      \begin{minipage}[t]{0.4\hsize}
        \centering
        \includegraphics[keepaspectratio, scale=0.25]{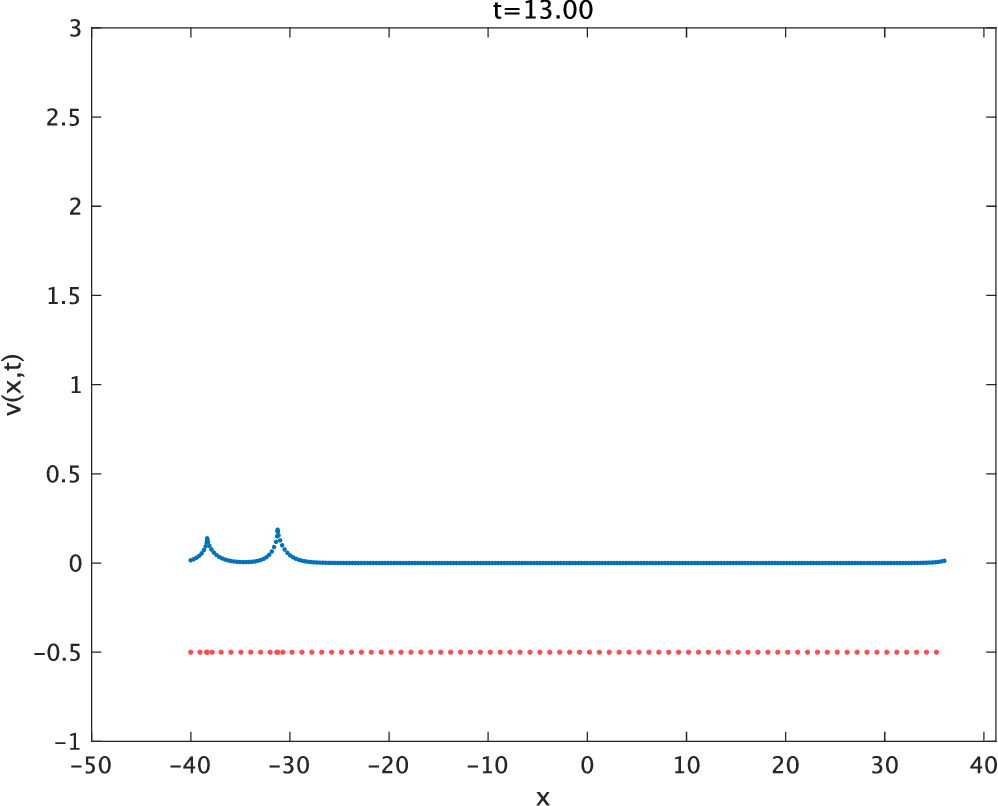}
      \end{minipage} &
      \begin{minipage}[t]{0.4\hsize}
        \centering
        \includegraphics[keepaspectratio, scale=0.25]{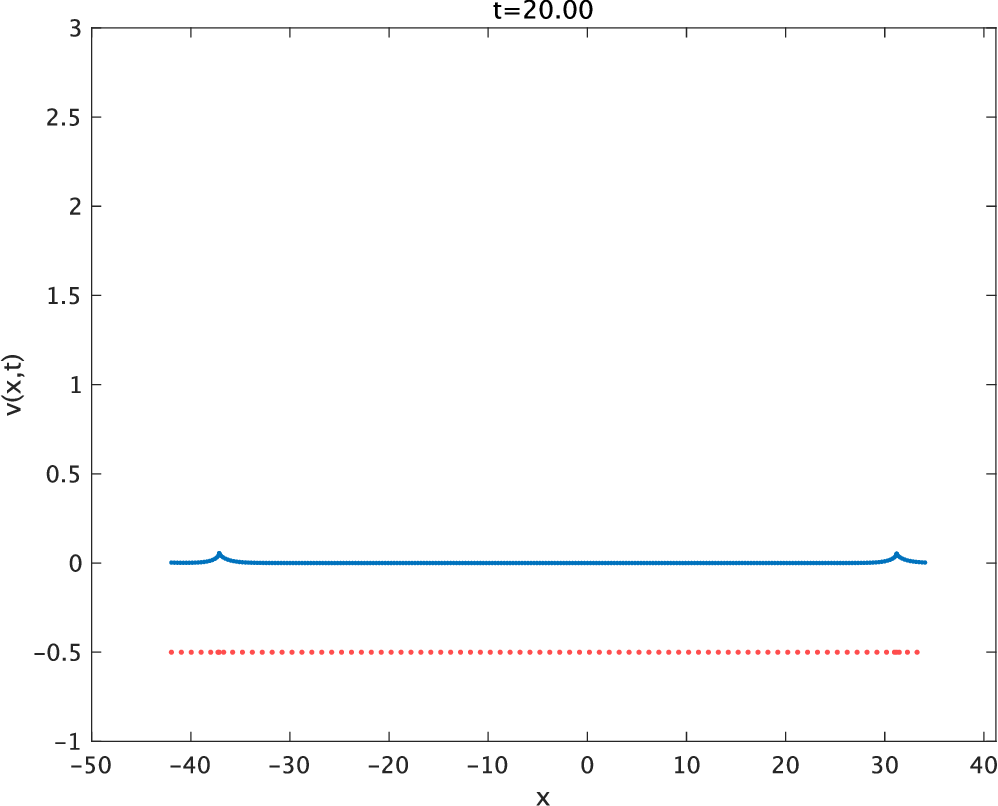}
      \end{minipage}
       \end{tabular}
     \caption{The numerical simulation of the {\it v}-profile of 
     the two-soliton solution for the 2-mSP equation. maxerr=1.16$\times 10^{-4}$}
              \label{2mSP_v}
  \end{figure}

To examine the error in the region where the solution has large amplitude, we
evaluate the relative error only at mesh points where the exact solution is close to its
maximum amplitude at each computed time level $t_m (m=1,\cdots, M)$, where $M$ is the total number of time steps. More precisely, let
\begin{eqnarray}
I_m=\left\{\,l\in\{0,1,\cdots L\}: \ |u_{e,l}^{\,m}|\ge0.9\max_{0\le r\le L}|u_{e,r}^{\,m}|\,\right\},\qquad m=1,\ldots,M,
\end{eqnarray}
where $u_{e,l}^{m}$ denotes the exact solution evaluated at the numerical mesh point $x_l$ and time $t_m$. 
Thus, $I_m$ is the set of mesh points at which the exact amplitude is at least 90\% of its maximum value at time $t_m$. 
We then define the relative error at each
time level by
\begin{eqnarray}
\mathrm{err}_m
=
\max_{l\in I_m}
\left|
\frac{u_{e,l}^{m}-u_{n,l}^{m}}
{u_{e,l}^{m}}
\right|,
\end{eqnarray}
where $u_{n,l}^{m}$ denotes the numerical value at the same mesh point. Finally, we define
\begin{eqnarray}
\mathrm{maxerr}
=
\max_{1\le m\le M}\mathrm{err}_m.
\end{eqnarray}
The same definition is used for $v$ with $u$ replaced by $v$.

Figure \ref{2mSP_u_1sol} and Figure \ref{2mSP_v_1sol} show numerical simulations of 
the {\it u}-profile and the {\it v}-profile of the one-soliton solution for $p_{1}=0.95,p_{2}=1.1, a_{1}=0.5, a_{2}=20$ and $\xi_{1}^{\prime}=\xi_{2}^{\prime}=25$.
Figure \ref{2mSP_u} and Figure \ref{2mSP_v} show numerical simulations of the {\it u}-profile and the {\it v}-profile 
of the two-soliton solution $p_{1}=0.95,p_{2}=1.0, p_{3}=1.1,p_{4}=1.2, a_{1}=0.5, a_{2}=1, a_{3}=20, a_{4}=40 $ and $ \xi_{1}^{\prime}=\xi_{2}^{\prime}=\xi_{3}^{\prime}=\xi_{4}^{\prime}=25$.

The blue dotted line represents the numerical solution and the red dotted line represents the mesh distribution. 
The solitons travel in a leftward direction with respect to the $x$-axis.

\begin{figure}[h]
\centering
 \begin{tabular}{cc}
      \begin{minipage}[t]{0.4\hsize}
       \centering
        \includegraphics[keepaspectratio, scale=0.25]{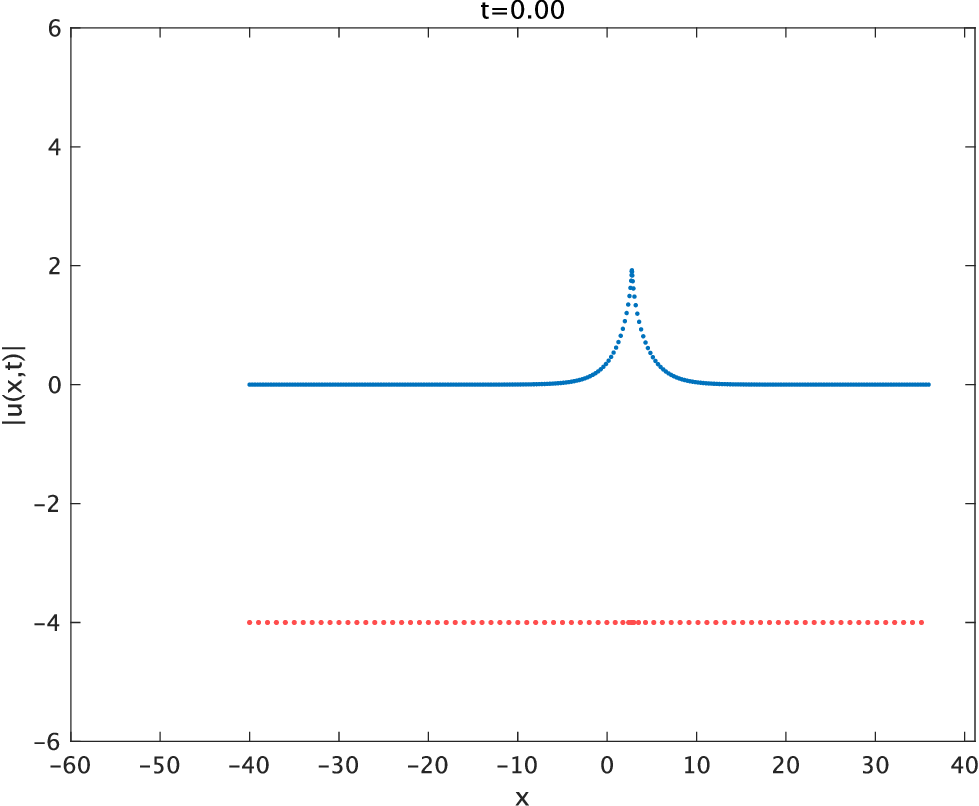}
      \end{minipage} &
      \begin{minipage}[t]{0.4\hsize}
        \centering
        \includegraphics[keepaspectratio, scale=0.25]{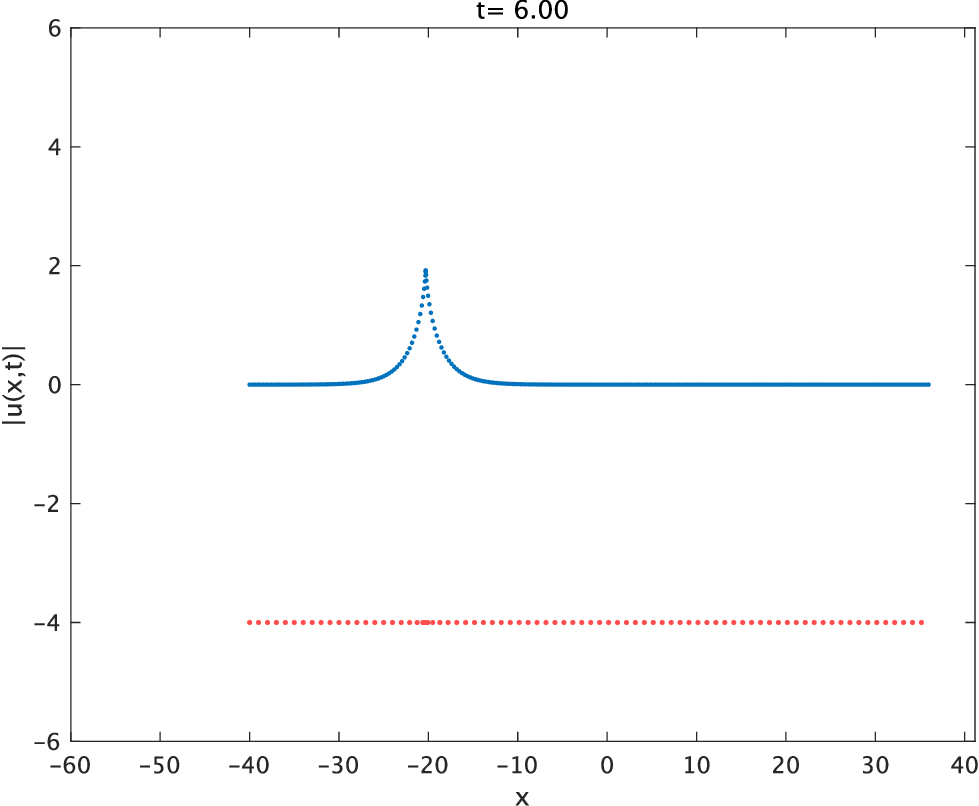}
      \end{minipage}\\

      \begin{minipage}[t]{0.4\hsize}
        \centering
        \includegraphics[keepaspectratio, scale=0.25]{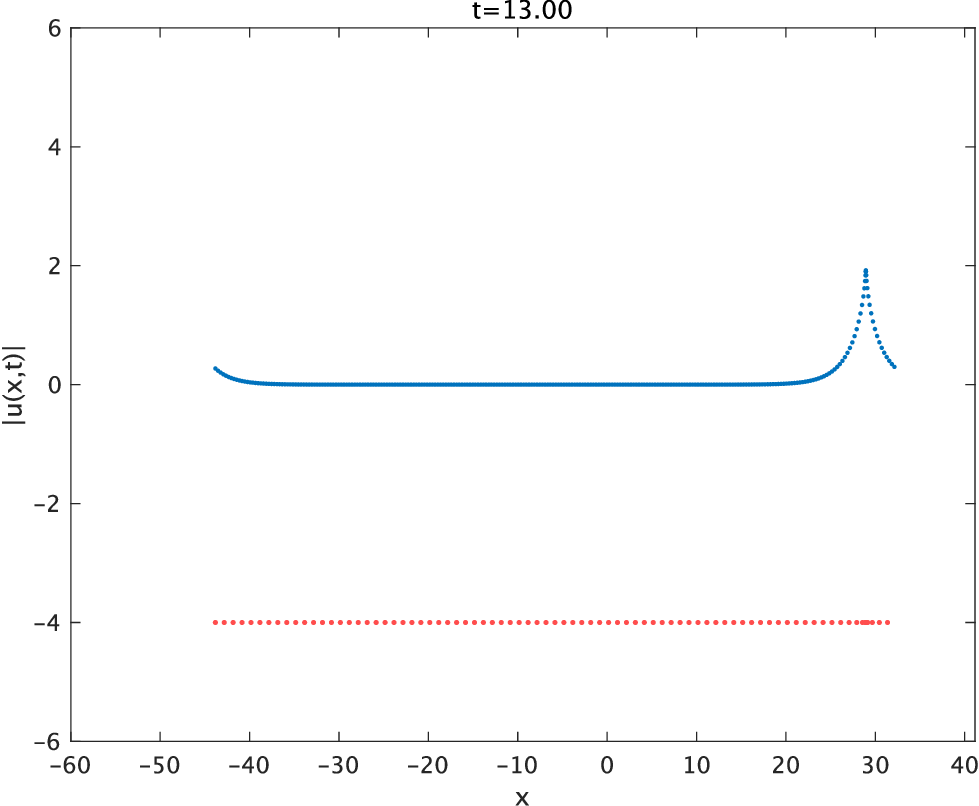}
      \end{minipage} &
      \begin{minipage}[t]{0.4\hsize}
        \centering
        \includegraphics[keepaspectratio, scale=0.25]{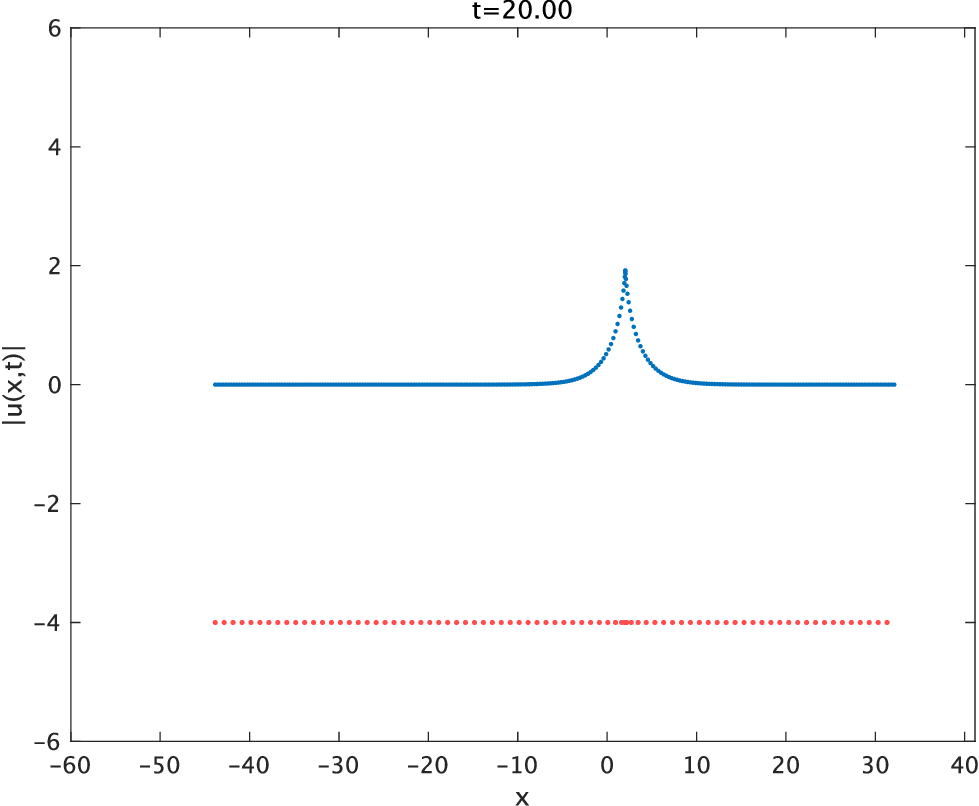}
      \end{minipage}
       \end{tabular}
     \caption{The numerical simulation of the $|u|$-profile of the one-soliton solution for the CmSP equation. maxerr=5.29$\times 10^{-5}$}
              \label{compmSP_abs_1}
  \end{figure}
\begin{figure}[h]
\centering
 \begin{tabular}{cc}
      \begin{minipage}[t]{0.4\hsize}
       \centering
        \includegraphics[keepaspectratio, scale=0.25]{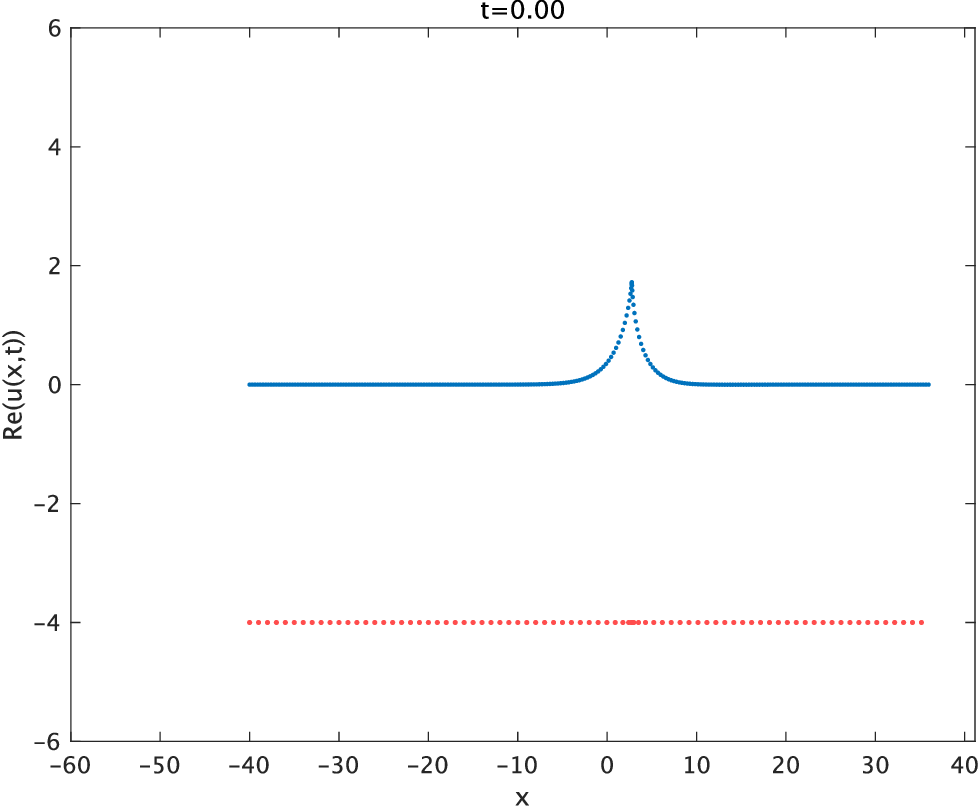}
      \end{minipage} &
      \begin{minipage}[t]{0.4\hsize}
        \centering
        \includegraphics[keepaspectratio, scale=0.25]{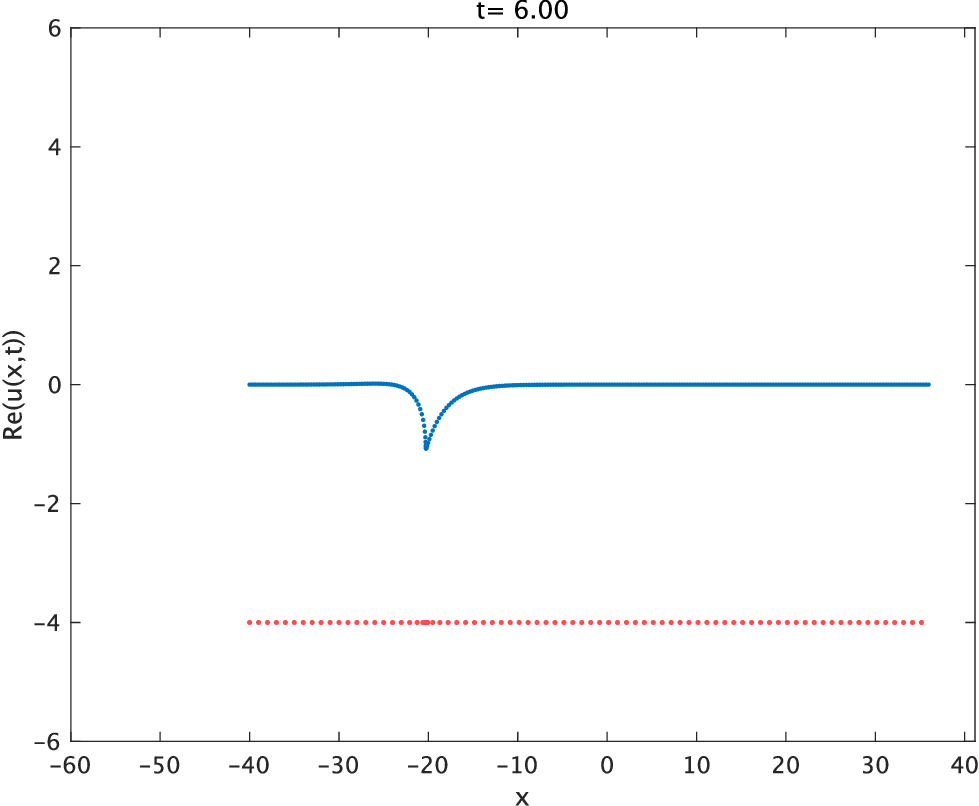}
      \end{minipage}\\

      \begin{minipage}[t]{0.4\hsize}
        \centering
        \includegraphics[keepaspectratio, scale=0.25]{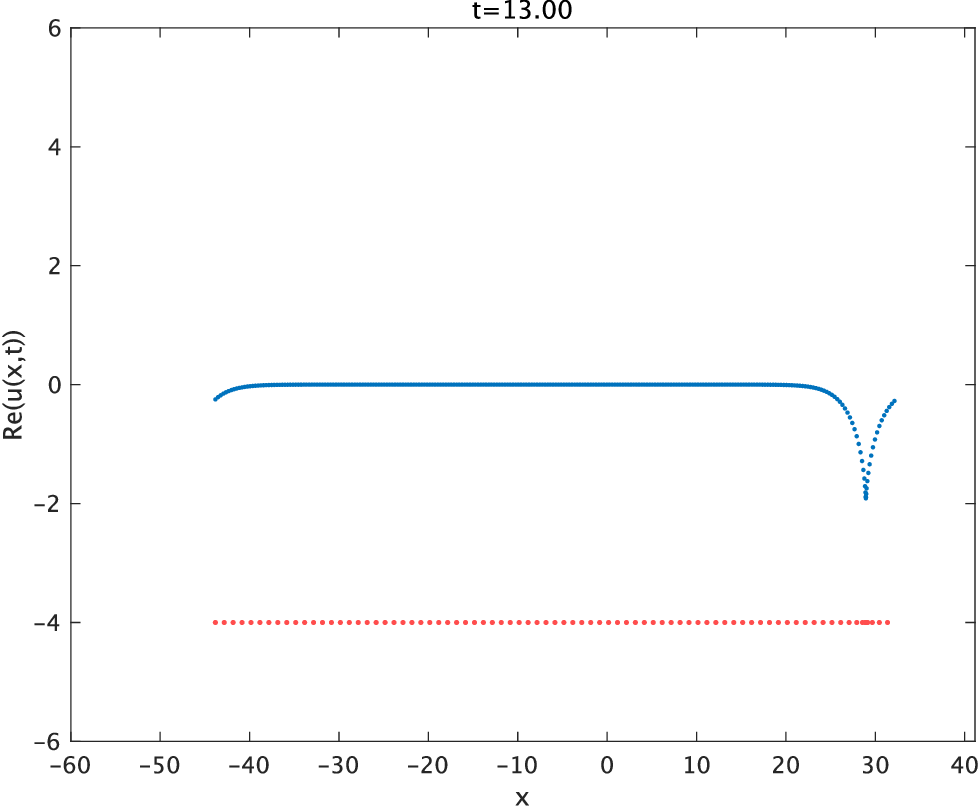}
      \end{minipage} &
      \begin{minipage}[t]{0.4\hsize}
        \centering
        \includegraphics[keepaspectratio, scale=0.25]{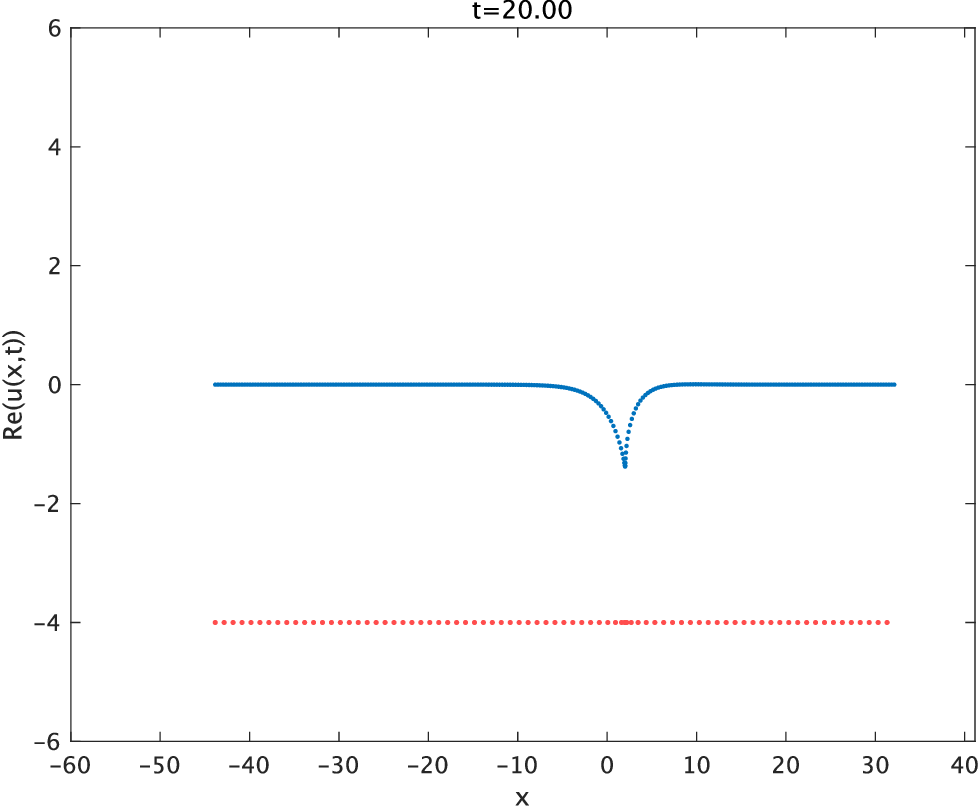}
      \end{minipage}
       \end{tabular}
     \caption{The numerical simulation of the Re($u$)-profile of the one-soliton solution 
     for the CmSP equation. maxerr=5.76$\times 10^{-5}$}
              \label{compmSP_real_1}
  \end{figure}

\begin{figure}[h]
\centering
 \begin{tabular}{cc}
      \begin{minipage}[t]{0.4\hsize}
       \centering
        \includegraphics[keepaspectratio, scale=0.25]{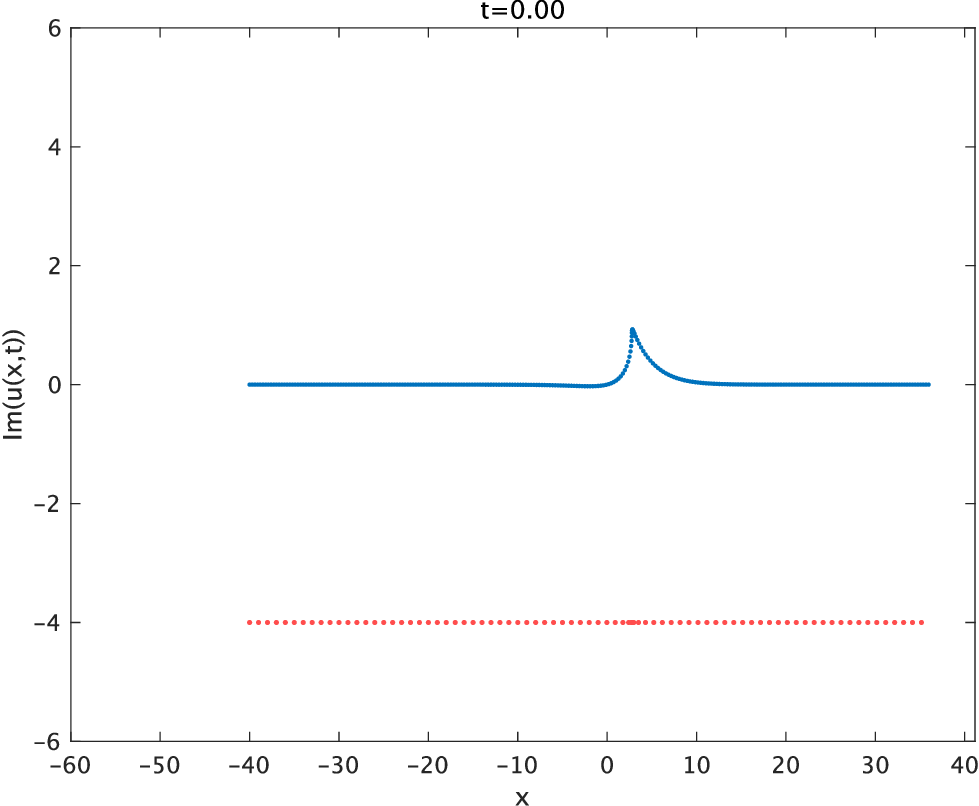}
      \end{minipage} &
      \begin{minipage}[t]{0.4\hsize}
        \centering
        \includegraphics[keepaspectratio, scale=0.25]{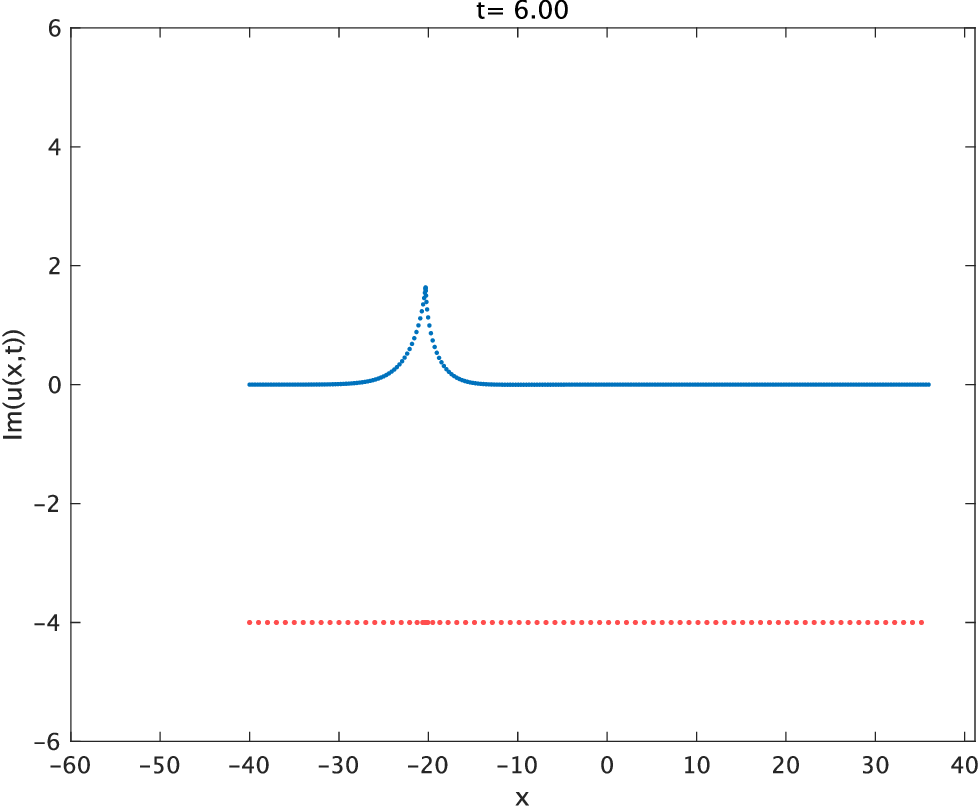}
      \end{minipage}\\

      \begin{minipage}[t]{0.4\hsize}
        \centering
        \includegraphics[keepaspectratio, scale=0.25]{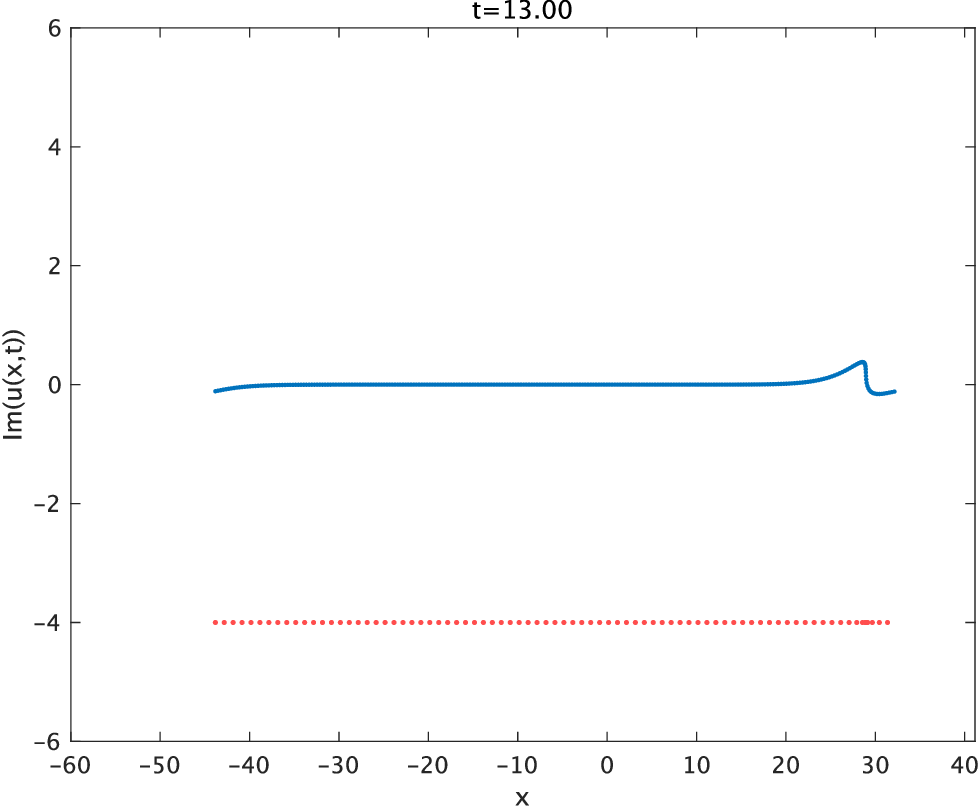}
      \end{minipage} &
      \begin{minipage}[t]{0.4\hsize}
        \centering
        \includegraphics[keepaspectratio, scale=0.25]{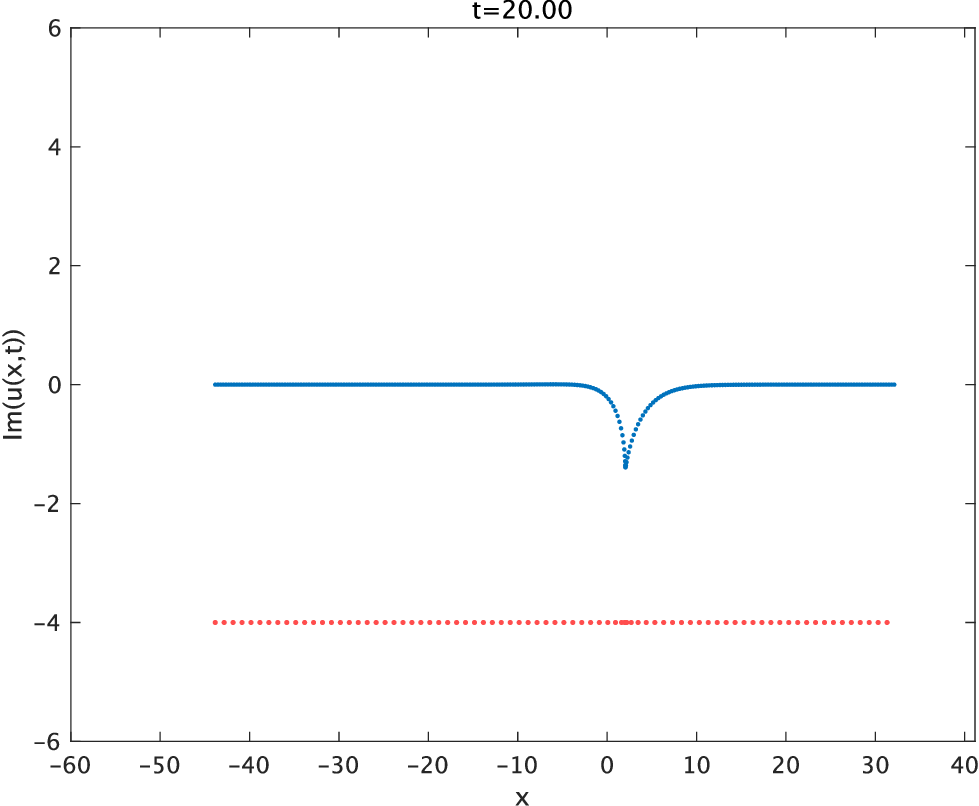}
      \end{minipage}
       \end{tabular}
     \caption{The numerical simulation of the Im($u$)-profile of the one-soliton solution 
     for the CmSP equation. maxerr=6.70$\times 10^{-5}$}
              \label{compmSP_imag_1}
  \end{figure}

\begin{figure}[h]
\centering
 \begin{tabular}{cc}
      \begin{minipage}[t]{0.4\hsize}
       \centering
        \includegraphics[keepaspectratio, scale=0.25]{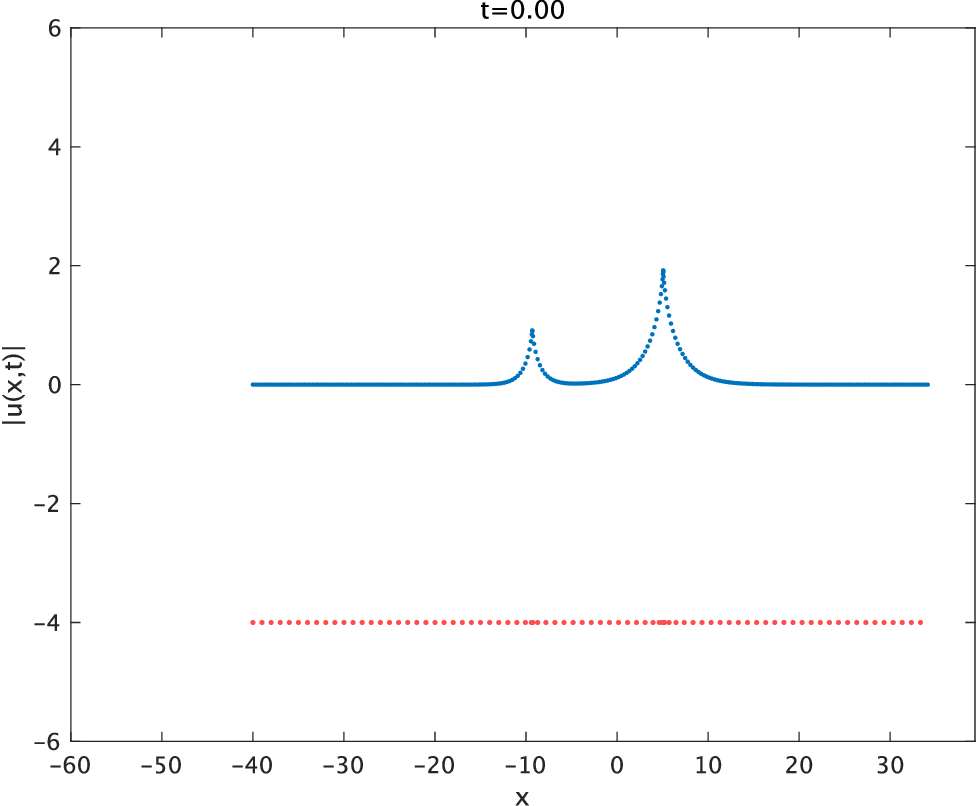}
      \end{minipage} &
      \begin{minipage}[t]{0.4\hsize}
        \centering
        \includegraphics[keepaspectratio, scale=0.25]{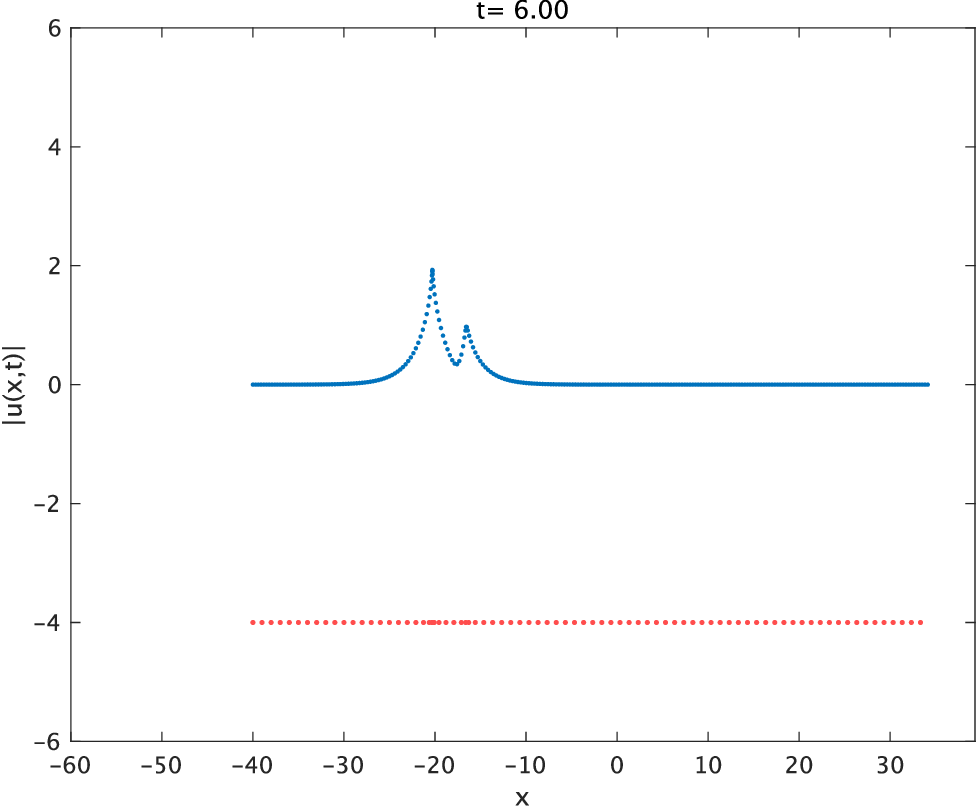}
      \end{minipage}\\

      \begin{minipage}[t]{0.4\hsize}
        \centering
        \includegraphics[keepaspectratio, scale=0.25]{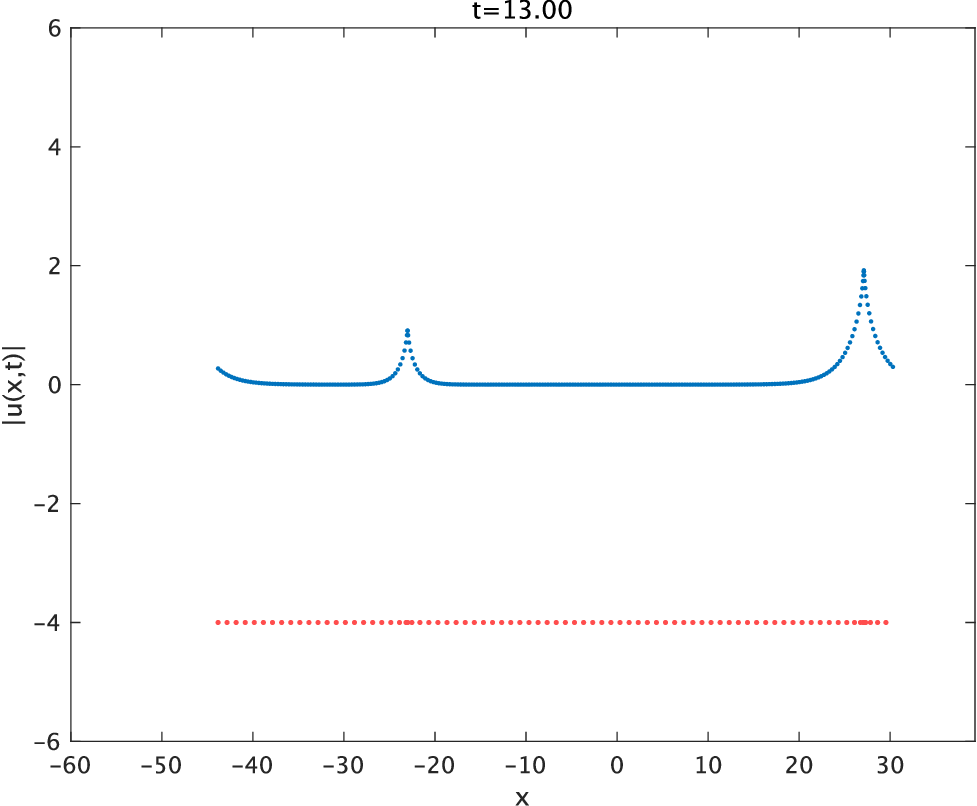}
      \end{minipage} &
      \begin{minipage}[t]{0.4\hsize}
        \centering
        \includegraphics[keepaspectratio, scale=0.25]{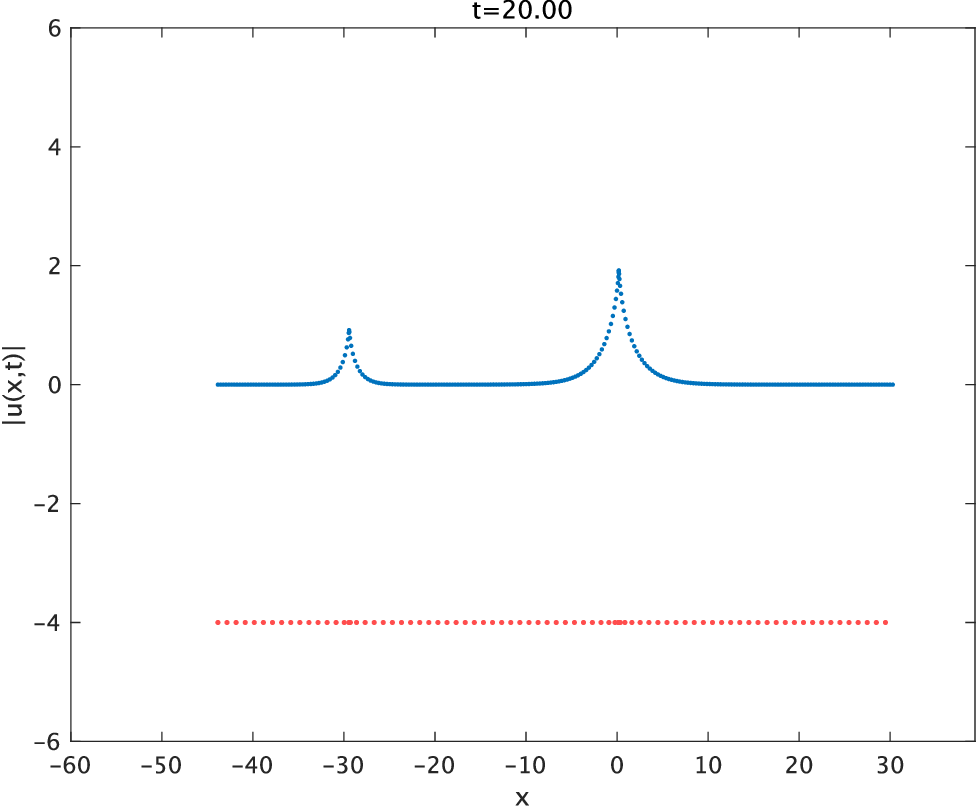}
      \end{minipage}
       \end{tabular}
     \caption{The numerical simulation of the $|u|$-profile of the two-soliton solution for the CmSP equation.
     maxerr=5.80$\times 10^{-5}$}
              \label{compmSP_abs}
  \end{figure}

\begin{figure}[h]
\centering
 \begin{tabular}{cc}
      \begin{minipage}[t]{0.4\hsize}
       \centering
        \includegraphics[keepaspectratio, scale=0.25]{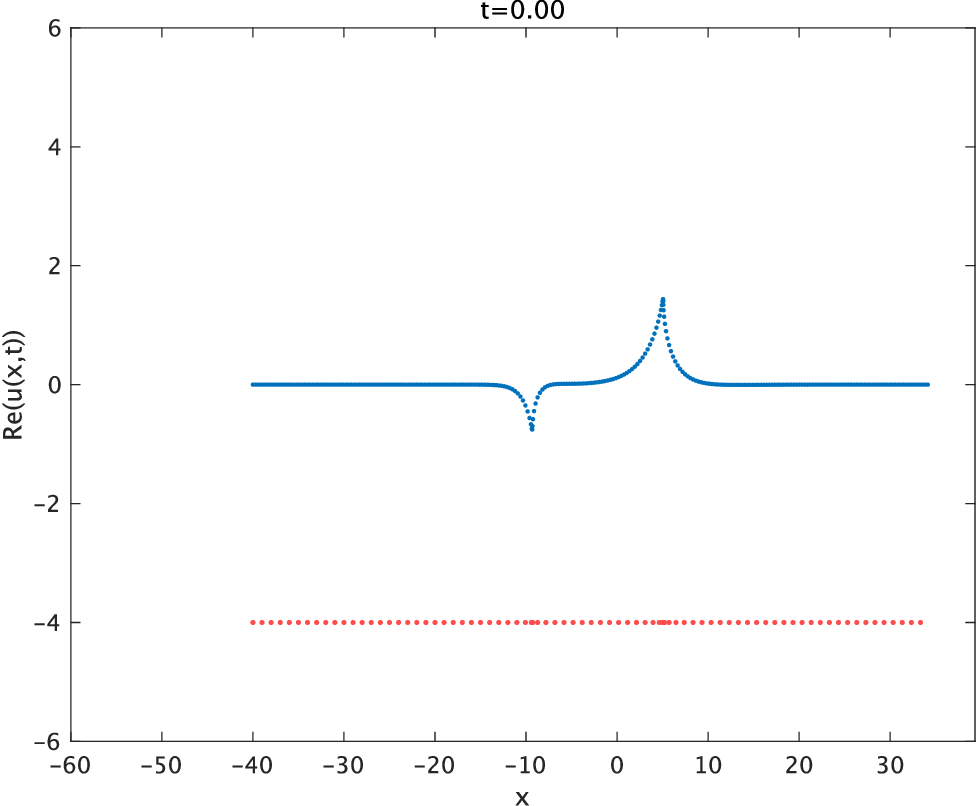}
      \end{minipage} &
      \begin{minipage}[t]{0.4\hsize}
        \centering
        \includegraphics[keepaspectratio, scale=0.25]{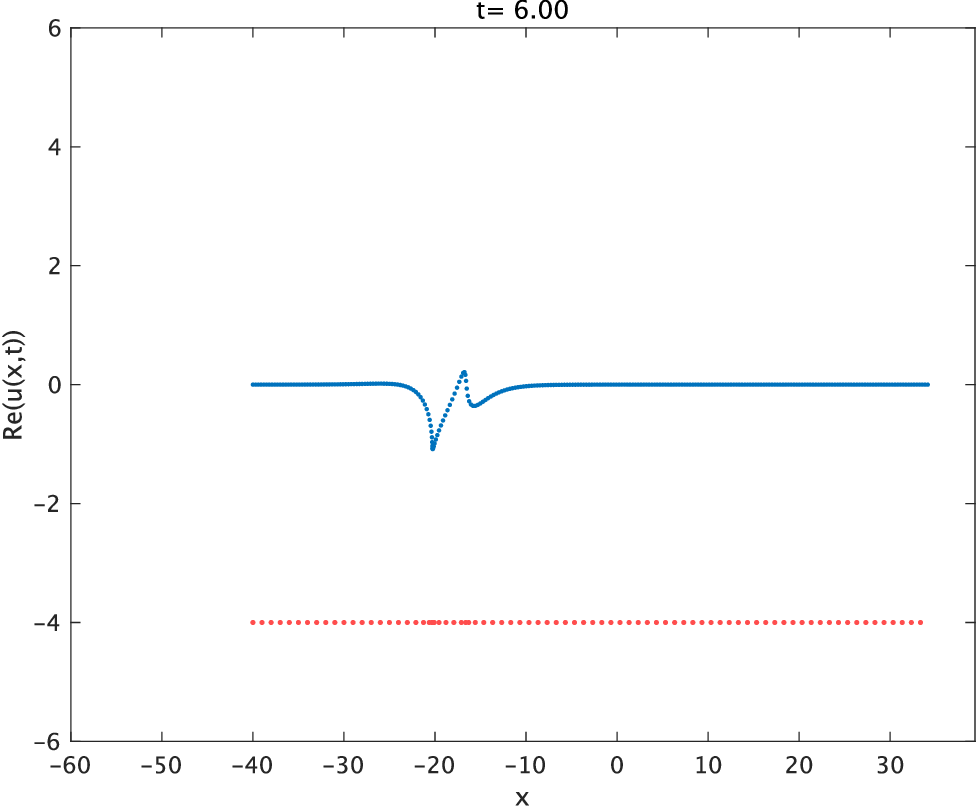}
      \end{minipage}\\

      \begin{minipage}[t]{0.4\hsize}
        \centering
        \includegraphics[keepaspectratio, scale=0.25]{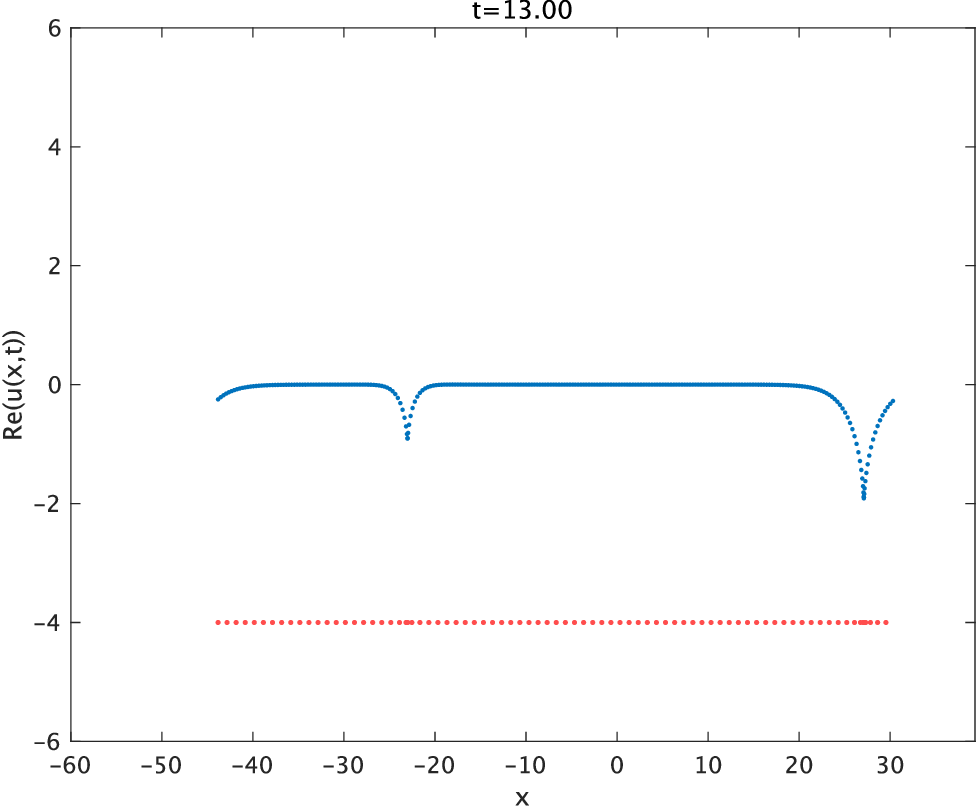}
      \end{minipage} &
      \begin{minipage}[t]{0.4\hsize}
        \centering
        \includegraphics[keepaspectratio, scale=0.25]{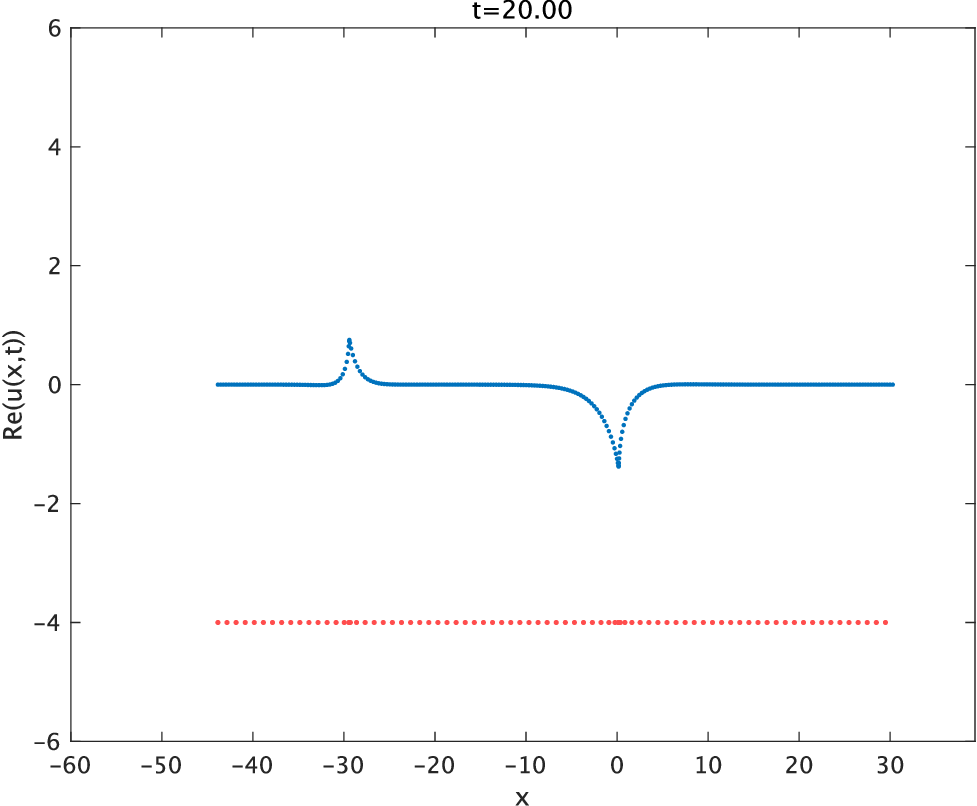}
      \end{minipage}
       \end{tabular}
     \caption{The numerical simulation of the Re($u$)-profile of the two-soliton solution for the CmSP equation. maxerr=6.49$\times 10^{-5}$}
              \label{compmSP_real}
  \end{figure}

\begin{figure}[h]
\centering
 \begin{tabular}{cc}
      \begin{minipage}[t]{0.4\hsize}
       \centering
        \includegraphics[keepaspectratio, scale=0.25]{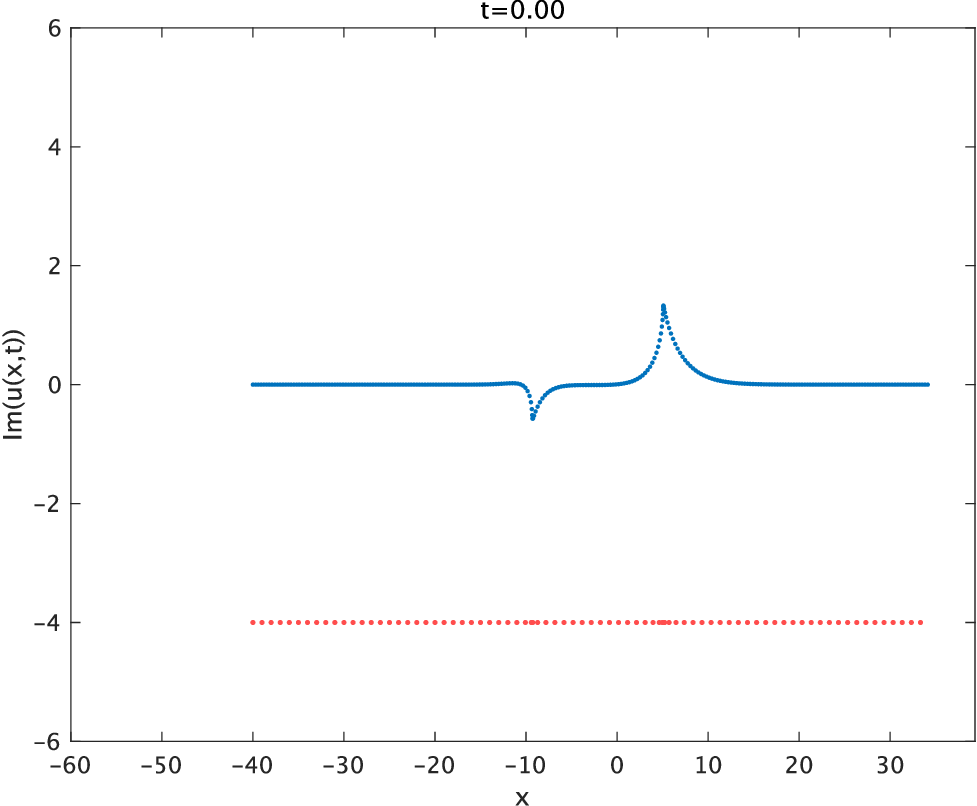}
      \end{minipage} &
      \begin{minipage}[t]{0.4\hsize}
        \centering
        \includegraphics[keepaspectratio, scale=0.25]{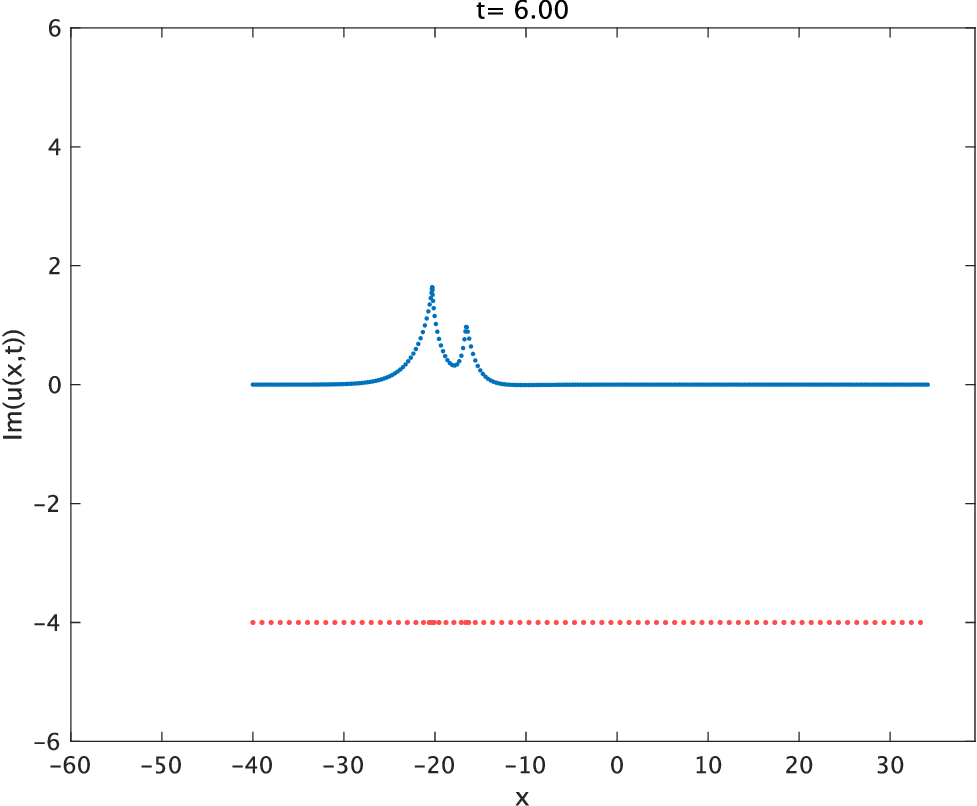}
      \end{minipage}\\

      \begin{minipage}[t]{0.4\hsize}
        \centering
        \includegraphics[keepaspectratio, scale=0.25]{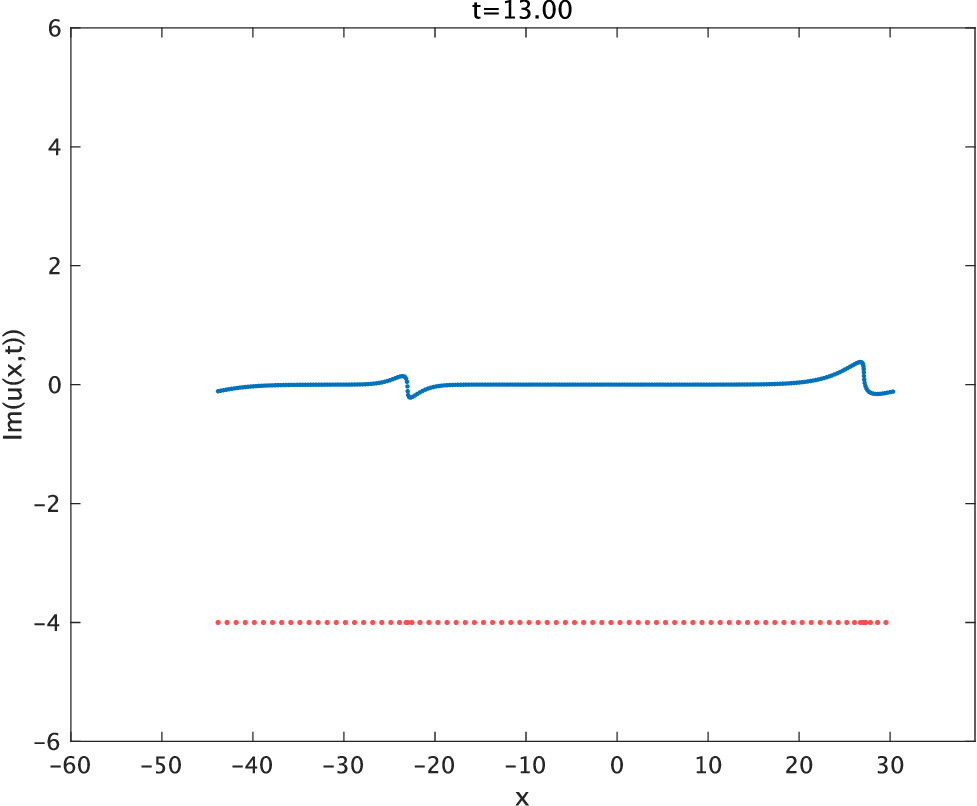}
      \end{minipage} &
      \begin{minipage}[t]{0.4\hsize}
        \centering
        \includegraphics[keepaspectratio, scale=0.25]{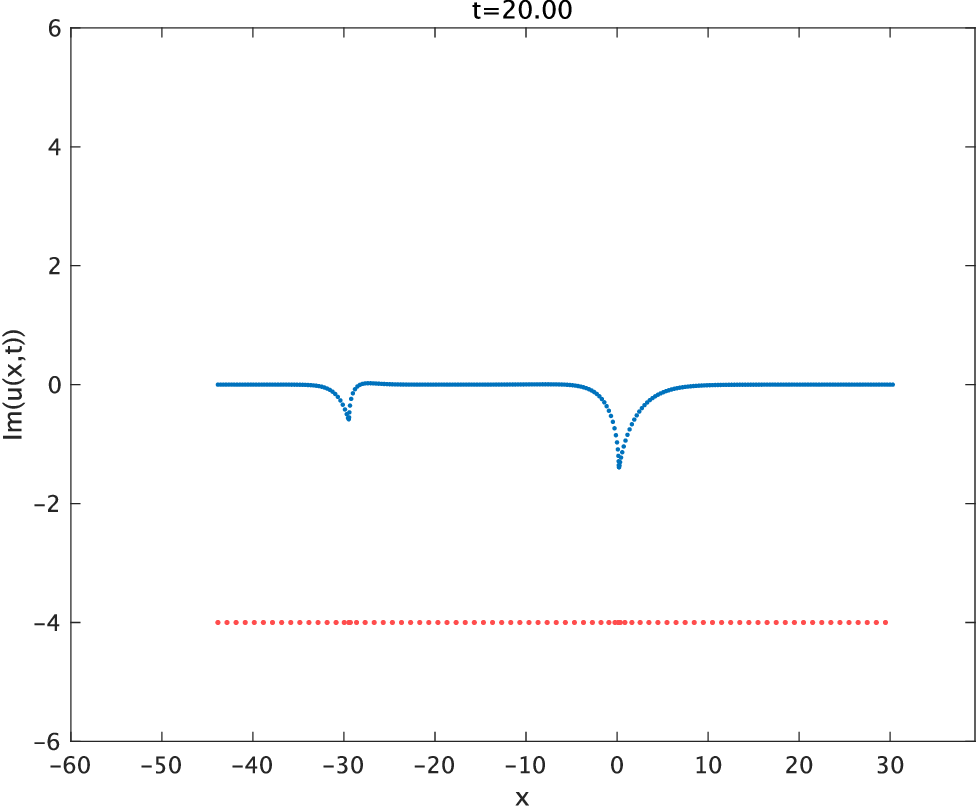}
      \end{minipage}
       \end{tabular}
     \caption{The numerical simulation of the Im($u$)-profile of the two-soliton solution for the CmSP equation. maxerr=8.04$\times 10^{-5}$}
              \label{compmSP_imag}
  \end{figure}

Next, we construct a self-adaptive moving mesh scheme for the CmSP equation. For the 2-mSP equation (\ref{2mSP}), let $u$ and $v$ be complex functions and  $v = u^{*}$, we then have the CmSP equation
\begin{eqnarray}
u_{xt}=u+\frac{1}{2}u^{*}(u^{2})_{xx},\label{CmSP}
\end{eqnarray}
where $u^{*}$ denotes  the complex conjugate of $u$.
From (\ref{semidis2mSP}), we have a self-adaptive moving mesh scheme for the CmSP equation
\begin{eqnarray}
\left\{
\begin{array}{ll}
\displaystyle\frac{d }{d T}(u_{l+1}-u_{l})=(\delta_{l}-a)(u_{l+1}+u_{l}), \\
\\
\displaystyle\frac{d x_{l}}{d T}=-|u_{l}|^{2}.
\end{array}
\right.
\label{semidiscomplexmSP}
\end{eqnarray}

Here we show examples of numerical simulations of the CmSP equation  (\ref{CmSP}) with a self-adaptive moving mesh scheme (\ref{semidiscomplexmSP}) in a periodic setting. The initial conditions are given by (\ref{incon_mSP_1}) and (\ref{incon_mSP_2}).
Figure \ref{compmSP_abs_1}, \ref{compmSP_real_1} and \ref{compmSP_imag_1} show numerical simulations of the $|u|$, Re$(u)$, and Im$(u)$- profiles of the one-soliton solution for $p_{1}=0.5+0.1\mathrm{i}, p_{2}=0.5-0.1\mathrm{i}, a_{1}= a_{2}={\rm exp}(-6)$ and $\xi_{1}^{\prime}=\xi_{2}^{\prime}=5$.
Figure \ref{compmSP_abs}, \ref{compmSP_real} and \ref{compmSP_imag} show numerical simulations of the $|u|$, Re$(u)$, and Im$(u)$- profiles of the two-soliton solution for $p_{1}=0.5+0.1\mathrm{i}, p_{2}=1.0+0.3\mathrm{i}, p_{3}=0.5-0.1\mathrm{i}, p_{4}=1.0-0.3\mathrm{i}, a_{1}=a_{3}={\rm exp}(-6), a_{2}=a_{4}={\rm exp}(4)$ and $\xi_{1}^{\prime}=\xi_{2}^{\prime}=\xi_{3}^{\prime}=\xi_{4}^{\prime}=5$.
The solitons travel in a leftward direction with respect to the $x$-axis.

When the solitons reach the left edge, they emerge from the right edge. The maxerr for the two-soliton solution with soliton interaction is almost the same as that for the one-soliton solution. This fact suggests that the self-adaptive moving mesh schemes (\ref{semidis2mSP}) and (\ref{semidiscomplexmSP}) are effective and accurate.

\end{section}


\begin{section}{Bilinear equations and N-soliton solution for the MCSP equation}
\label{sec_MCSP}
In this section, we describe the bilinear form and the hodograph transformation of the MCSP equation (\ref{MCSP}) and construct its N-soliton solution following the previous study by Feng et al. \cite{semidis-MCSP}.
First, rewriting (\ref{MCSP}) into the conservation law form
\begin{eqnarray}
\left(\frac{1}{\rho}\right)_{t}-\left(\frac{F}{2\rho}\right)_{x}=0,\label{conservationlaw}
\end{eqnarray}
\begin{eqnarray}
\rho=\left(1+\sum_{1\leq j<k\leq n}c_{jk}u^{(j)}_{x}u^{(k)}_{x}\right)^{-\frac{1}{2}},\qquad F=\sum_{1\leq j<k\leq n}c_{jk}u^{(j)}u^{(k)},\label{conservationlaw2}
\end{eqnarray}
the hodograph transformation
\begin{eqnarray}
dX=\frac{1}{\rho} dx+\frac{F}{2\rho}dt,\qquad dT=dt,\label{hodograph}
\end{eqnarray}
is determined. The conservation law (\ref{conservationlaw}) is the closedness condition for this one-form. From the  hodograph transformation (\ref{hodograph}), we obtain the derivative law
\begin{eqnarray}
\frac{\partial}{\partial X}=\rho \frac{\partial}{\partial x},\qquad\frac{\partial}{\partial T}=\frac{\partial }{\partial t}-\frac{F}{2}\frac{\partial }{\partial x}.\label{derivativelaw}
\end{eqnarray}
Next, transforming the MCSP equation into
\begin{eqnarray}
\partial_{x}\left(\partial_{t}-\frac{1}{2}\sum_{1\leq j<k \leq n}c_{jk}u^{(j)}u^{(k)}\partial_{x}\right)u^{(i)}=u^{(i)},\label{6}
\end{eqnarray}
 and applying the derivative law (\ref{derivativelaw}) and (\ref{conservationlaw2}) to (\ref{6}), we obtain
 \begin{eqnarray}
 u^{(i)}_{XT}=u^{(i)}\rho.
 \end{eqnarray}
Furthermore, applying the derivative law (\ref{derivativelaw})  to the conservation law (\ref{conservationlaw}), we then have
 \begin{eqnarray}
\rho_{T}+\frac{1}{2}\left(\sum_{1\leq j<k\leq n}c_{jk}u^{(j)}u^{(k)}\right)_{X}=0.\label{conservationlawXT}
 \end{eqnarray}
The equation (\ref{conservationlawXT}) is the conservation law for $X$ and $T$, where $\rho$ is the conservation density and $\frac{1}{2}\sum_{1\leq j<k\leq n}c_{jk}u^{(j)}u^{(k)}$ is the flux.

Therefore, the MCSP equation (\ref{MCSP}) is transformed into the MCCID equations
\begin{eqnarray}
\left\{
\begin{array}{ll}
 u^{(i)}_{XT}=u^{(i)}\rho,\\
   \rho_{T}+\displaystyle\frac{1}{2}\left(\displaystyle\sum_{1\leq j<k\leq n}c_{jk}u^{(j)}u^{(k)}\right)_{X}=0,
 \end{array}
\right.
\label{dispersionlessMCSP}
\end{eqnarray}
 by the hodograph transformation (\ref{hodograph}).

Introducing the dependent variable transformation
\begin{eqnarray}
u^{(i)}=\frac{g^{(i)}}{f}\quad(i=1,2,\cdots n),\qquad\rho=1-2({\rm log}f)_{XT},\label{dependenttransformation}
\end{eqnarray}
 the MCCID equations
 (\ref{dispersionlessMCSP}) lead to
\begin{eqnarray}
\left\{
\begin{array}{ll}
D_{X}D_{T}f\cdot g^{(i)}=fg^{(i)},\\
  D_{T}^{2}f\cdot f=\displaystyle\frac{1}{2}\displaystyle\sum_{1\leq j<k\leq n}c_{jk}g^{(j)}g^{(k)}.
\end{array}
\right.
\label{bilinearMCSP}
\end{eqnarray}

\begin{lem}
(Feng-Maruno-Ohta \cite{semidis-MCSP}). The bilinear equations (\ref{bilinearMCSP}) have the following Pfaffian solution:
\begin{eqnarray}
f={\rm Pf}(a_{1},\cdots,a_{2N},b_{1},\cdots,b_{2N}),\qquad g^{(i)}={\rm Pf}(d_{0},B_{i},a_{1},\cdots,a_{2N},b_{1},\cdots,b_{2N}),
\end{eqnarray}
where $i=1,2,\cdots,n$ and the elements of the Pfaffians are defined as
\begin{eqnarray}
{\rm Pf}(a_{j},a_{k})=\frac{p_{j}-p_{k}}{p_{j}+p_{k}}e^{\xi_{j}+\xi_{k}},\qquad
{\rm Pf}(a_{j},b_{k})=\delta_{j,k},\label{pfMCSP1}
\end{eqnarray}
\begin{eqnarray}
{\rm Pf}(b_{j},b_{k})=\frac{1}{4}\frac{c_{\mu\nu}}{p_{j}^{-2}-p_{k}^{-2}}\quad(b_{j}\in B_{\mu},b_{k}\in B_{\nu} ),
\end{eqnarray}
\begin{eqnarray}
{\rm Pf}(d_{l},a_{k})=p_{k}^{l}e^{\xi_{k}},
\qquad
{\rm Pf}(b_{j},B_{\mu})=
\left\{
\begin{array}{ll}
1 & (b_{j}\in B_{\mu}) \\
0 & (b_{j}\notin B_{\mu})
\end{array}
\right..
\label{pfMCSP2}
\end{eqnarray}
Here, $j,k =1,2,\cdots 2N$,\quad$\mu,\nu=1,2,\cdots n,\quad\xi_{j}=p_{j}X+p_{j}^{-1}T+\xi_{j0}$, $p_j$ and $\xi_{j0}$ are arbitrary constants, $\delta_{j,k}$ denotes the Kronecker delta, and $l$ is an integer. As in Lemma \ref{lemma1}, formal symbols $d
_{l}$ with different integer values of $l$ are introduced to differentiate Pfaffian expressions, although $g^{(i)}$ contains only $d_{0}$. A class of set $B_{\mu}(\mu=1,2,\cdots,n) $ satisfies the following conditions
\begin{eqnarray}
B_{\mu}\cap B_{\nu}=\varnothing\,\,{\rm if}\,\, \mu\neq\nu,\qquad\cup_{\mu=1}^{n}B_{\mu}=\{b_{1},b_{2},\cdots,b_{2N}\}.
\end{eqnarray}
The elements of the Pfaffians not defined above are defined as zero.
\end{lem}

\begin{proof}
See \cite{semidis-MCSP}.
\end{proof}
\end{section}

\begin{example}\label{ex2}
\leavevmode
\\
1-soliton : For $N=1, n=2, B_{1}=\{b_{1}\}, B_{2}=\{b_{2}\}, c_{12}=1$, we obtain the $\tau$-functions $f$, $g^{(1)}$ and $g^{(2)}$ as follows:
\begin{align}
&f={\rm Pf}(a_{1},\cdots,a_{2N},b_{1},\cdots,b_{2N})=\frac{1}{4}\frac{p_{1}-p_{2}}{p_{1}+p_{2}}\frac{1}{p_{1}^{-2}-p_{2}^{-2}}e^{\xi_{1}+\xi_{2}}-1=-1-\frac{b_{12}}{4}e^{\xi_{1}+\xi_{2}},\nonumber\\
&g^{(1)}={\rm Pf}(d_{0},B_{1},a_{1},\cdots,a_{2N},b_{1},\cdots,b_{2N})=-e^{\xi_{1}},\nonumber\\
&g^{(2)}={\rm Pf}(d_{0},B_{2},a_{1},\cdots,a_{2N},b_{1},\cdots,b_{2N})=-e^{\xi_{2}},\nonumber
\end{align}
where $b_{jk}=\left(\frac{p_{j}p_{k}}{p_{j}+p_{k}}\right)^{2}$ and $\xi_{j}=p_{j}X+\frac{1}{p_{j}}T+\xi_{j0}$.

Letting $\xi_{j0}=\xi_{j0}^{\prime}+\log{a_{j}}$, the $\tau$-functions can be rewritten as
\begin{eqnarray}
f=-1-\frac{a_{1}a_{2}b_{12}}{4}e^{\eta_{1}+\eta_{2}},\qquad
g^{(1)}=-a_{1}e^{\eta_{1}},\qquad g^{(2)}=-a_{2}e^{\eta_{2}}.\label{gg21}
\end{eqnarray}
where $b_{jk}=\left(\frac{p_{j}p_{k}}{p_{j}+p_{k}}\right)^{2}$ and $\eta_{j} =p_{j}X+\frac{1}{p_{j}}T+\xi_{j0}^{\prime}$.\\
\\
2-soliton : For $N=2, n=2, B_{1}=\{b_{1},b_{2}\}, B_{2}=\{b_{3}, b_{4}\}, c_{12}=1$. 
By the expansion rule, we obtain the $\tau$-functions $f$, $g^{(1)}$ and $g^{(2)}$ as follows:
\begin{align}
f&={\rm Pf}(a_{1},\cdots,a_{2N},b_{1},\cdots,b_{2N})\nonumber\\
&=1+\frac{b_{13}}{4}e^{\xi_{1}+\xi_{3}}
  +\frac{b_{23}}{4}e^{\xi_{2}+\xi_{3}}
  +\frac{b_{14}}{4}e^{\xi_{1}+\xi_{4}}
  +\frac{b_{24}}{4}e^{\xi_{2}+\xi_{4}}\nonumber\\
&\quad +(p_{1}-p_{2})^{2}(p_{3}-p_{4})^{2}
  \frac{b_{13}b_{23}b_{14}b_{24}}{16p_{1}^{2}p_{2}^{2}p_{3}^{2}p_{4}^{2}}
  e^{\xi_{1}+\xi_{2}+\xi_{3}+\xi_{4}},\nonumber\\
g^{(1)}&={\rm Pf}(d_{0},B_{1},a_{1},\cdots,a_{2N},b_{1},\cdots,b_{2N})\nonumber\\
&=e^{\xi_{1}}+e^{\xi_{2}}
 +\frac{(p_{1}-p_{2})^{2}p_{3}^{4}}
        {4(p_{1}+p_{3})^{2}(p_{2}+p_{3})^{2}}
  e^{\xi_{1}+\xi_{2}+\xi_{3}}\nonumber\\
&\quad+\frac{(p_{1}-p_{2})^{2}p_{4}^{4}}
        {4(p_{1}+p_{4})^{2}(p_{2}+p_{4})^{2}}
  e^{\xi_{1}+\xi_{2}+\xi_{4}},\nonumber\\
g^{(2)}&={\rm Pf}(d_{0},B_{2},a_{1},\cdots,a_{2N},b_{1},\cdots,b_{2N})\nonumber\\
&=e^{\xi_{3}}+e^{\xi_{4}}
 +\frac{(p_{3}-p_{4})^{2}p_{2}^{4}}
        {4(p_{2}+p_{3})^{2}(p_{2}+p_{4})^{2}}
  e^{\xi_{2}+\xi_{3}+\xi_{4}}\nonumber\\
&\quad+\frac{(p_{3}-p_{4})^{2}p_{1}^{4}}
        {4(p_{1}+p_{3})^{2}(p_{1}+p_{4})^{2}}
  e^{\xi_{1}+\xi_{3}+\xi_{4}},\nonumber
\end{align}
where $\xi_{j}=p_{j}X+\frac{1}{p_{j}}T+\xi_{j0}$.

Letting $\xi_{j0}=\xi_{j0}^{\prime}+\log{a_{j}}$, the $\tau$-functions can rewritten as
\begin{align}
f&=1+\frac{a_{1}a_{3}b_{13}}{4}e^{\eta_{1}+\eta_{3}}
  +\frac{a_{2}a_{3}b_{23}}{4}e^{\eta_{2}+\eta_{3}}
  +\frac{a_{1}a_{4}b_{14}}{4}e^{\eta_{1}+\eta_{4}}\nonumber\\
&\quad+\frac{a_{2}a_{4}b_{24}}{4}e^{\eta_{2}+\eta_{4}}
  +a_{1}a_{2}a_{3}a_{4}(p_{1}-p_{2})^{2}(p_{3}-p_{4})^{2}
   \frac{b_{13}b_{23}b_{14}b_{24}}{16p_{1}^{2}p_{2}^{2}p_{3}^{2}p_{4}^{2}}
   e^{\eta_{1}+\eta_{2}+\eta_{3}+\eta_{4}},\label{ff2}\\
g^{(1)}&=a_{1}e^{\eta_{1}}+a_{2}e^{\eta_{2}}
 +\frac{a_{1}a_{2}a_{3}(p_{1}-p_{2})^{2}p_{3}^{4}}
        {4(p_{1}+p_{3})^{2}(p_{2}+p_{3})^{2}}
  e^{\eta_{1}+\eta_{2}+\eta_{3}}\nonumber\\
&\quad+\frac{a_{1}a_{2}a_{4}(p_{1}-p_{2})^{2}p_{4}^{4}}
        {4(p_{1}+p_{4})^{2}(p_{2}+p_{4})^{2}}
  e^{\eta_{1}+\eta_{2}+\eta_{4}},\label{gg12}\\
g^{(2)}&=a_{3}e^{\eta_{3}}+a_{4}e^{\eta_{4}}
 +\frac{a_{2}a_{3}a_{4}(p_{3}-p_{4})^{2}p_{2}^{4}}
        {4(p_{2}+p_{3})^{2}(p_{2}+p_{4})^{2}}
  e^{\eta_{2}+\eta_{3}+\eta_{4}}\nonumber\\
&\quad+\frac{a_{1}a_{3}a_{4}(p_{3}-p_{4})^{2}p_{1}^{4}}
        {4(p_{1}+p_{3})^{2}(p_{1}+p_{4})^{2}}
  e^{\eta_{1}+\eta_{3}+\eta_{4}},\label{gg22}
\end{align}
where $b_{jk}=\left(\frac{p_{j}p_{k}}{p_{j}+p_{k}}\right)^{2}$ and $\eta_{j} =p_{j}X+\frac{1}{p_{j}}T+\xi_{j0}^{\prime}$.

These solutions  (\ref{gg21}) - (\ref{gg22}) can also be obtained from Hirota's direct method.
\end{example}


\begin{section}{A self-adaptive moving mesh scheme for the MCSP equation}
\label{sec_disMCSP}
We proposed a semi-discrete analogue of bilinear equations (\ref{bilinearMCSP})
\begin{eqnarray}
\left\{
\begin{array}{ll}
  \displaystyle\frac{1}{a}D_{T}(f_{l+1}\cdot g_{l}^{(i)}-f_{l}\cdot g_{l+1}^{(i)})=g_{l+1}^{(i)}f_{l}+g_{l}^{(i)}f_{l+1},\quad i=1,2,\cdots,n,\\
  D_{T}^{2}f_{l}\cdot f_{l}=\displaystyle\frac{1}{2}\displaystyle\sum_{1\leq j<k\leq n}c_{jk}g_{l}^{(j)}g_{l}^{(k)},
\end{array}
\right.
\label{disbilinearMCSP}
\end{eqnarray}
in \cite{semidis-MCSP}. Here, $2a$ is the parameter related to a discrete interval.


\begin{lem}
(Feng-Maruno-Ohta \cite{semidis-MCSP}). The bilinear equations (\ref{disbilinearMCSP}) have the following Pfaffian solution:
\begin{eqnarray}
f_{l}={\rm Pf}(a_{1},\cdots,a_{2N},b_{1},\cdots,b_{2N})_{l},\qquad g_{l}^{(i)}={\rm Pf}(d_{0},B_{i},a_{1},\cdots,a_{2N},b_{1},\cdots,b_{2N})_{l},
\label{dis_g_MCSP}
\label{dis_f_MCSP}
\end{eqnarray}
where $i=1,2,\cdots,n$ and the elements of the Pfaffians are defined as
\begin{eqnarray}
{\rm Pf}(a_{j},a_{k})_{l}=\frac{p_{j}-p_{k}}{p_{j}+p_{k}}\phi_{j}^{(0)}(l)\phi_{k}^{(0)}(l),\qquad{\rm Pf}(a_{j},b_{k})_{l}=\delta_{j,k},
\label{dpfMCSP1}
\end{eqnarray}
\begin{eqnarray}
{\rm Pf}(b_{j},b_{k})_{l}=\frac{1}{4}\frac{c_{\mu\nu}}{p_{j}^{-2}-p_{k}^{-2}}\quad(b_{j}\in B_{\mu},b_{k}\in B_{\nu} ),
\end{eqnarray}
\begin{eqnarray}
{\rm Pf}(d_{m},a_{j})_{l}=\phi_{j}^{(m)}(l),\qquad{\rm Pf}(d^{l},a_{j})_{l}=\phi_{j}^{(0)}(l+1),
\end{eqnarray}
\begin{eqnarray}
{\rm Pf}(b_{k},B_{\mu})_{l}=
\left\{
\begin{array}{ll}
 1 & (b_{k}\in B_{\mu}),\\
 0 & (b_{k}\notin B_{\mu}) ,
\end{array}
\right.
\end{eqnarray}
\begin{eqnarray}
{\rm Pf}(d_{0},d^{l})_{l}=1,\qquad{\rm Pf}(d_{-1},d^{l})_{l}=-a,
\label{dpfMCSP2}
\end{eqnarray}
where
\begin{eqnarray}
\phi_{j}^{(n)}(l)=p_{j}^{n}\left(\frac{1+ap_{j}}{1-ap_{j}}\right)^{l}e^{p_{j}^{-1}T+\xi_{j0}},
\end{eqnarray}
and it satisfies the following equation
\begin{eqnarray}
\frac{\phi_{j}^{(n)}(l+1)-\phi_{j}^{(n)}(l)}{a}=\phi_{j}^{(n+1)}(l+1)+\phi_{j}^{(n+1)}(l).
\end{eqnarray}
\end{lem}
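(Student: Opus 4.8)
The plan is to follow, essentially verbatim, the argument already carried out for the semi-discrete MCmSP lemma associated with the bilinear equations (\ref{disbilinear}), since the two systems differ only in the normalization of the entry ${\rm Pf}(b_j,b_k)_l$ (a factor $\frac{1}{4}$ here) and in the coefficient on the right-hand side of the second bilinear equation ($\frac{1}{2}$ in (\ref{disbilinearMCSP}) versus $2$). The engine throughout is the dispersion relation
\begin{eqnarray}
\frac{\phi_{j}^{(n)}(l+1)-\phi_{j}^{(n)}(l)}{a}=\phi_{j}^{(n+1)}(l+1)+\phi_{j}^{(n+1)}(l),\nonumber
\end{eqnarray}
which lets the forward shift $l\mapsto l+1$ and the continuous derivative $\partial_T$ act on the Pfaffian entries like insertions of $d$-columns.

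First I would establish the basic formulas for the seed entry ${\rm Pf}(a_j,a_k)_l$: a direct computation gives $\partial_T{\rm Pf}(a_j,a_k)_l={\rm Pf}(d_{-1},d_0,a_j,a_k)_l$, ${\rm Pf}(a_j,a_k)_{l+1}={\rm Pf}(d_0,d^l,a_j,a_k)_l$, and $(\partial_T-a){\rm Pf}(a_j,a_k)_{l+1}={\rm Pf}(d_{-1},d^l,a_j,a_k)_l$, exactly as in the MCmSP case. Lifting these to the full Pfaffian yields $\partial_T f_l={\rm Pf}(d_{-1},d_0,\cdots)_l$, $f_{l+1}={\rm Pf}(d_0,d^l,\cdots)_l$, $(\partial_T-a)f_{l+1}={\rm Pf}(d_{-1},d^l,\cdots)_l$, together with the analogous expressions for $g_l^{(\mu)}$, $g_{l+1}^{(\mu)}$, $\partial_T g_l^{(\mu)}$ and $(\partial_T-a)g_{l+1}^{(\mu)}$ obtained by expanding along the $d_0$ (resp.\ $d^l$) column and discarding the trivially vanishing Pfaffians. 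The first equation of (\ref{disbilinearMCSP}) then follows from the same algebraic Pfaffian identity \cite{Hirotabook} applied to ${\rm Pf}(d_{-1},B_\mu,d_0,d^l,\cdots)_l\,{\rm Pf}(\cdots)_l$. Note that the factor $\frac{1}{4}$ never enters this step, so this half of the proof is identical to the MCmSP computation.

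Second, for the equation $D_T^2 f_l\cdot f_l=\frac{1}{2}\sum_{j<k}c_{jk}g_l^{(j)}g_l^{(k)}$, no shift in $l$ occurs, so—just as the MCmSP proof deferred its second equation to the continuous case—I would reproduce the continuous argument at fixed $l$. The flux side is expanded along the $b$-columns; using ${\rm Pf}(b_j,b_k)_l(p_j^{-2}-p_k^{-2})=\frac{1}{4}c_{\mu\nu}$ and the vanishing of the trivially-zero Pfaffian ${\rm Pf}(b_j,d_0,\cdots)$ rewrites $\frac{1}{2}\sum c_{\mu\nu}g^{(\mu)}g^{(\nu)}$ as $\sum_j p_j^{-2}{\rm Pf}(d_0,\cdots,\hat a_j,\cdots){\rm Pf}(d_0,\cdots,\hat b_j,\cdots)$. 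On the other side, expanding $D_T^2 f_l\cdot f_l=2\bigl((\partial_T^2 f_l)f_l-(\partial_T f_l)^2\bigr)$ along the $a$-columns, discarding the antisymmetric $p_j^{-1}p_k^{-1}$ cross term, and collapsing the remainder via the vanishing of ${\rm Pf}(a_j,d_0,\cdots)$ produces the same expression. Matching the two computations yields precisely the coefficient $\frac{1}{2}$.

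The main obstacle is purely bookkeeping: tracking the normalization constant through the double sum so that the factor $\frac{1}{4}$ in ${\rm Pf}(b_j,b_k)_l$ combines with the Pfaffian-expansion signs to reproduce the coefficient $\frac{1}{2}$ (and not $2$, as in the MCmSP case), and verifying that the two distinct trivially-zero-Pfaffian expansions—one eliminating a $b_j$-column, one an $a_j$-column—generate exactly the mixed $\hat a_j/\hat b_j$ cofactor products required. There is no new structural difficulty beyond the already-completed MCmSP argument, and in fact the entire statement can be recovered from that proof under the rescaling of the $b$-entries that distinguishes the MCSP normalization from the MCmSP one.
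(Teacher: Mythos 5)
Your proposal is correct, but it takes a genuinely different route from the paper: for this lemma the paper gives no argument at all --- its proof is literally ``See \cite{semidis-MCSP}'' --- whereas you construct a self-contained proof inside the paper by re-running the MCmSP arguments (the proof given for the semi-discrete system (\ref{disbilinear}), together with the continuous proof of the second equation of (\ref{MCmSPbilinear}) which that proof itself invokes). Your bookkeeping is right on both halves. The first equation of (\ref{disbilinearMCSP}) never touches the entries ${\rm Pf}(b_j,b_k)_l$, so that half is verbatim the MCmSP computation via the same Pfaffian identity. For the second equation, the identity $\sum_{\mu,\nu}c_{\mu\nu}\,{\rm Pf}(B_\mu,b_j)_l\,{\rm Pf}(B_\nu,b_k)_l=4\,(p_j^{-2}-p_k^{-2})\,{\rm Pf}(b_j,b_k)_l$ inserts a factor $4$, while the flux-side and derivative-side reductions through the two trivially-zero Pfaffian expansions are normalization-independent; combining them with $(\partial_T^2 f_l)f_l-(\partial_T f_l)^2=\sum_j p_j^{-2}\,{\rm Pf}(d_0,\cdots,\hat{a}_j,\cdots)_l\,{\rm Pf}(d_0,\cdots,\hat{b}_j,\cdots)_l$ yields exactly the coefficient $\frac{1}{2}$, as you claim (and the continuous argument does transfer to fixed $l$, since $\partial_T\phi_j^{(n)}(l)=\phi_j^{(n-1)}(l)$ gives the entries the same $T$-structure as in the continuous case). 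In fact, your closing remark can be promoted to the entire proof: setting $c'_{jk}=c_{jk}/4$ turns the MCSP data into exactly the MCmSP data, since ${\rm Pf}(b_j,b_k)_l=c'_{\mu\nu}/(p_j^{-2}-p_k^{-2})$ and $\frac{1}{2}\sum_{1\leq j<k\leq n}c_{jk}g_l^{(j)}g_l^{(k)}=2\sum_{1\leq j<k\leq n}c'_{jk}g_l^{(j)}g_l^{(k)}$, while the first bilinear equation is $c$-independent; because the MCmSP lemma holds for arbitrary symmetric constants, the present statement is then an immediate corollary with no tracking of factors through double sums at all. What each route buys: the paper's citation is economical but leaves this half of the paper non-self-contained; your argument, especially in its rescaling form, closes that gap at essentially zero cost and makes transparent that the MCSP and MCmSP Pfaffian solutions differ only by a normalization of the $b$-entries.
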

The elements of the Pfaffians not defined above are defined as zero.
\begin{proof}
See \cite{semidis-MCSP}.
\end{proof}


\begin{lem}
A semi-discrete analogue of the MCCID equations (\ref{dispersionlessMCSP})
\begin{eqnarray}
\left\{
\begin{array}{ll}
  \displaystyle\frac{d}{dT}(u_{l+1}^{(i)}-u_{l}^{(i)})=a\rho_{l}(u_{l+1}^{(i)}+u_{l}^{(i)}),\\
  \displaystyle\frac{d}{dT}\rho_{l}=-\displaystyle\frac{1}{2a}\left(\displaystyle\frac{1}{2}\sum_{1\leq j<k\leq n}c_{jk}(u_{l+1}^{(j)}u_{l+1}^{(k)}-u_{l}^{(j)}u_{l}^{(k)})\right),
\end{array}
\right.
\label{semidisdispersionlessMCSP}
\end{eqnarray}
is obtained from the bilinear equations (\ref{disbilinearMCSP}) through the dependent variable transformation
\begin{eqnarray}
u_{l}^{(i)}=\frac{g_{l}^{(i)}}{f_{l}},\qquad
\rho_{l}=1-\frac{1}{a}\left(\log{\frac{f_{l+1}}{f_{l}}}\right)_{T}.
\label{disdependenttransformationMCSP}
\end{eqnarray}

\end{lem}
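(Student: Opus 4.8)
The plan is to mirror the derivation already carried out above for the semi-discrete MCCID equations of the MCmSP equation, adjusting only the numerical constants that distinguish the two systems. The two equations of (\ref{semidisdispersionlessMCSP}) will come, respectively, from the two bilinear equations of (\ref{disbilinearMCSP}) under the transformation (\ref{disdependenttransformationMCSP}), since the first bilinear equation here is identical in form to the MCmSP one and only the second bilinear equation and the coefficient in the definition of $\rho_{l}$ differ.

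For the first equation, I would divide the first bilinear equation of (\ref{disbilinearMCSP}) by $f_{l+1}f_{l}$. Expressing each cross term $D_{T}(g_{l+1}^{(i)}\cdot f_{l})/(f_{l+1}f_{l})$ and $D_{T}(g_{l}^{(i)}\cdot f_{l+1})/(f_{l+1}f_{l})$ through the quotients $g_{l}^{(i)}/f_{l}$, $g_{l+1}^{(i)}/f_{l+1}$ and the logarithmic derivative $\left(\log(f_{l+1}/f_{l})\right)_{T}$, the equation collapses to the identity
\begin{eqnarray}
\left(\frac{g_{l+1}^{(i)}}{f_{l+1}}-\frac{g_{l}^{(i)}}{f_{l}}\right)_{T}&=&\left(a-\left(\log{\frac{f_{l+1}}{f_{l}}}\right)_{T}\right)\left(\frac{g_{l+1}^{(i)}}{f_{l+1}}+\frac{g_{l}^{(i)}}{f_{l}}\right),\nonumber
\end{eqnarray}
exactly as in the MCmSP case. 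Substituting $u_{l}^{(i)}=g_{l}^{(i)}/f_{l}$ and reading (\ref{disdependenttransformationMCSP}) in the form $\left(\log(f_{l+1}/f_{l})\right)_{T}=a(1-\rho_{l})$ turns the bracket $a-\left(\log(f_{l+1}/f_{l})\right)_{T}$ into $a\rho_{l}$, which yields the first equation of (\ref{semidisdispersionlessMCSP}). The only difference from the MCmSP derivation is that the coefficient in (\ref{disdependenttransformationMCSP}) is $1/a$ rather than $1/2a$, so the bracket simplifies to $a\rho_{l}$ instead of $a(2\rho_{l}-1)$.

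For the second equation, I would rewrite the second bilinear equation of (\ref{disbilinearMCSP}) using $(\log f_{l})_{TT}=\tfrac{1}{2}(D_{T}^{2}f_{l}\cdot f_{l})/f_{l}^{2}$, which together with $u_{l}^{(i)}=g_{l}^{(i)}/f_{l}$ gives
\begin{eqnarray}
(\log f_{l})_{TT}&=&\frac{1}{4}\sum_{1\leq j<k\leq n}c_{jk}u_{l}^{(j)}u_{l}^{(k)}.\nonumber
\end{eqnarray}
Differentiating the second relation of (\ref{disdependenttransformationMCSP}) once more in $T$ gives $\frac{d\rho_{l}}{dT}=-\frac{1}{a}\left[(\log f_{l+1})_{TT}-(\log f_{l})_{TT}\right]$, and inserting this expression at both $l$ and $l+1$ produces precisely $-\frac{1}{2a}\left(\frac{1}{2}\sum_{1\leq j<k\leq n}c_{jk}(u_{l+1}^{(j)}u_{l+1}^{(k)}-u_{l}^{(j)}u_{l}^{(k)})\right)$, which is the second equation of (\ref{semidisdispersionlessMCSP}).

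There is no genuine obstacle here: the argument is a routine bilinear-to-nonlinear reduction, identical in structure to the MCmSP lemma established earlier. The only point requiring care is the bookkeeping of the numerical constants, namely the factor $1/2$ on the right-hand side of the second bilinear equation of (\ref{disbilinearMCSP}) and the $1/a$ (versus $1/2a$) appearing in (\ref{disdependenttransformationMCSP}); it is exactly these factors that account for the coefficients in (\ref{semidisdispersionlessMCSP}) differing from those in (\ref{m-semidisdispersionless}).
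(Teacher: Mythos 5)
Your proof is correct and takes essentially the same route as the paper's: divide the first bilinear equation by $f_{l+1}f_{l}$ to obtain the quotient identity and read off the factor $a\rho_{l}$ from the transformation, then rewrite the second bilinear equation as $(\log f_{l})_{TT}=\frac{1}{4}\sum_{1\leq j<k\leq n}c_{jk}u_{l}^{(j)}u_{l}^{(k)}$ and differentiate the definition of $\rho_{l}$ in $T$. Your bookkeeping of the constants (the $1/a$ versus $1/2a$ in the transformation and the factor $1/2$ in the second bilinear equation) matches the paper exactly.
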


\begin{proof}
At first, dividing both sides of the first equation of the bilinear equations (\ref{disbilinearMCSP}) by $f_{l+1}f_{l}$, we obtain
\begin{eqnarray}
\left(\frac{g_{l+1}^{(i)}}{f_{l+1}}-\frac{g_{l}^{(i)}}{f_{l}}\right)_{T}=\left(a-\left(\log{\frac{f_{l+1}}{f_{l}}}\right)_{T}\right)\left(\frac{g_{l+1}^{(i)}}{f_{l+1}}+\frac{g_{l}^{(i)}}{f_{l}}\right).
\end{eqnarray}
It follows
\begin{eqnarray}
\frac{d}{dT}(u_{l+1}^{(i)}-u_{l}^{(i)})=a\rho_{l}(u_{l+1}^{(i)}+u_{l}^{(i)}),
\end{eqnarray}
by the dependent variable transformation (\ref{disdependenttransformationMCSP}).

Next, from the dependent variable transformation (\ref{disdependenttransformationMCSP}), 
the second equation of the bilinear equations (\ref{disbilinearMCSP}) is rewritten as
\begin{eqnarray}
(\log{f_{l}})_{TT}=\frac{1}{4}\sum_{1\leq j<k\leq n}c_{jk}u_{l}^{(j)}u_{l}^{(k)}.\label{aux1MCSP}
\end{eqnarray}
By (\ref{aux1MCSP}) and the dependent variable transformation (\ref{disdependenttransformationMCSP}), we obtain
\begin{eqnarray}
\frac{d}{dT}\rho_{l}=-\frac{1}{2a}\left(\frac{1}{2}\sum_{1\leq j<k\leq n}c_{jk}(u_{l+1}^{(j)}u_{l+1}^{(k)}-u_{l}^{(j)}u_{l}^{(k)})\right).
\end{eqnarray}
\end{proof}

\begin{lem}
A semi-discrete analogue of the MCSP equation is of the form:
\begin{eqnarray}
\left\{
\begin{array}{ll}
  \displaystyle\frac{d} {dT}(u_{l+1}^{(i)}-u_{l}^{(i)})= \displaystyle\frac{\delta_{l}}{2}(u_{l+1}^{(i)}+u_{l}^{(i)}),\\
  \\
 \displaystyle\frac{d\delta_{l}}{dT}=-\displaystyle\frac{1}{2}\sum_{1\leq j<k \leq n}c_{jk}(u_{l+1}^{(j)}u_{l+1}^{(k)}-u_{l}^{(j)}u_{l}^{(k)}),\\
 x_{l}=x_{0}+ \displaystyle\sum_{m=0}^{l-1}\delta_{m},\quad t=T.
\end{array}
\right.
\label{semidisMCSP}
\end{eqnarray}
Here, $u_{l}^{(i)}$ is the value at $x_{l}$.
\end{lem}

\begin{proof}
From the derivative law (\ref{derivativelaw}), we obtain
\begin{eqnarray}
\frac{\partial x}{\partial X}=\rho.\label{3.19}
\end{eqnarray}
Integrating (\ref{3.19}) with respect to $X$, we have the integral form of the hodograph transformation (\ref{hodograph})
\begin{eqnarray}
x=\int\rho(X,T)dX=x_{0}+\int_{0}^{X}\rho(\bar{X},T)d\bar{X},
\label{9}
\end{eqnarray}
where $x_{0}$ is the integration constant with respect to $X$, determined by the value at the left-hand edge $(x=x_{0})$; 
it may depend on $T$ when the boundary flux is nonzero.
Discretizing (\ref{9}), we then have the discrete hodograph transformation
\begin{eqnarray}
x_{l}=x_{0}+\sum_{m=0}^{l-1}2a\rho_{m},
\label{dishodographMCSP}
\end{eqnarray}
where $X_l=2al$ is the uniform lattice coordinate in the $X$ variable and $ \rho_{l}\equiv\rho(X_{l},T)$.

Introducing a mesh interval
\begin{eqnarray}
\delta_{l}:=x_{l+1}-x_{l},
\end{eqnarray}
we obtain
\begin{eqnarray}
\delta_{l}=2a\rho_{l},\label{meshintMCSP}
\end{eqnarray}
by substituting the discrete hodograph transformation (\ref{dishodographMCSP}).
Substituting  (\ref{meshintMCSP}) into the semi-discrete MCCID equations (\ref{semidisdispersionlessMCSP}), 
we have a semi-discrete analogue of the MCSP equation (\ref{semidisMCSP}).
\end{proof}

\begin{remark}

Equation (\ref{dishodographMCSP}) is the summation form of the discrete hodograph transformation. 
At fixed $T$, its local form is
\begin{eqnarray}
\Delta x_{l}=\rho_{l}\Delta X_{l},\qquad \Delta X_l=2a.
\label{ddishodographMCSP}
\end{eqnarray}
Here $\Delta x_l=x_{l+1}-x_l$. Dividing (\ref{ddishodographMCSP}) by $\Delta X_l$ gives
\begin{eqnarray}
\frac{\Delta x_{l}}{\Delta X_{l}}=\rho_{l},
\end{eqnarray}
and summing this relation from $0$ to $l-1$ recovers (\ref{dishodographMCSP}). The second equation of (\ref{semidisdispersionlessMCSP}), 
which is the discrete version of the conservation law (\ref{conservationlawXT}), gives the compatible time evolution of the mesh interval,
\begin{eqnarray}
\frac{d}{dT}\Delta x_l
=-\frac{1}{2}\sum_{1\leq j<k\leq n}c_{jk}
(u_{l+1}^{(j)}u_{l+1}^{(k)}-u_l^{(j)}u_l^{(k)}).
\end{eqnarray}
Since the hodograph transformation (\ref{dishodographMCSP}) determines the mesh points only up to the additive term $x_0(T)$, 
the evolution of the mesh points cannot be determined until the evolution of the edge point $x_0$ is specified. 
Theorem~2 specifies the evolution of the edge point $x_0$, thereby determining the evolution of the mesh points.

\end{remark}

\begin{theorem}
Assume that the edge point is allowed to move and satisfies the consistency condition with the hodograph transformation, namely $dx_{0}/dT=-(1/2)\sum_{1\leq j<k\leq n}c_{jk}u_{0}^{(j)}u_{0}^{(k)}$. Then the semi-discrete system (\ref{semidisMCSP}) can be rewritten as the following self-adaptive moving mesh scheme for the MCSP equation:
\begin{eqnarray}
\left\{
\begin{array}{ll}
\displaystyle\frac{d}{dT}(u_{l+1}^{(i)}-u_{l}^{(i)})=\displaystyle\frac{\delta_{l}}{2}(u_{l+1}^{(i)}+u_{l}^{(i)}), \\
\\
\displaystyle\frac{d x_{l}}{d T}=-\displaystyle\frac{1}{2}\sum_{1\leq j<k \leq n}c_{jk}u_{l}^{(j)}u_{l}^{(k)}.
\end{array}
\right.
\label{3.24}
\end{eqnarray}
Here, $u_{l}^{(i)}$ is the value at $x_{l}$. 
\end{theorem}

\begin{proof}
As in the MCmSP case, we first recall the continuous calculation only to identify 
the corresponding moving-edge consistency condition.
From the derivative law (\ref{derivativelaw}), we have the evolution equation for $x$
\begin{eqnarray}
\frac{\partial x}{\partial T}=-\frac{1}{2}\displaystyle{\sum_{1\leq j < k \leq n}c_{jk}u^{(j)}u^{(k)}}.
\label{SP_evol}
\end{eqnarray}
Equation (\ref{SP_evol}) means that the edge point $x_0(T)=x(0,T)$ is fixed only 
when the boundary flux $\sum_{1\leq j<k\leq n}c_{jk}u^{(j)}(0,T)u^{(k)}(0,T)/2$ vanishes. 
When the boundary flux is nonzero, $x_{0}$ must evolve with $T$, and (\ref{9}) can be rewritten as
\begin{eqnarray}
x=x_{0}(T)+\int_{0}^{X}\rho(\bar{X},T)d\bar{X}.
\label{9999}
\end{eqnarray}
The following calculation derives the same edge-point condition from the integral representation.
Differentiating the integral form of the hodograph transformation (\ref{9999}) with respect to $T$, we have
\begin{align}
\frac{\partial x}{\partial T}&=\frac{\partial x_{0}(T)}{\partial T}+\frac{\partial}{\partial T}\int_{0}^{X}\rho(\bar{X},T)d\bar{X}\nonumber\\
&=\frac{\partial x_{0}(T)}{\partial T}-\int_{0}^{X}\frac{\partial}{\partial \bar{X}}\left(\frac{1}{2}\sum_{1\leq j<k\leq n}c_{jk}u^{(j)}(\bar{X},T)u^{(k)}(\bar{X},T)\right)d\bar{X}\nonumber\\
&=\frac{\partial x_{0}(T)}{\partial T}
 -\frac{1}{2}\displaystyle{\sum_{1\leq j <k \leq n}}c_{jk}u^{(j)}(X,T)u^{(k)}(X,T)\nonumber\\
&\quad
 +\frac{1}{2}\displaystyle{\sum_{1\leq j<k \leq n}}c_{jk}u^{(j)}(0,T)u^{(k)}(0,T).
\label{0.26}
\end{align}
The calculation from the first to the second line uses the conservation law (\ref{conservationlawXT}).
Comparing (\ref{SP_evol}) with (\ref{0.26}), the $X$-dependent flux terms agree, and the remaining boundary term gives the evolution equation for the edge point $x_{0}$:
\begin{eqnarray}
\frac{\partial x_{0}(T)}{\partial T}=-\frac{1}{2}\displaystyle{\sum_{1\leq j < k \leq n}}c_{jk}u^{(j)}(0,T)u^{(k)}(0,T).
\label{SP_const}
\end{eqnarray}
We call (\ref{SP_const}) the consistency condition with the hodograph transformation.

We now turn to the semi-discrete case, which gives the actual proof of the theorem. Differentiating the summation form of the discrete hodograph transformation (\ref{dishodographMCSP}) with respect to $T$ and using the second equation of (\ref{semidisdispersionlessMCSP}), we obtain
\begin{align}
\frac{d x_{l}}{d T}&=\frac{d x_{0}(T)}{d T}+\frac{d}{d T}\sum_{m=0}^{l-1}2a\rho_{m}=\frac{d x_{0}(T)}{d T}+\sum_{m=0}^{l-1}2a\frac{d \rho_{m}}{d T}\nonumber\\
&=\frac{d x_{0}(T)}{d T}-\sum_{m=0}^{l-1}\frac{1}{2}\sum_{1\leq j<k \leq n}c_{jk}(u_{m+1}^{(j)}u_{m+1}^{(k)}-u_{m}^{(j)}u_{m}^{(k)})\nonumber\\
&=\frac{d x_{0}(T)}{d T}+\frac{1}{2}\sum_{1\leq j<k \leq n}c_{jk}(-u_{l}^{(j)}u_{l}^{(k)}+u_{0}^{(j)}u_{0}^{(k)}).
\label{23}
\end{align}
Equation (\ref{23}) corresponds to (\ref{0.26}) in the continuous case. As in the continuous case, we obtain the consistency condition, i.e., 
the evolution equation for the edge point $x_{0}$:
\begin{eqnarray}
\frac{dx_{0}(T)}{dT}=-\frac{1}{2}\displaystyle{\sum_{1\leq j < k \leq n}}c_{jk}u_{0}^{(j)}u_{0}^{(k)}.
\label{disSP_const}
\end{eqnarray}
Substituting (\ref{disSP_const}) into (\ref{23}) yields
\begin{eqnarray}
\frac{d x_{l}}{d T}=-\frac{1}{2}\sum_{1\leq j<k \leq n}c_{jk}u_{l}^{(j)}u_{l}^{(k)}.
\end{eqnarray}
\end{proof}

\begin{remark}
 The integral form of the hodograph transformation (\ref{9}) 
 and the summation form of the discrete hodograph transformation (\ref{dishodographMCSP}) can be expressed by the $\tau$ function as
 \begin{eqnarray}
 x=X-2(\log{f})_{T},
 \end{eqnarray}
 \begin{eqnarray}
 x_{l}=2al-2(\log{f_{l}})_{T}.
  \end{eqnarray}
 See Appendix B for details.
\end{remark}

Dividing the first equation of (\ref{3.24}) by $\delta_{l}$, one arrives at
\begin{eqnarray}
\frac{(u_{l+1}^{(i)}-u_{l}^{(i)})_{T}}{\delta_{l}}=\frac{u_{l+1}^{(i)}+u_{l}^{(i)}}{2}.\label{3.38}
\end{eqnarray}

As in the MCmSP case, $u_{l+1}^{(i)}-u_l^{(i)}=\delta_l u_x^{(i)}+O(\delta_l^2)$ and
$u_{l+1}^{(i)}+u_l^{(i)}=2u^{(i)}+O(\delta_l)$. Hence, before transforming back to the Eulerian variables, 
the continuous limit $a\rightarrow0$ and $\delta_l\rightarrow0$ of (\ref{3.38}) is
\begin{eqnarray}
u^{(i)}_{Tx}=u^{(i)}.
\label{136}
\end{eqnarray}

Finally, the derivative with respect to $T$ at fixed lattice label is transformed back to
the Eulerian variables as
\begin{eqnarray}
\frac{d}{dT}=\frac{dx_{l}}{dT}\frac{d}{dx_{l}}+\frac{d}{dt}\rightarrow-\frac{1}{2}\sum_{1\leq j < k \leq n}c_{jk}u^{(j)}u^{(k)} \partial_{x}+\partial_{t}\quad\rm{for}\quad a\rightarrow 0.
\label{137}
\end{eqnarray}
Substituting (\ref{137}) into (\ref{136}), equation (\ref{3.38}) converges to
\begin{eqnarray}
\partial_{x}\left(\partial_{t}-\frac{1}{2}\sum_{1\leq j<k \leq n}c_{jk}u^{(j)}u^{(k)}\partial_{x}\right)u^{(i)}=u^{(i)},
\end{eqnarray}
which is exactly the MCSP equation.
\end{section}

\begin{section}{Numerical simulations of the 2-SP equation and the CSP equation}
\label{sec_numexpMCSP}
In this section, we construct self-adaptive moving mesh schemes of the 2-SP equation and the CSP equation.

The 2-SP equation
\begin{eqnarray}
\left\{
\begin{array}{ll}
u_{xt}=u+\displaystyle\frac{1}{2}(uvu_{x})_{x}, \\
\\
v_{xt}=v+\displaystyle\frac{1}{2}(uvv_{x})_{x},
\end{array}
\right.
\label{2SP}
\end{eqnarray}
is a special case of the MCSP equation (\ref{MCSP}) with $n=2, u^{(1)}=u, u^{(2)}=v$ and $c_{12}=1$.

Based on the results in the previous section, we have a self-adaptive moving mesh scheme for the 2-SP equation
\begin{eqnarray}
\left\{
\begin{array}{ll}
\displaystyle\frac{d }{d T}(u_{l+1}-u_{l})=\displaystyle\frac{\delta_{l}}{2}(u_{l+1}+u_{l}), \\
\\
\displaystyle\frac{d}{d T}(v_{l+1}-v_{l})=\displaystyle\frac{\delta_{l}}{2}(v_{l+1}+v_{l}),\\
\\
\displaystyle\frac{d x_{l}}{d T}=-\displaystyle\frac{1}{2}u_{l}v_{l}.
\end{array}
\right.
\label{semidis2SP}
\end{eqnarray}
Here, $u_{l}$ and $v_{l}$ are values at $x_{l}$.

Equation (\ref{semidis2SP}) admits the following N-soliton solution
\begin{eqnarray}
u_{l}=\frac{g_{l}^{(1)}}{f_{l}},\qquad v_{l}=\frac{g_{l}^{(2)}}{f_{l}},\qquad x_{l}=2al-2(\log{f_{l}})_{T},\qquad t=T,
\end{eqnarray}
with
\begin{eqnarray}
f_{l}&=&{\rm Pf}(a_{1},\cdots,a_{2N},b_{1},\cdots,b_{2N})_{l},\nonumber\\
g_{l}^{(i)}&=&{\rm Pf}(d_{0},B_{i},a_{1},\cdots,a_{2N},b_{1},\cdots,b_{2N})_{l},
\end{eqnarray}
where $i=1,2$ and the elements of the Pfaffians are defined as (\ref{dpfMCSP1})-(\ref{dpfMCSP2}).

 Here we show examples of numerical simulations of the 2-SP equation  (\ref{2SP}) with 
 the self-adaptive moving mesh scheme (\ref{semidis2SP}) in a periodic setting. 
 The initial conditions are given by
\begin{align}
u&=\frac{g^{(1)}}{f},\quad v=\frac{g^{(2)}}{f},\qquad x=X-2(\log{f})_{T},\quad t=T,
\nonumber\\
f&=-1-\frac{a_{1}a_{2}b_{12}}{4}e^{\eta_{1}+\eta_{2}},\qquad g^{(1)}=-a_{1}e^{\eta_{1}},\qquad g^{(2)}=-a_{2}e^{\eta_{2}},\nonumber\\
\eta_{i}&=p_{i}X+\frac{1}{p_{i}}T+\xi_{i}^{\prime},\quad b_{ij}=\left(\frac{p_{i}p_{j}}{p_{i}+p_{j}}\right)^{2},\quad i,j=1,2,\label{incon_SP_1}
\end{align}
and
\begin{align}
&u=\frac{g^{(1)}}{f},\quad v=\frac{g^{(2)}}{f},\quad x=X-2(\log{f})_{T},\quad t=T,\nonumber\\
&f=1+\frac{a_{1}a_{3}b_{13}}{4}e^{\eta_{1}+\eta_{3}}
  +\frac{a_{2}a_{3}b_{23}}{4}e^{\eta_{2}+\eta_{3}}
  +\frac{a_{1}a_{4}b_{14}}{4}e^{\eta_{1}+\eta_{4}}\nonumber\\
&\quad+\frac{a_{2}a_{4}b_{24}}{4}e^{\eta_{2}+\eta_{4}}
  +a_{1}a_{2}a_{3}a_{4}(p_{1}-p_{2})^{2}(p_{3}-p_{4})^{2}
   \frac{b_{13}b_{23}b_{14}b_{24}}{16p_{1}^{2}p_{2}^{2}p_{3}^{2}p_{4}^{2}}
   e^{\eta_{1}+\eta_{2}+\eta_{3}+\eta_{4}},\nonumber\\
&g^{(1)}=a_{1}e^{\eta_{1}}+a_{2}e^{\eta_{2}}
 +\frac{a_{1}a_{2}a_{3}(p_{1}-p_{2})^{2}p_{3}^{4}}
        {4(p_{1}+p_{3})^{2}(p_{2}+p_{3})^{2}}
  e^{\eta_{1}+\eta_{2}+\eta_{3}}\nonumber\\
&\quad+\frac{a_{1}a_{2}a_{4}(p_{1}-p_{2})^{2}p_{4}^{4}}
        {4(p_{1}+p_{4})^{2}(p_{2}+p_{4})^{2}}
  e^{\eta_{1}+\eta_{2}+\eta_{4}},\nonumber\\
&g^{(2)}=a_{3}e^{\eta_{3}}+a_{4}e^{\eta_{4}}
 +\frac{a_{2}a_{3}a_{4}(p_{3}-p_{4})^{2}p_{2}^{4}}
        {4(p_{2}+p_{3})^{2}(p_{2}+p_{4})^{2}}
  e^{\eta_{2}+\eta_{3}+\eta_{4}}\nonumber\\
&\quad+\frac{a_{1}a_{3}a_{4}(p_{3}-p_{4})^{2}p_{1}^{4}}
        {4(p_{1}+p_{3})^{2}(p_{1}+p_{4})^{2}}
  e^{\eta_{1}+\eta_{3}+\eta_{4}},\nonumber\\
&\eta_{i}=p_{i}X+\frac{1}{p_{i}}T+\xi_{i}^{\prime},\quad b_{ij}=\left(\frac{p_{i}p_{j}}{p_{i}+p_{j}}\right)^{2},\quad i,j=1,2,3,4,\label{incon_SP_2}
\end{align}
which are exact one- and two-soliton solutions of the 2-SP equation obtained in {\it Example \ref{ex2}}.

As a time marching method, we use the improved Euler method.
The number of mesh intervals is $L=8000$, the computational-domain width is $D=80$, 
and the time step is $\Delta t=0.0001$. The uniform lattice spacing in the $X$ variable is chosen as $2a=D/L$, 
and the computational grid is $X_{l}=-\frac{D}{2}+2al,$ for $l=0,\cdots ,L$. 

For the numerical simulations, the endpoint values of the field variables are identified periodically. 
For the 2-SP scheme (\ref{semidis2SP}), we impose
\begin{eqnarray}
u_{l+L}(T)=u_{l}(T),\quad v_{l+L}(T)=v_{l}(T).
\label{SPperiodic}
\end{eqnarray}
The same periodic setting is used for the CSP reduction.

In the numerical implementation, the mesh points are updated by the third equation of (\ref{semidis2SP}), 
while the field variables are reconstructed recursively from the first and second equations of (\ref{semidis2SP}). 
During the reconstruction, the value at the right endpoint is first initialized with the value at the left endpoint 
from the previous time level. This initialization supplies the starting value required for the recursive reconstruction. 
The field variables are then reconstructed recursively by backward substitution, after which the periodic identification (\ref{SPperiodic}) 
is imposed on the endpoint values. The number of mesh points is kept fixed throughout the computation. Consequently, 
when a soliton reaches one end of the computational interval, it re-enters from the opposite end through 
the periodic identification of the field variables. The quantity maxerr is defined in the same manner as in Section \ref{sec_numexpMCmSP}.

Figure \ref{2SP_u_1} and Figure \ref{2SP_v_1} show numerical simulations of the {\it u}-profile and the {\it v}-profile of 
the one-soliton solution for $p_{1}=0.95,p_{2}=1.1, a_{1}=0.5, a_{2}=20$ and $\xi_{1}^{\prime}=\xi_{2}^{\prime}=25$.  
Figure \ref{2SP_u} and Figure \ref{2SP_v} show numerical simulations of the {\it u}-profile and the {\it v}-profile of 
the two-soliton solution for $p_{1}=0.95,p_{2}=1.0, p_{3}=1.1,p_{4}=1.2, a_{1}=0.5, a_{2}=1, a_{3}=20, a_{4}=40$ and $\xi_{1}^{\prime}=\xi_{2}^{\prime}=\xi_{3}^{\prime}=\xi_{4}^{\prime}=25$.

The blue dotted line represents the numerical solution and the red dotted line represents the mesh distribution. 
The solitons travel in a leftward direction with respect to the $x$-axis.

Next, we construct a self-adaptive moving mesh scheme for the CSP equation. 
As well as the CmSP equation,  let $u$ and $v$ be complex functions and $v = u^{*}$ for the 2-SP equation (\ref{2SP}),  we then have the CSP equation,
\begin{eqnarray}
u_{xt}=u+\frac{1}{2}(|u|^{2}u_{x})_{x}, \label{CSP}
\end{eqnarray}
where $u^{*}$ denotes the complex conjugate of $u$.
From (\ref{semidis2SP}), we have a self-adaptive moving mesh scheme for the CSP equation
\begin{eqnarray}
\left\{
\begin{array}{ll}
\displaystyle\frac{d }{d T}(u_{l+1}-u_{l})=\displaystyle\frac{\delta_{l}}{2}(u_{l+1}+u_{l}), \\
\displaystyle\frac{d x_{l}}{d T}=-\displaystyle\frac{1}{2}|u_{l}|^{2}.
\end{array}
\right.
\label{semidiscomplexSP}
\end{eqnarray}

Here we show examples of numerical simulations of the CSP equation  (\ref{CSP}) with the self-adaptive moving mesh scheme (\ref{semidiscomplexSP}) in a periodic setting. 
The initial conditions are given by (\ref{incon_SP_1}) and (\ref{incon_SP_2}).
Figures \ref{compSP_abs_1}, \ref{compSP_real_1} and \ref{compSP_imag_1} show numerical simulations of the $|u|$, Re$(u)$, 
and Im$(u)$- profiles of the one-soliton solution for
$p_{1}=0.5+0.1\mathrm{i},p_{2}=0.5-0.1\mathrm{i}, a_{1}= a_{2}={\rm exp}(-6)$ and $\xi_{1}^{\prime}=\xi_{2}^{\prime}=5$.
Figures \ref{compSP_abs},  \ref{compSP_real} and \ref{compSP_imag} 
show numerical simulations of the $|u|$, Re$(u)$, and Im$(u)$- profiles of the two-soliton solution for 
$p_{1}=0.5+0.1\mathrm{i}, p_{2}=1.0+0.3\mathrm{i}, p_{3}=0.5-0.1\mathrm{i}, p_{4}=1.0-0.3\mathrm{i}, a_{1}=a_{3}={\rm exp}(-6), a_{2}=a_{4}={\rm exp}(4)$ 
and $ \xi_{1}^{\prime}=\xi_{2}^{\prime}=\xi_{3}^{\prime}=\xi_{4}^{\prime}=5$.
The solitons travel in a leftward direction with respect to the $x$-axis.

When the solitons reach the left edge, they emerge from the right edge. The maxerr for the two-soliton solution with 
soliton interaction is almost the same as that for the one-soliton solution. This fact suggests that 
the self-adaptive moving mesh schemes (\ref{semidis2SP}) and (\ref{semidiscomplexSP}) are effective and accurate.

\begin{figure}[h]
 \centering
 \begin{tabular}{cc}
      \begin{minipage}[t]{0.4\hsize}
       \centering
        \includegraphics[keepaspectratio, scale=0.25]{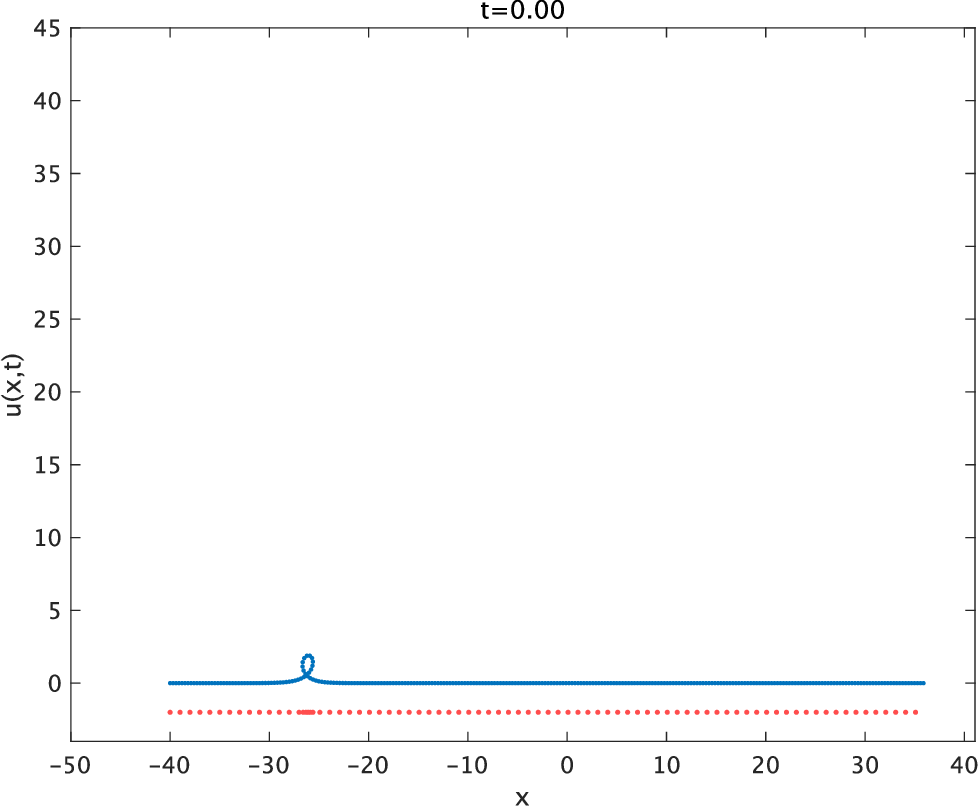}
      \end{minipage}&
      \begin{minipage}[t]{0.4\hsize}
        \centering
        \includegraphics[keepaspectratio, scale=0.25]{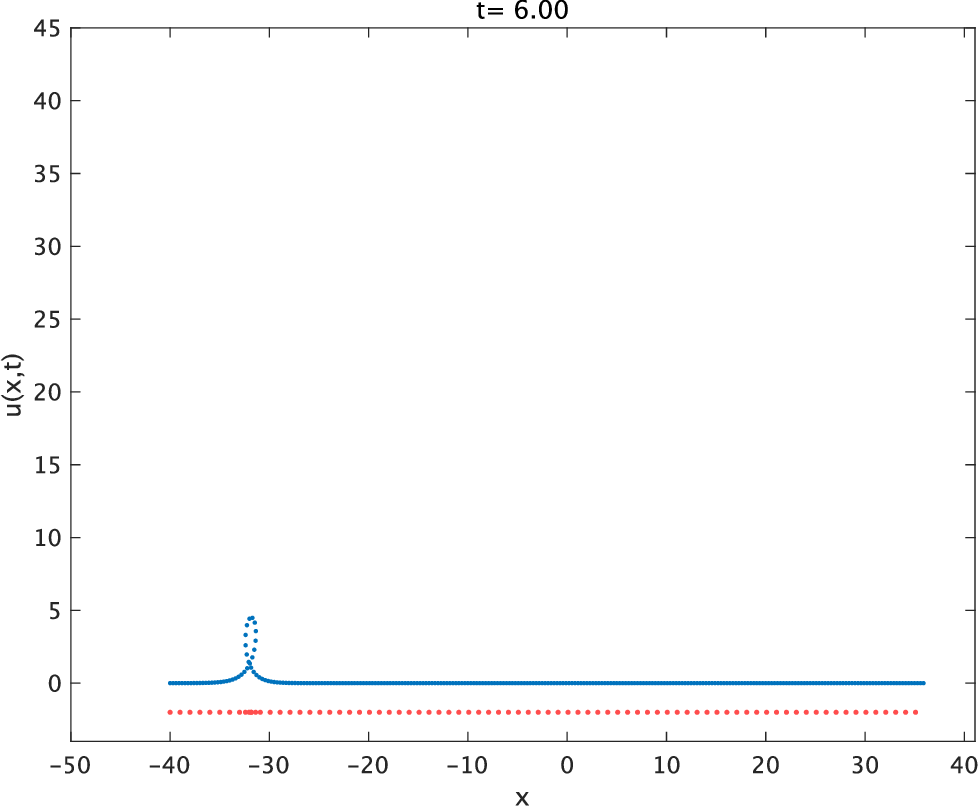}
      \end{minipage}\\

      \begin{minipage}[t]{0.4\hsize}
        \centering
        \includegraphics[keepaspectratio, scale=0.25]{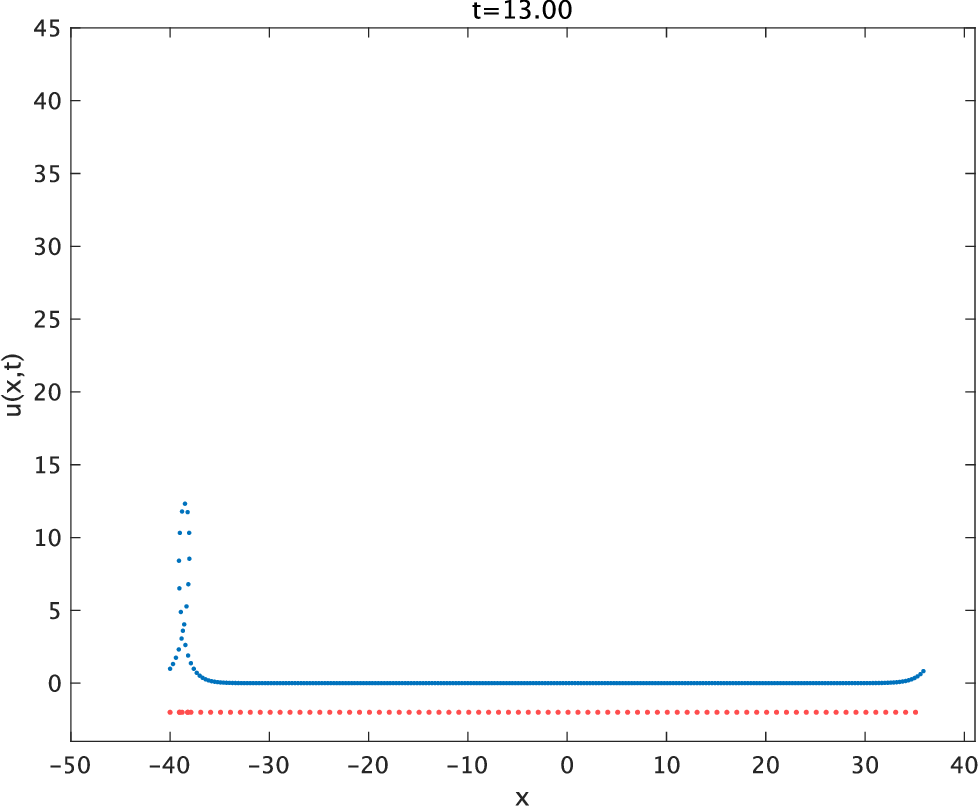}
      \end{minipage} &
      \begin{minipage}[t]{0.4\hsize}
        \centering
        \includegraphics[keepaspectratio, scale=0.25]{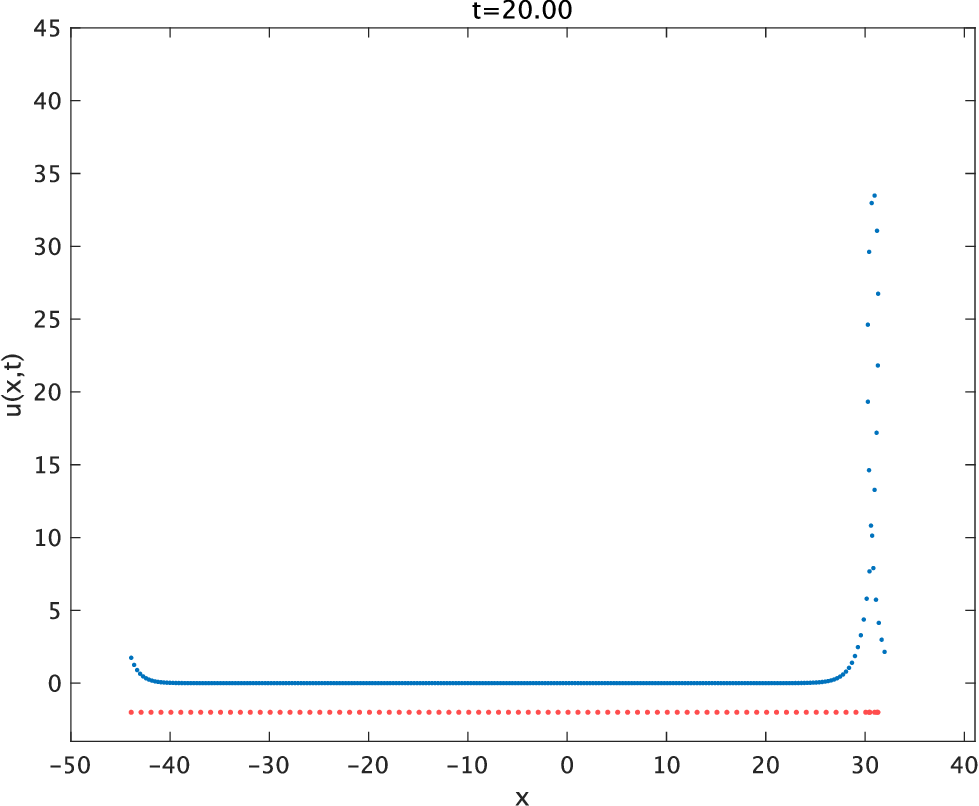}
      \end{minipage}
       \end{tabular}
     \caption{The numerical simulation of the {\it u}-profile of the one-soliton solution for the 2-SP equation. maxerr=1.17$\times 10^{-4}$}
              \label{2SP_u_1}
  \end{figure}

 \begin{figure}[h]
  \centering
 \begin{tabular}{cc}
      \begin{minipage}[t]{0.4\hsize}
       \centering
        \includegraphics[keepaspectratio, scale=0.25]{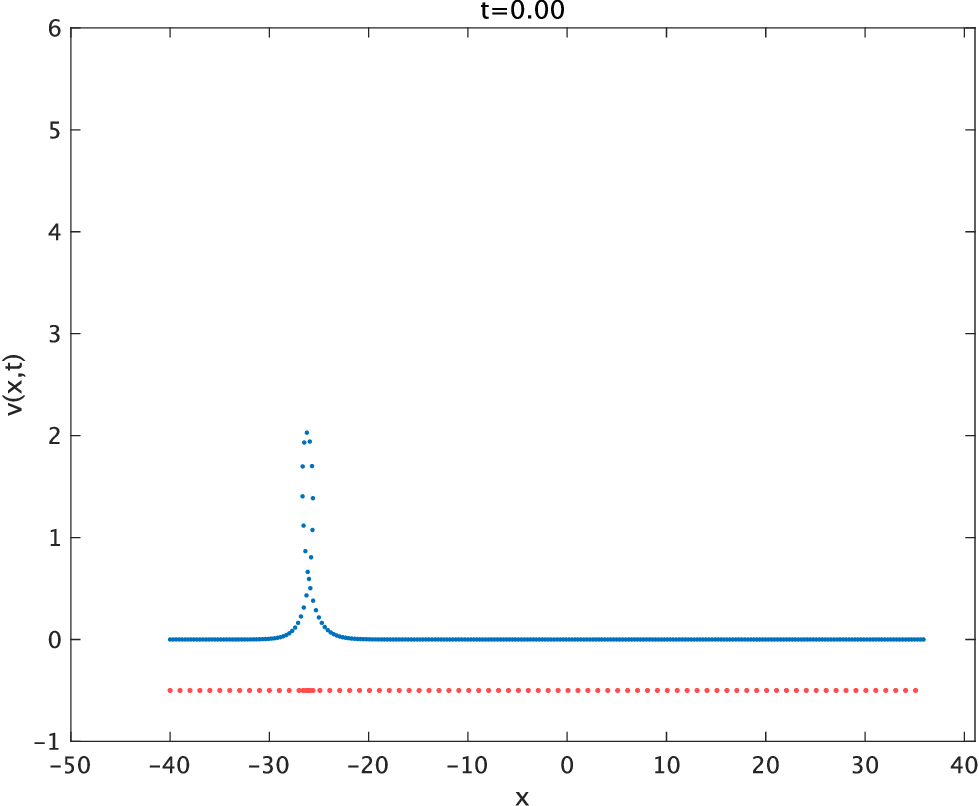}
      \end{minipage} &
      \begin{minipage}[t]{0.4\hsize}
        \centering
        \includegraphics[keepaspectratio, scale=0.25]{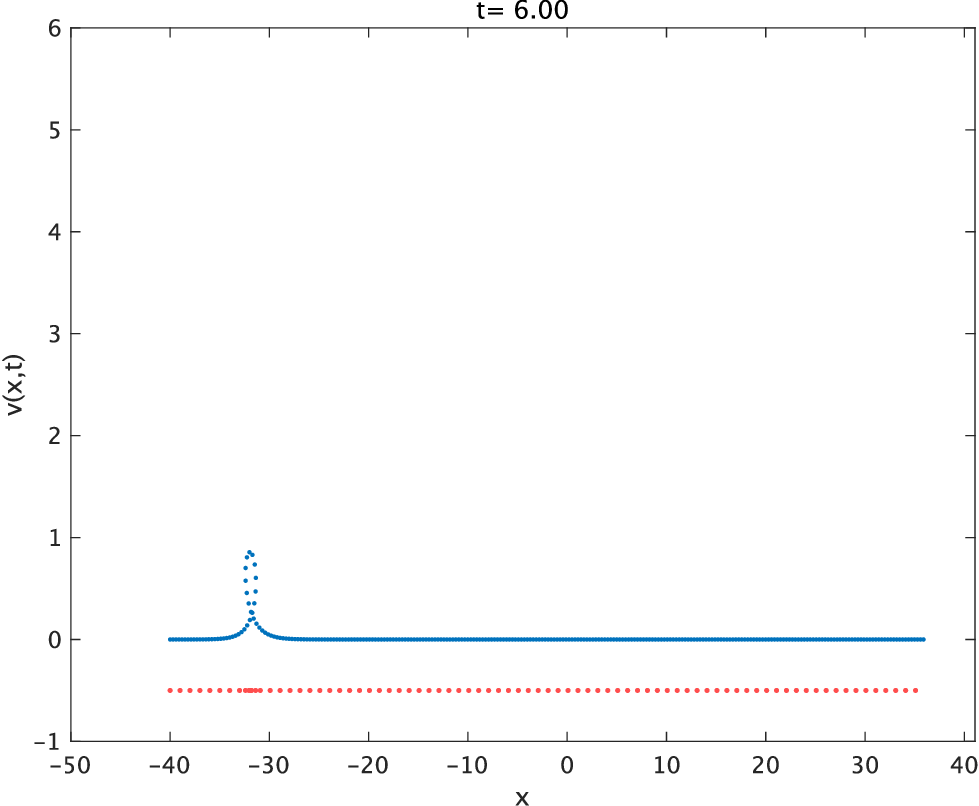}
      \end{minipage}\\

      \begin{minipage}[t]{0.4\hsize}
        \centering
        \includegraphics[keepaspectratio, scale=0.25]{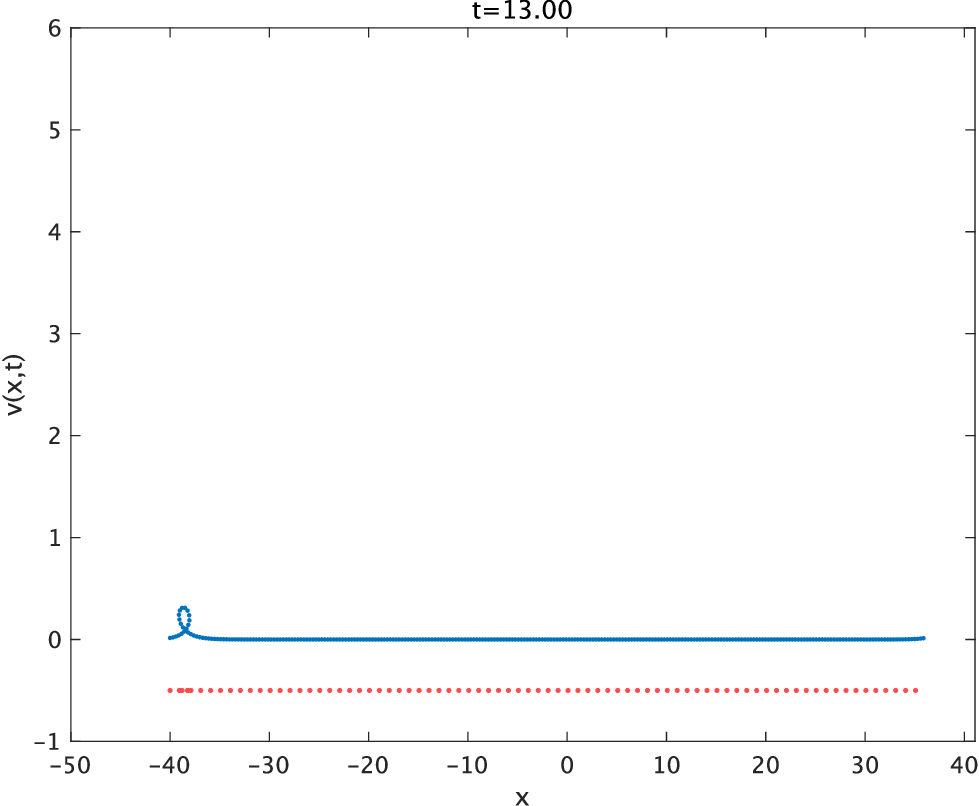}
      \end{minipage} &
      \begin{minipage}[t]{0.4\hsize}
        \centering
        \includegraphics[keepaspectratio, scale=0.25]{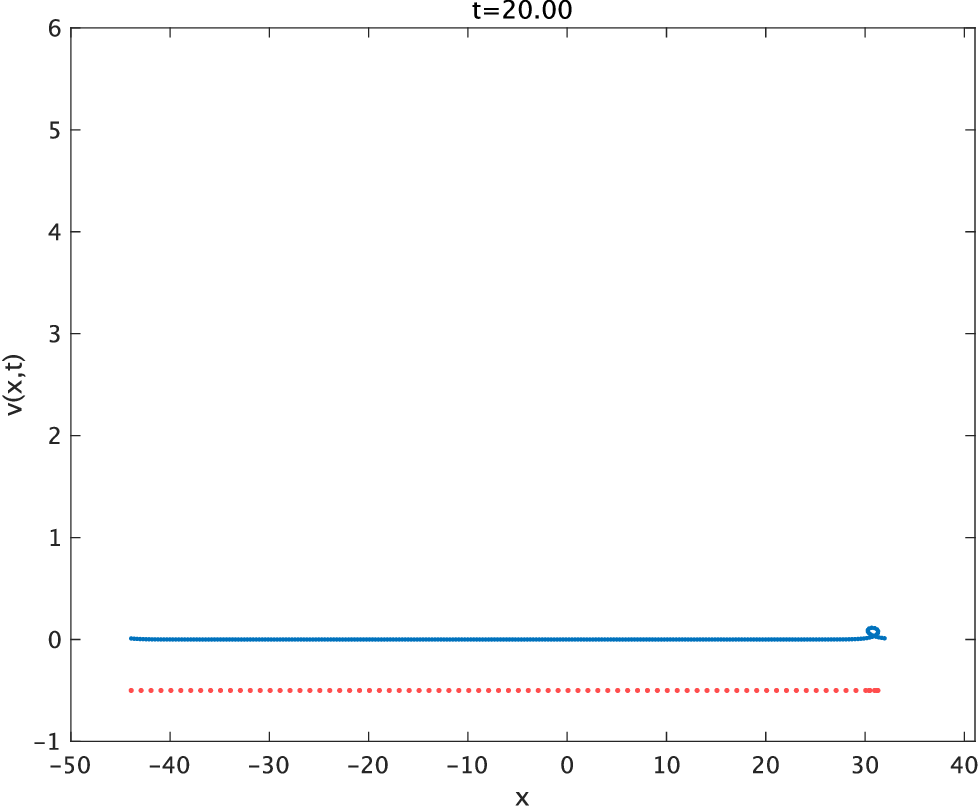}
      \end{minipage}
       \end{tabular}
     \caption{The numerical simulation of the {\it v}-profile of the one-soliton solution for the 2-SP equation. maxerr=1.21$\times 10^{-4}$}
              \label{2SP_v_1}
  \end{figure}


\begin{figure}[h]
 \centering
 \begin{tabular}{cc}
      \begin{minipage}[t]{0.4\hsize}
       \centering
        \includegraphics[keepaspectratio, scale=0.25]{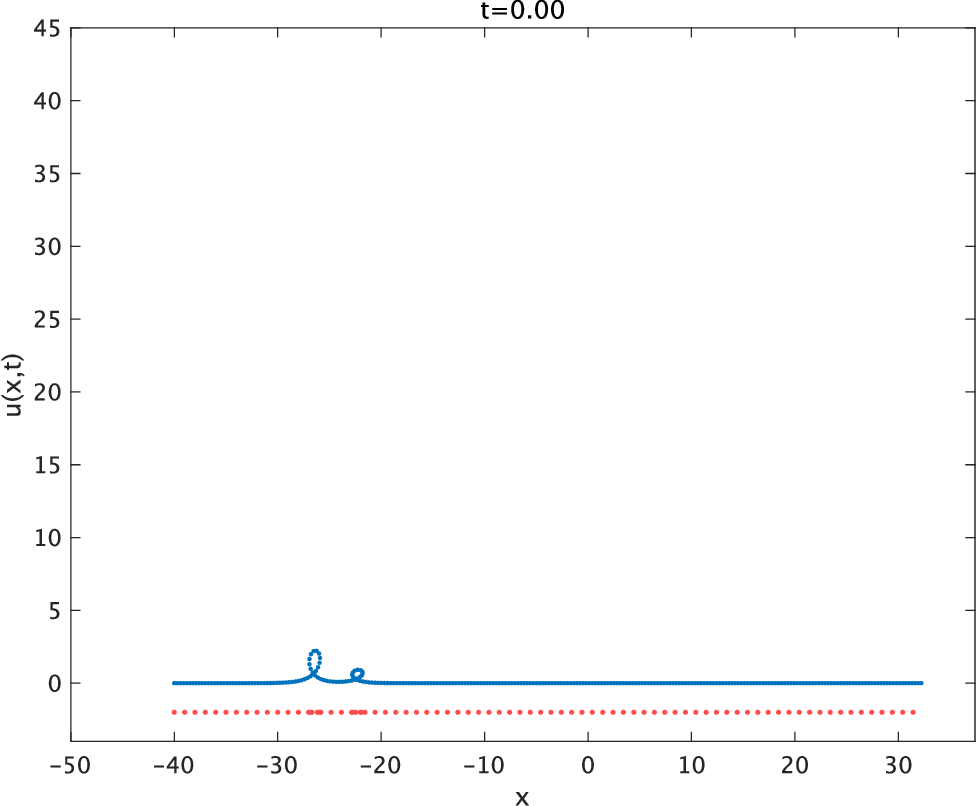}
      \end{minipage} &
      \begin{minipage}[t]{0.4\hsize}
        \centering
        \includegraphics[keepaspectratio, scale=0.25]{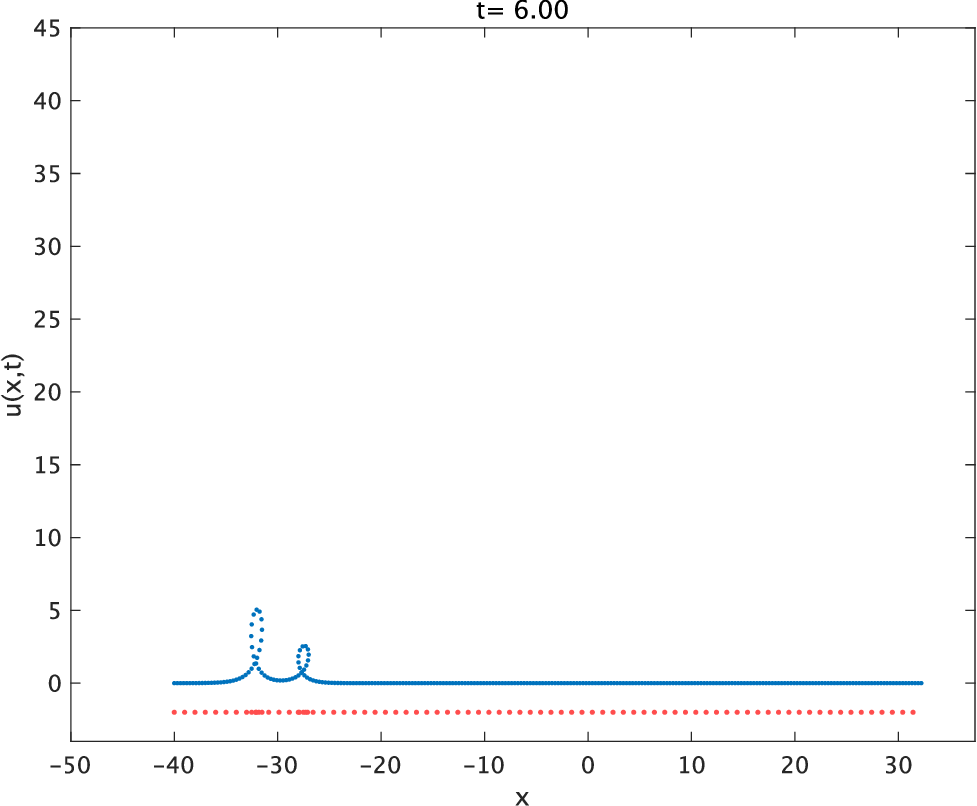}
      \end{minipage}\\

      \begin{minipage}[t]{0.4\hsize}
        \centering
        \includegraphics[keepaspectratio, scale=0.25]{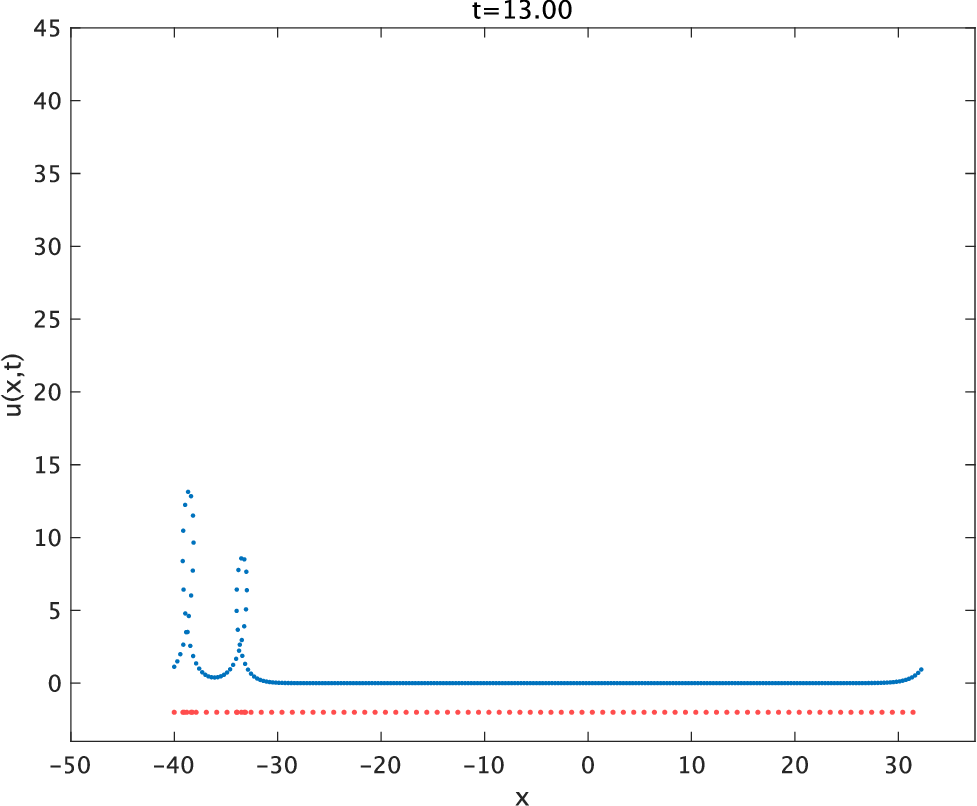}
      \end{minipage} &
      \begin{minipage}[t]{0.4\hsize}
        \centering
        \includegraphics[keepaspectratio, scale=0.25]{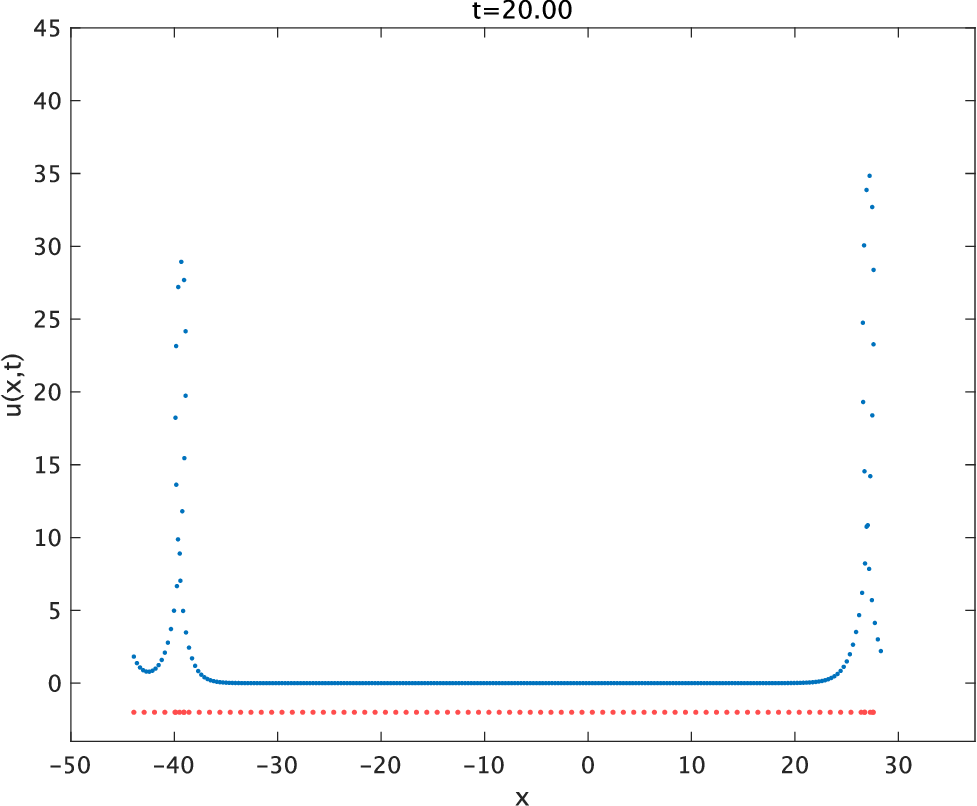}
      \end{minipage}
       \end{tabular}
     \caption{The numerical simulation of the {\it u}-profile of the two-soliton solution for the 2-SP equation. maxerr=1.18$\times 10^{-4}$}
              \label{2SP_u}
  \end{figure}

 \begin{figure}[h]
  \centering
 \begin{tabular}{cc}
      \begin{minipage}[t]{0.4\hsize}
       \centering
        \includegraphics[keepaspectratio, scale=0.25]{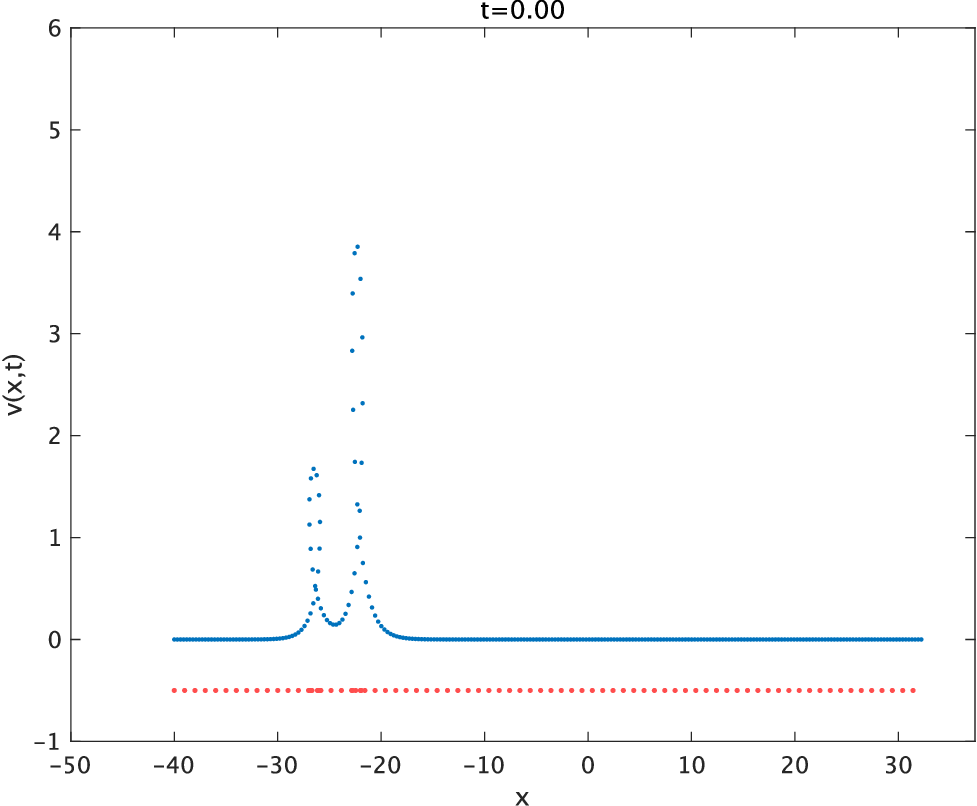}
      \end{minipage} &
      \begin{minipage}[t]{0.4\hsize}
        \centering
        \includegraphics[keepaspectratio, scale=0.25]{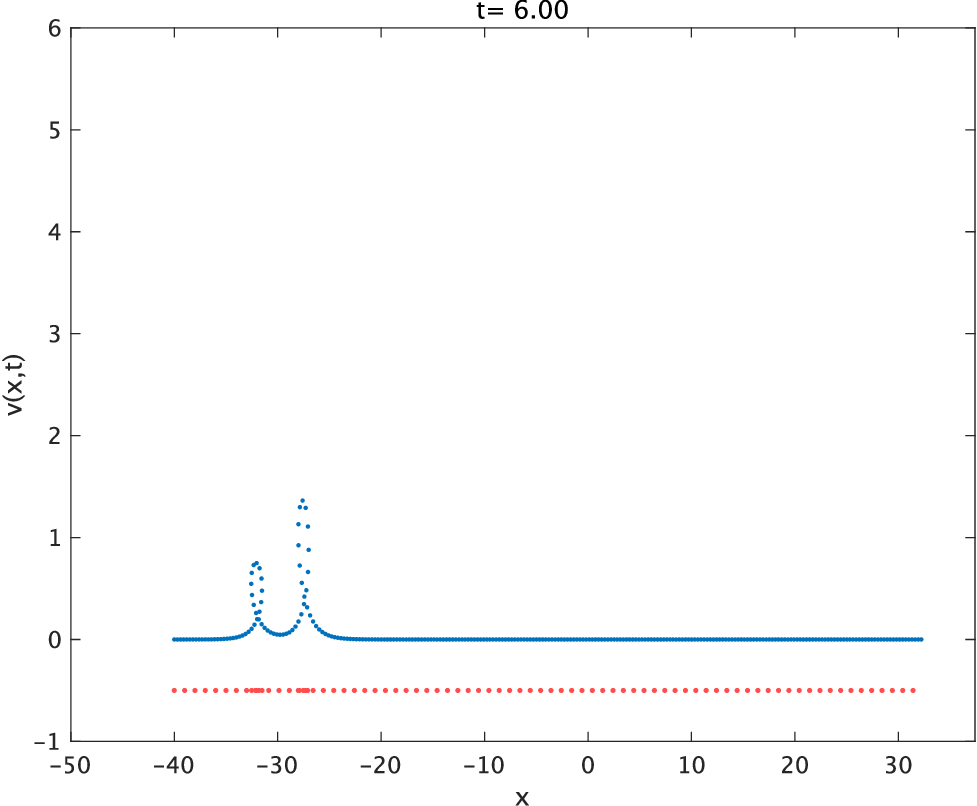}
      \end{minipage}\\

      \begin{minipage}[t]{0.4\hsize}
        \centering
        \includegraphics[keepaspectratio, scale=0.25]{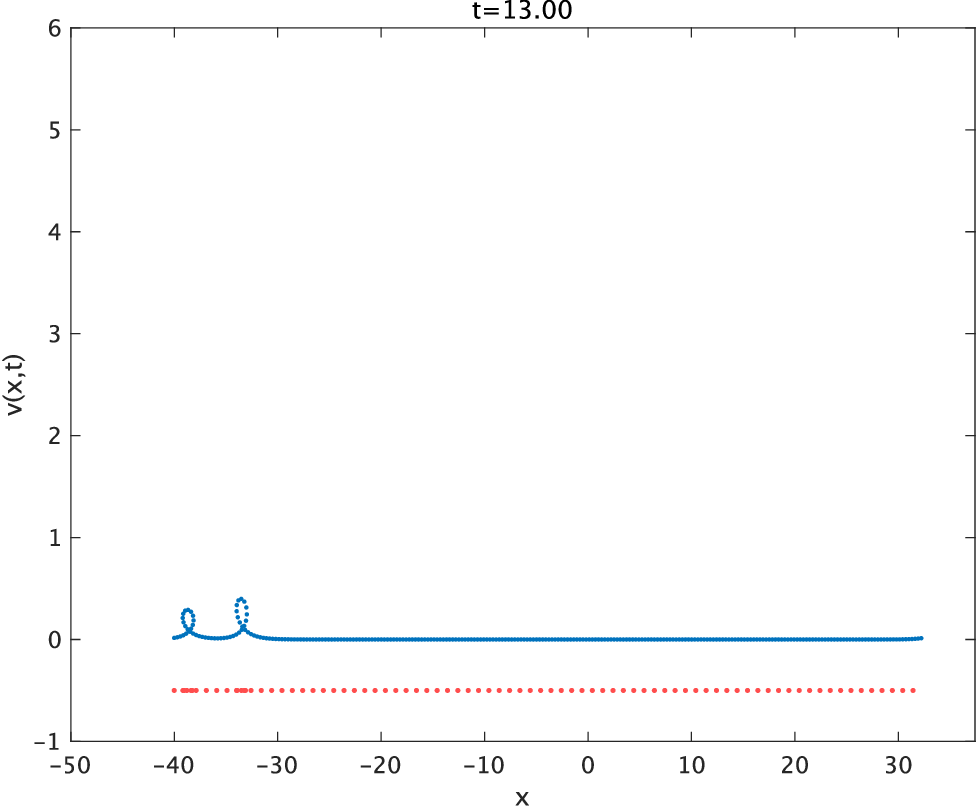}
      \end{minipage} &
      \begin{minipage}[t]{0.4\hsize}
        \centering
        \includegraphics[keepaspectratio, scale=0.25]{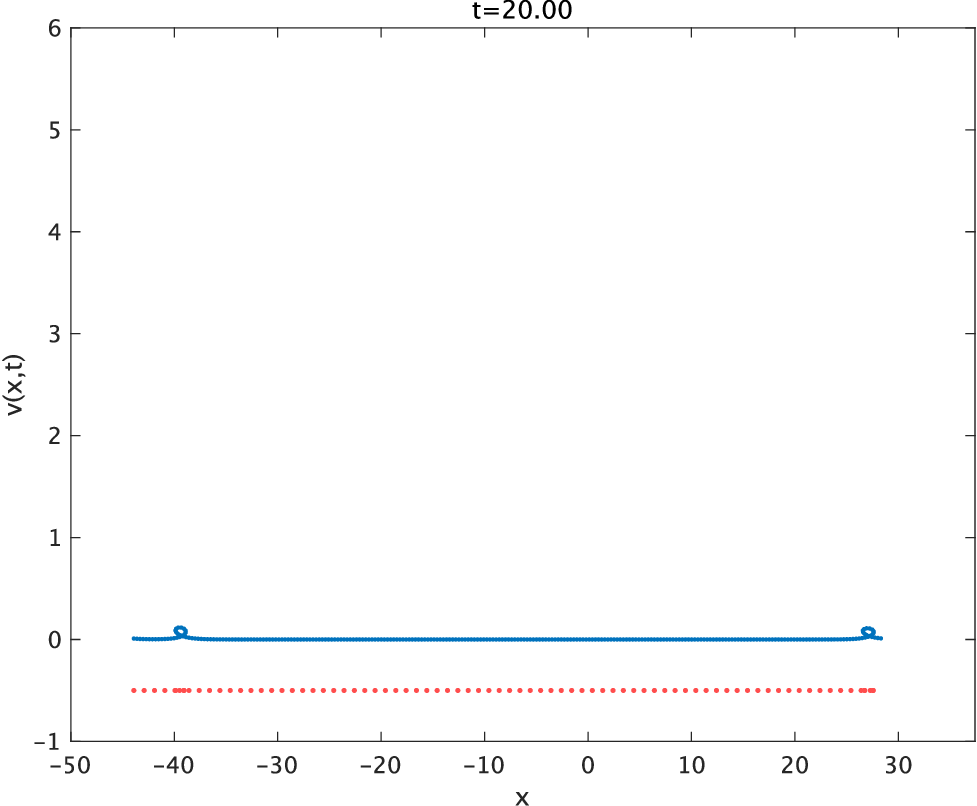}
      \end{minipage}
       \end{tabular}
     \caption{The numerical simulation of the {\it v}-profile of the two-soliton solution for the 2-SP equation. maxerr=1.15$\times 10^{-4}$}
              \label{2SP_v}
  \end{figure}

\begin{figure}[h]
 \centering
 \begin{tabular}{cc}
      \begin{minipage}[t]{0.4\hsize}
       \centering
        \includegraphics[keepaspectratio, scale=0.25]{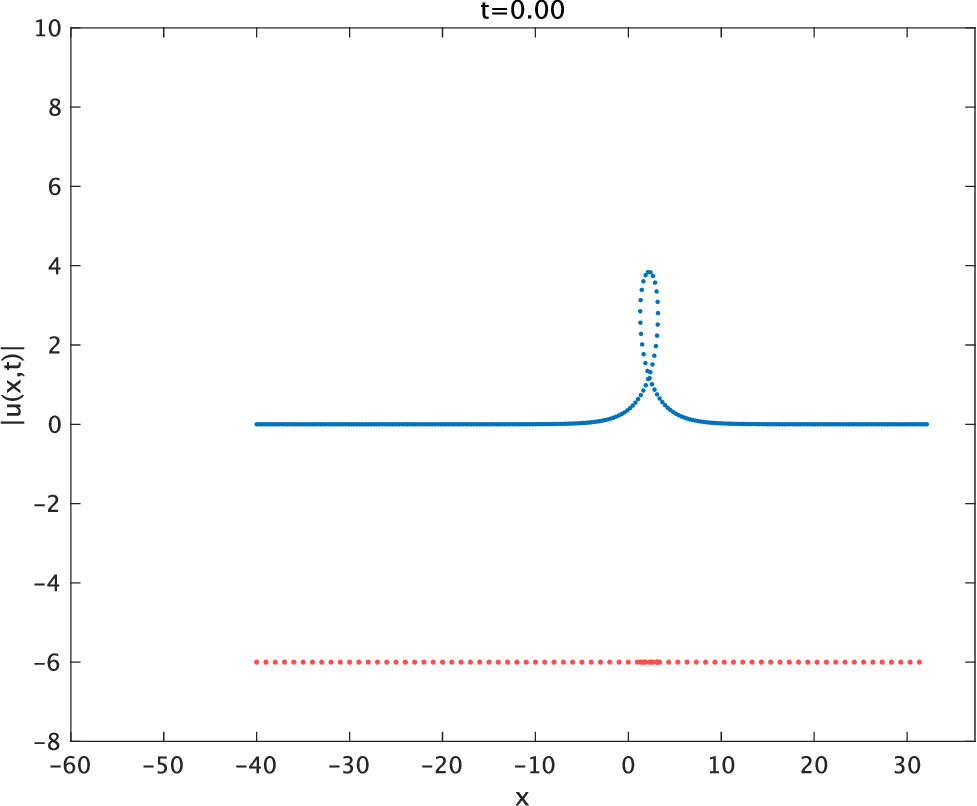}
      \end{minipage} &
      \begin{minipage}[t]{0.4\hsize}
        \centering
        \includegraphics[keepaspectratio, scale=0.25]{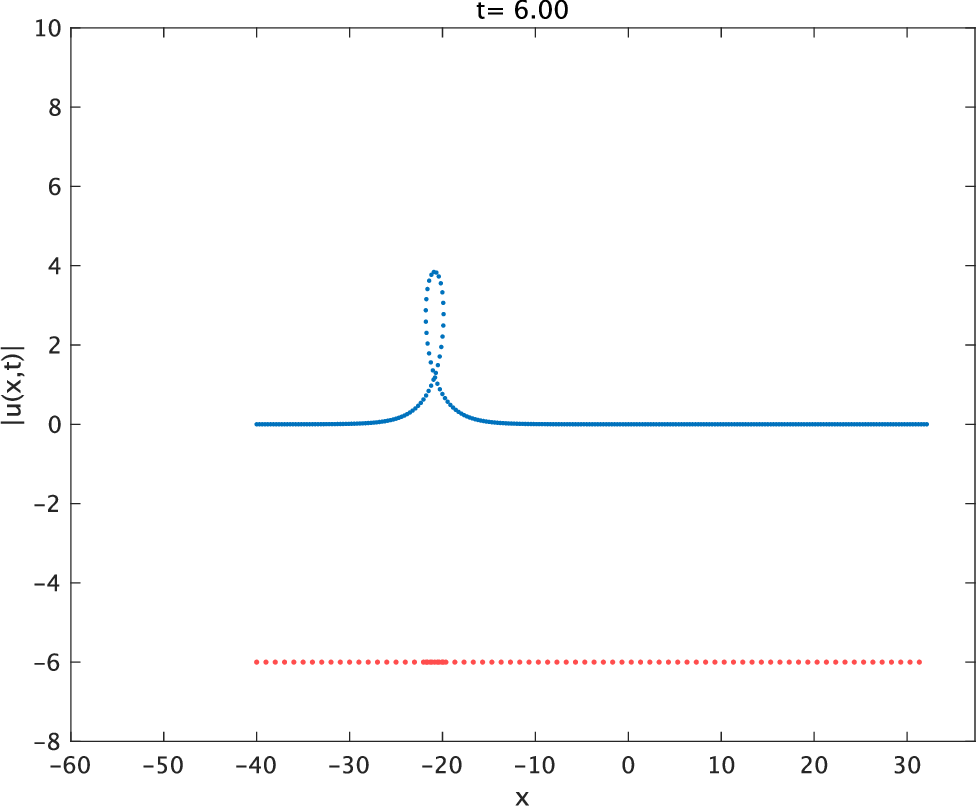}
      \end{minipage}\\

      \begin{minipage}[t]{0.4\hsize}
        \centering
        \includegraphics[keepaspectratio, scale=0.25]{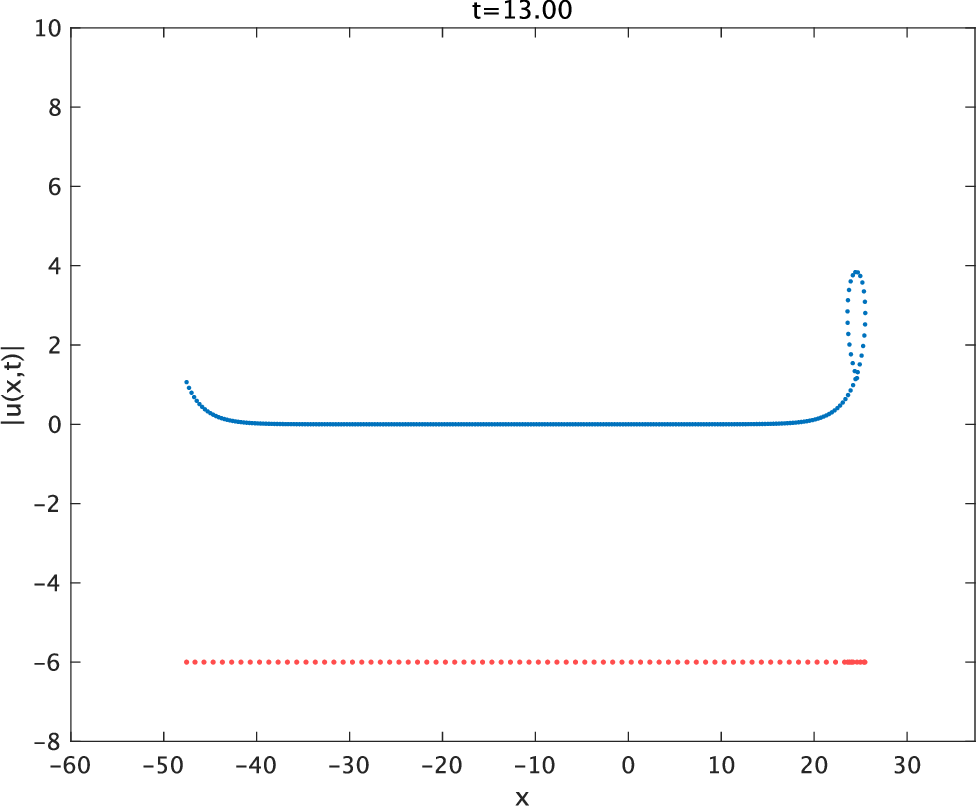}
      \end{minipage} &
      \begin{minipage}[t]{0.4\hsize}
        \centering
        \includegraphics[keepaspectratio, scale=0.25]{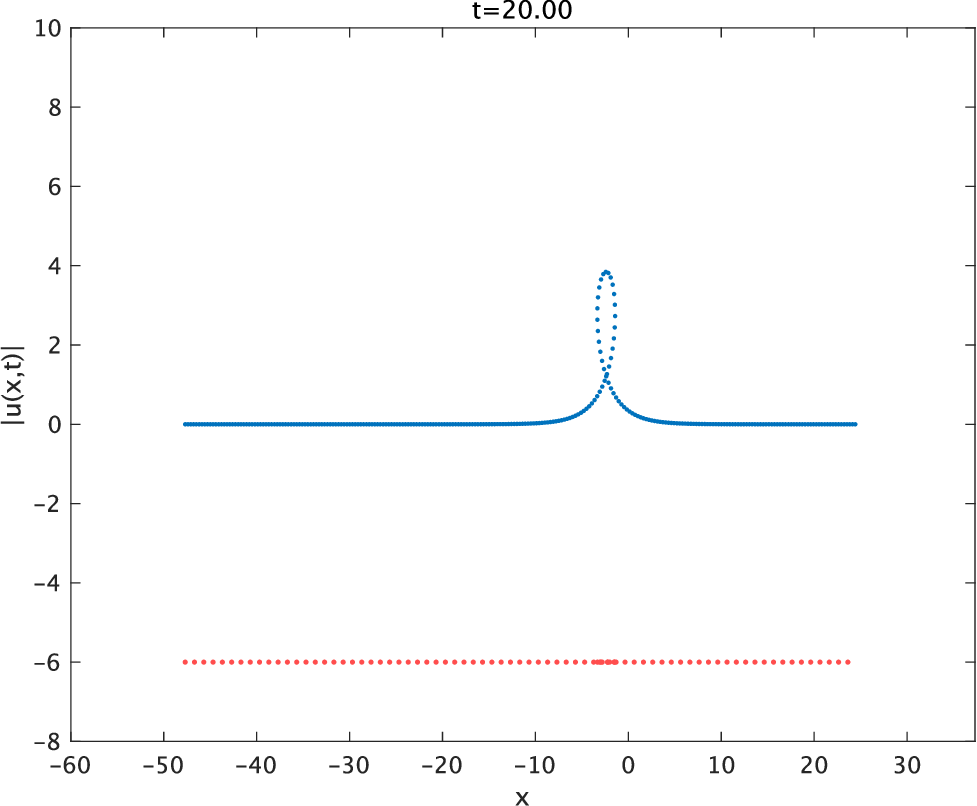}
      \end{minipage}
       \end{tabular}
     \caption{The numerical simulation of the $|u|$-profile of the one-soliton solution for the CSP equation. maxerr=5.29$\times 10^{-5}$}
              \label{compSP_abs_1}
  \end{figure}
\begin{figure}[h]
 \centering
 \begin{tabular}{cc}
      \begin{minipage}[t]{0.4\hsize}
       \centering
        \includegraphics[keepaspectratio, scale=0.25]{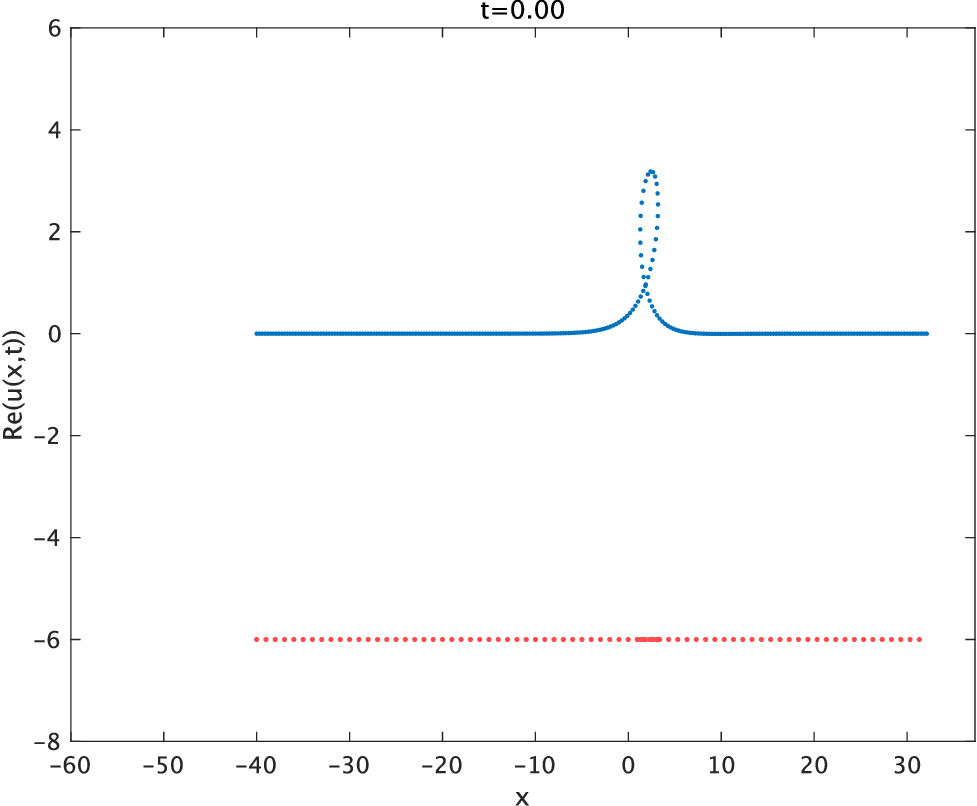}
      \end{minipage} &
      \begin{minipage}[t]{0.4\hsize}
        \centering
        \includegraphics[keepaspectratio, scale=0.25]{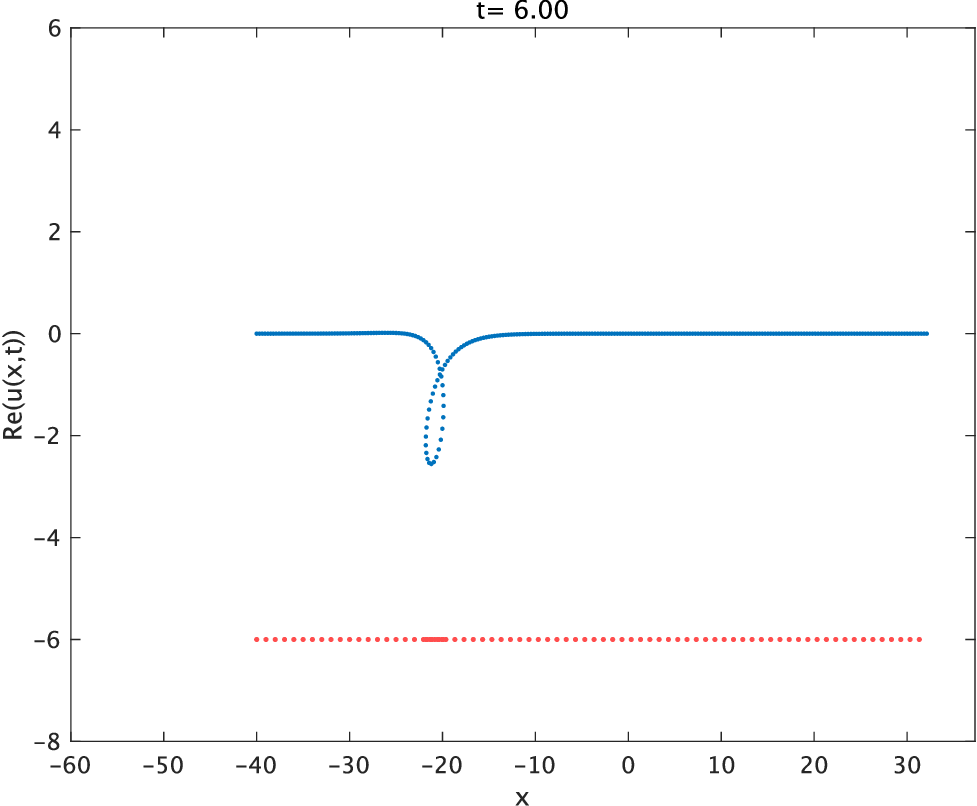}
      \end{minipage}\\

      \begin{minipage}[t]{0.4\hsize}
        \centering
        \includegraphics[keepaspectratio, scale=0.25]{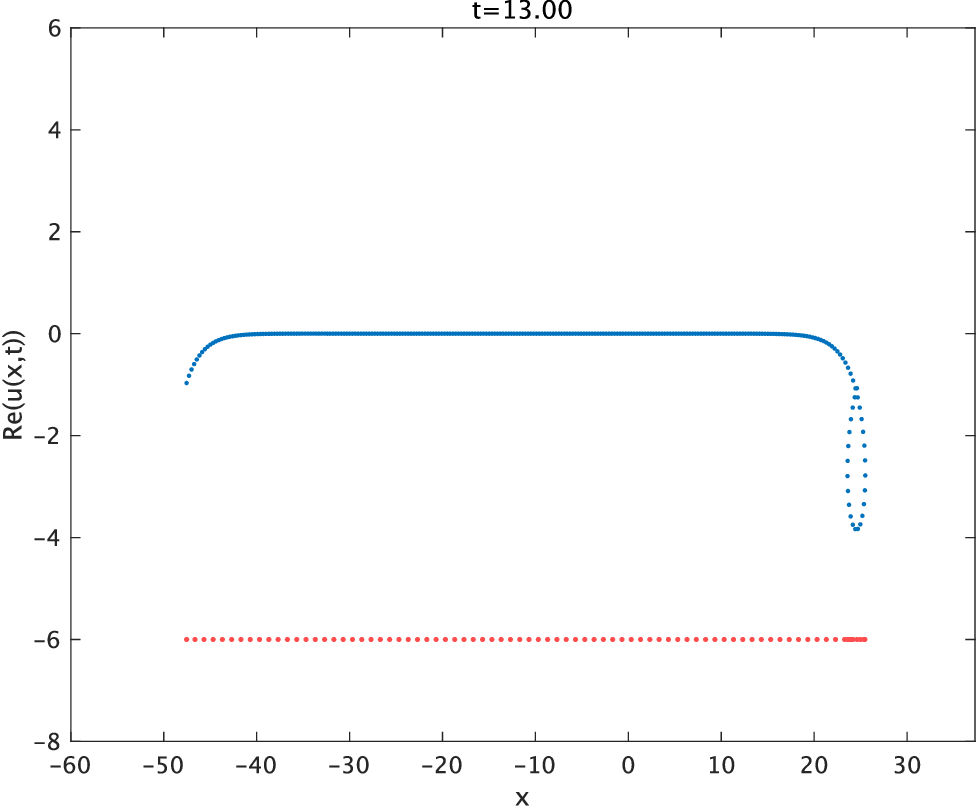}
      \end{minipage} &
      \begin{minipage}[t]{0.4\hsize}
        \centering
        \includegraphics[keepaspectratio, scale=0.25]{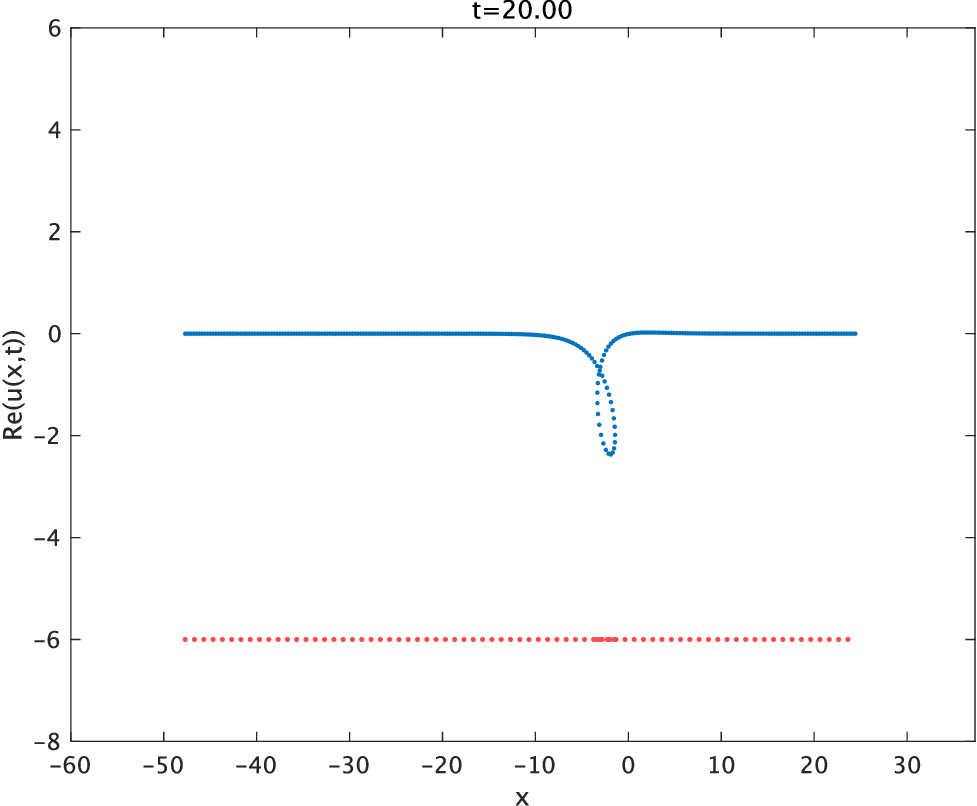}
      \end{minipage}
       \end{tabular}
     \caption{The numerical simulation of the Re($u$)-profile of the one-soliton solution for the CSP equation. maxerr=5.69$\times 10^{-5}$}
              \label{compSP_real_1}
  \end{figure}
\begin{figure}[h]
 \centering
 \begin{tabular}{cc}
      \begin{minipage}[t]{0.4\hsize}
       \centering
        \includegraphics[keepaspectratio, scale=0.25]{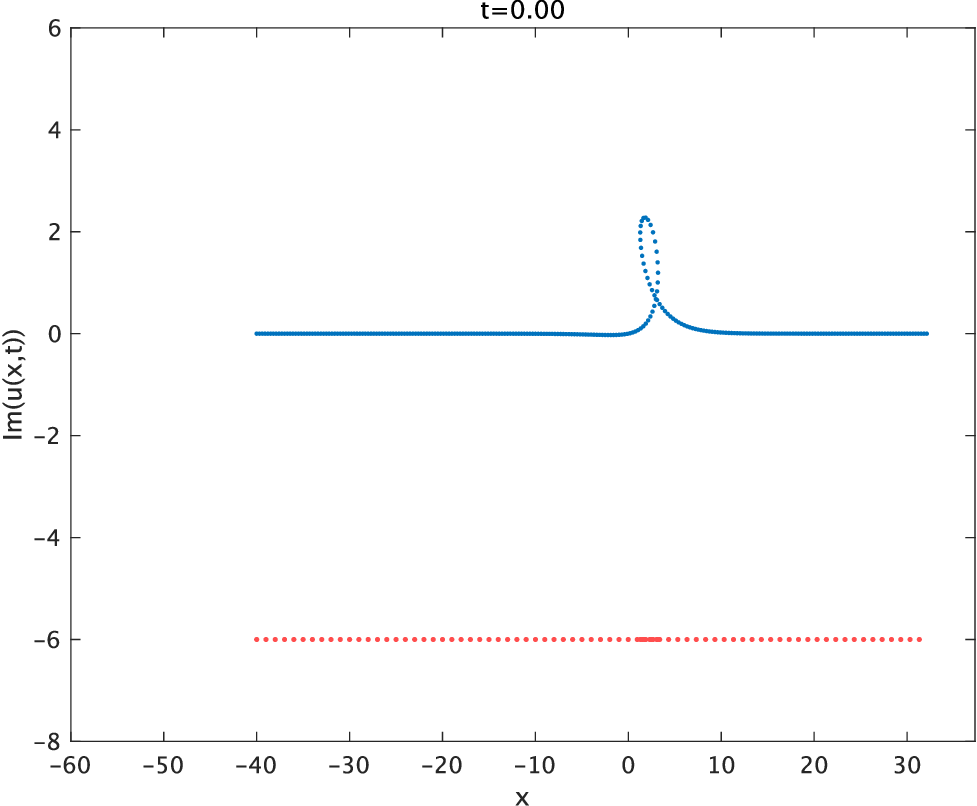}
      \end{minipage} &
      \begin{minipage}[t]{0.4\hsize}
        \centering
        \includegraphics[keepaspectratio, scale=0.25]{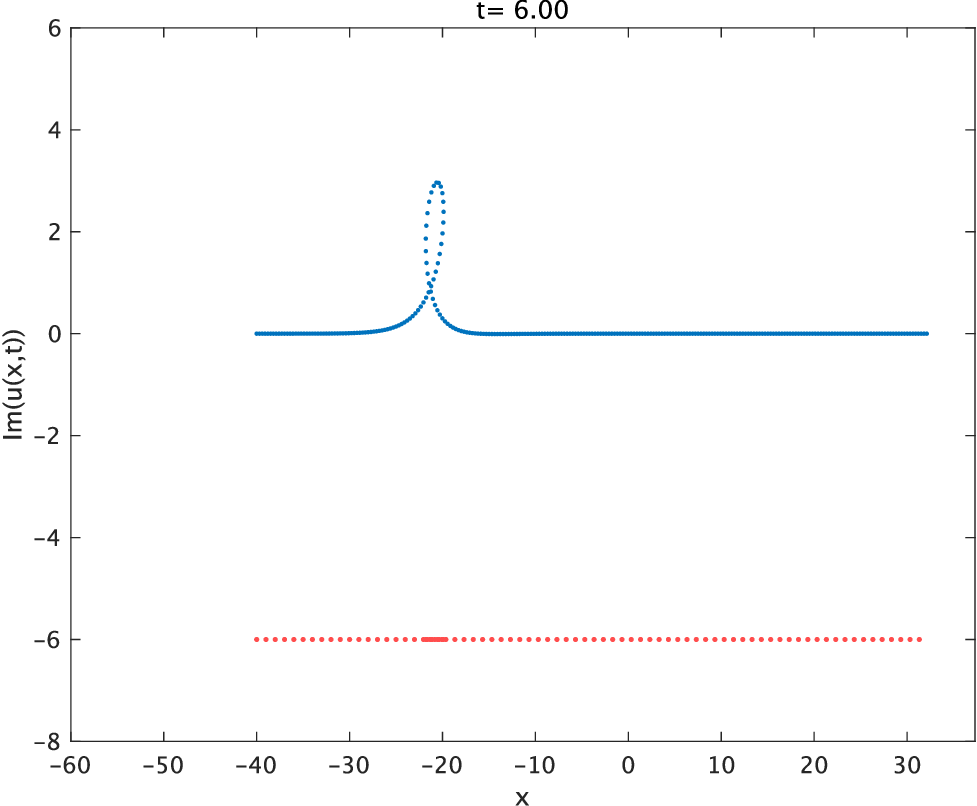}
      \end{minipage}\\

      \begin{minipage}[t]{0.4\hsize}
        \centering
        \includegraphics[keepaspectratio, scale=0.25]{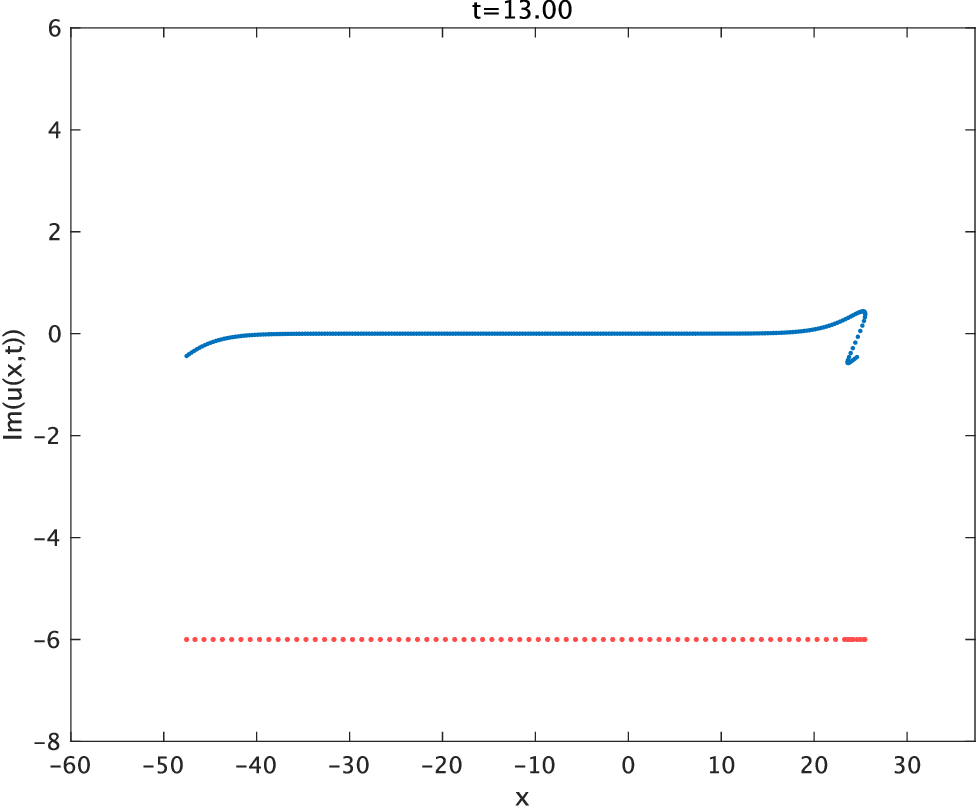}
      \end{minipage} &
      \begin{minipage}[t]{0.4\hsize}
        \centering
        \includegraphics[keepaspectratio, scale=0.25]{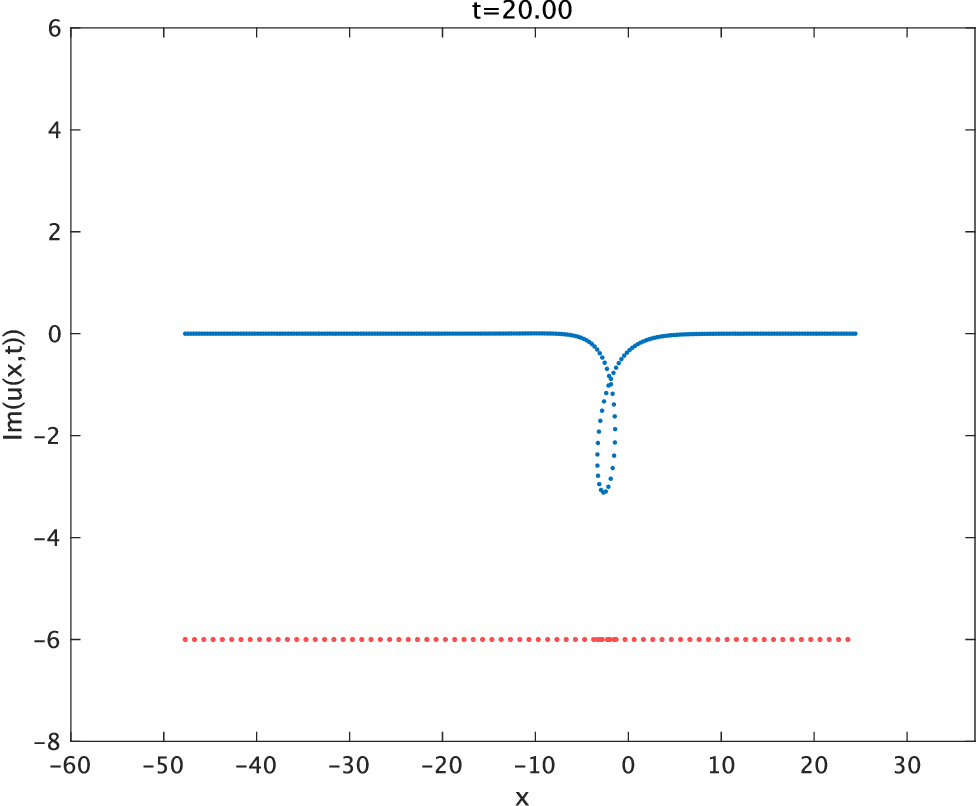}
      \end{minipage}
       \end{tabular}
     \caption{The numerical simulation of the Im($u$)-profile of the one-soliton solution for the CSP equation. maxerr=6.65$\times 10^{-5}$}
              \label{compSP_imag_1}
  \end{figure}
\begin{figure}[h]
 \centering
 \begin{tabular}{cc}
      \begin{minipage}[t]{0.4\hsize}
       \centering
        \includegraphics[keepaspectratio, scale=0.25]{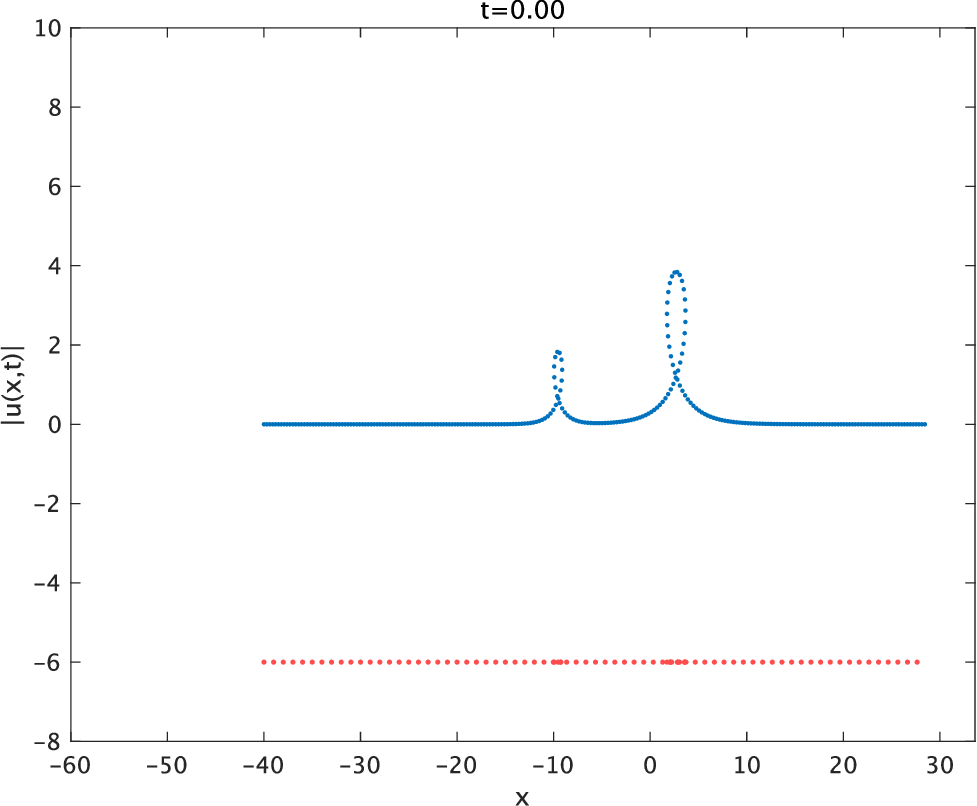}
      \end{minipage} &
      \begin{minipage}[t]{0.4\hsize}
        \centering
        \includegraphics[keepaspectratio, scale=0.25]{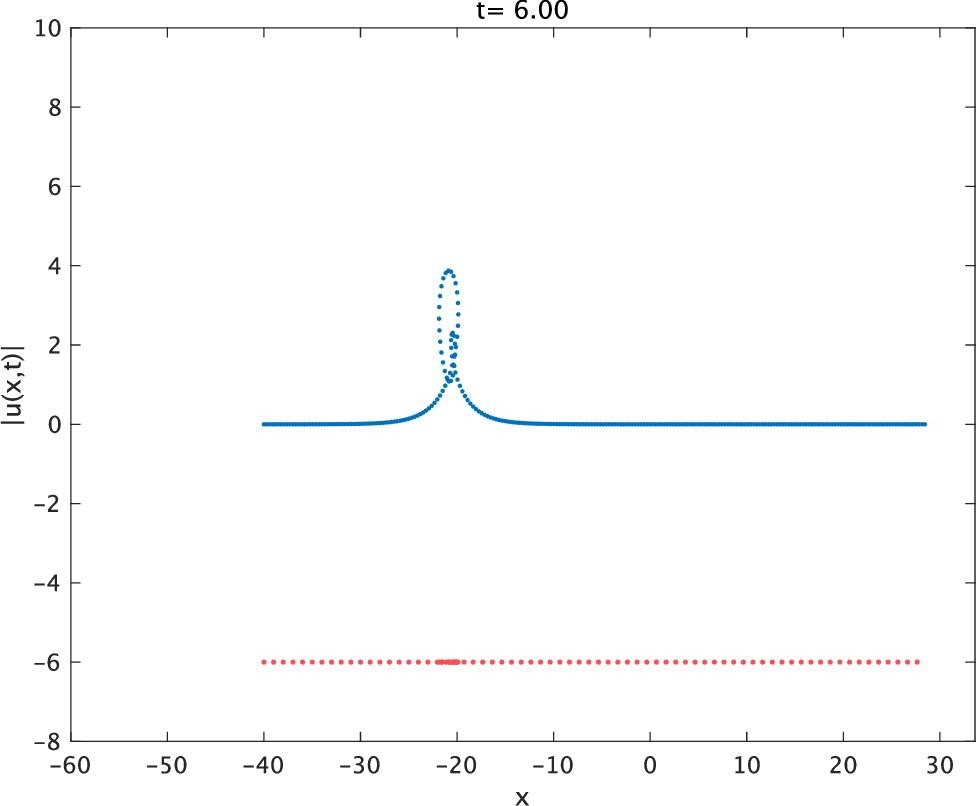}
      \end{minipage}\\

      \begin{minipage}[t]{0.4\hsize}
        \centering
        \includegraphics[keepaspectratio, scale=0.25]{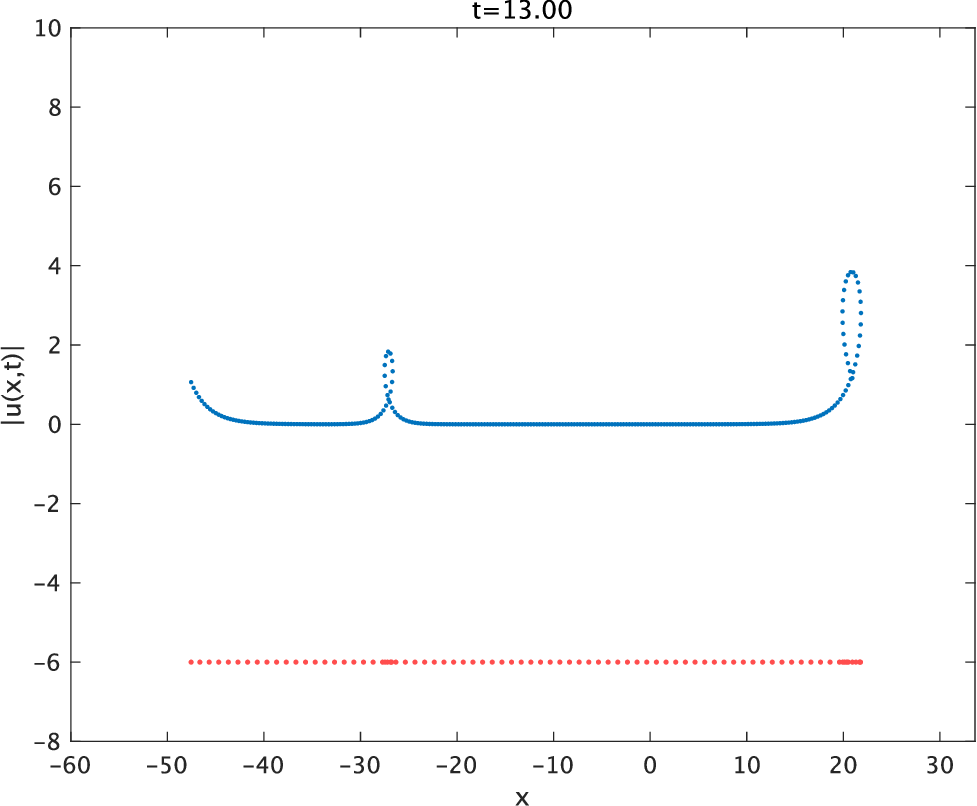}
      \end{minipage} &
      \begin{minipage}[t]{0.4\hsize}
        \centering
        \includegraphics[keepaspectratio, scale=0.25]{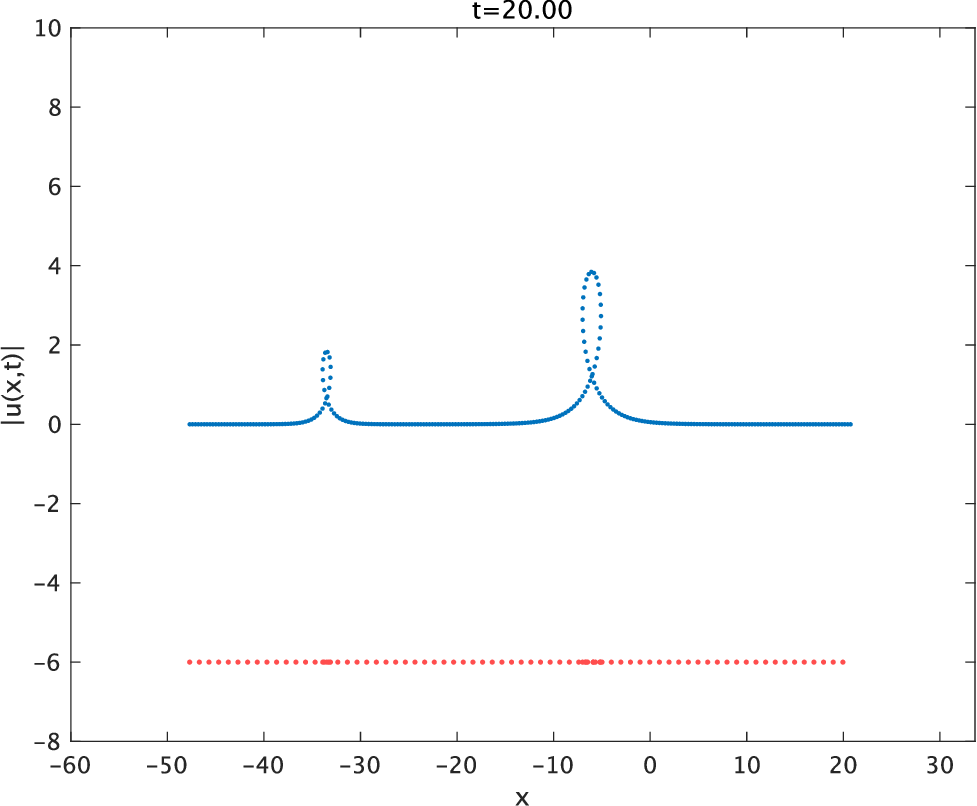}
      \end{minipage}
       \end{tabular}
     \caption{The numerical simulation of the $|u|$-profile of the two-soliton solution for the CSP equation. maxerr=5.87$\times 10^{-5}$}
              \label{compSP_abs}
  \end{figure}
\begin{figure}[h]
 \centering
 \begin{tabular}{cc}
      \begin{minipage}[t]{0.4\hsize}
       \centering
        \includegraphics[keepaspectratio, scale=0.25]{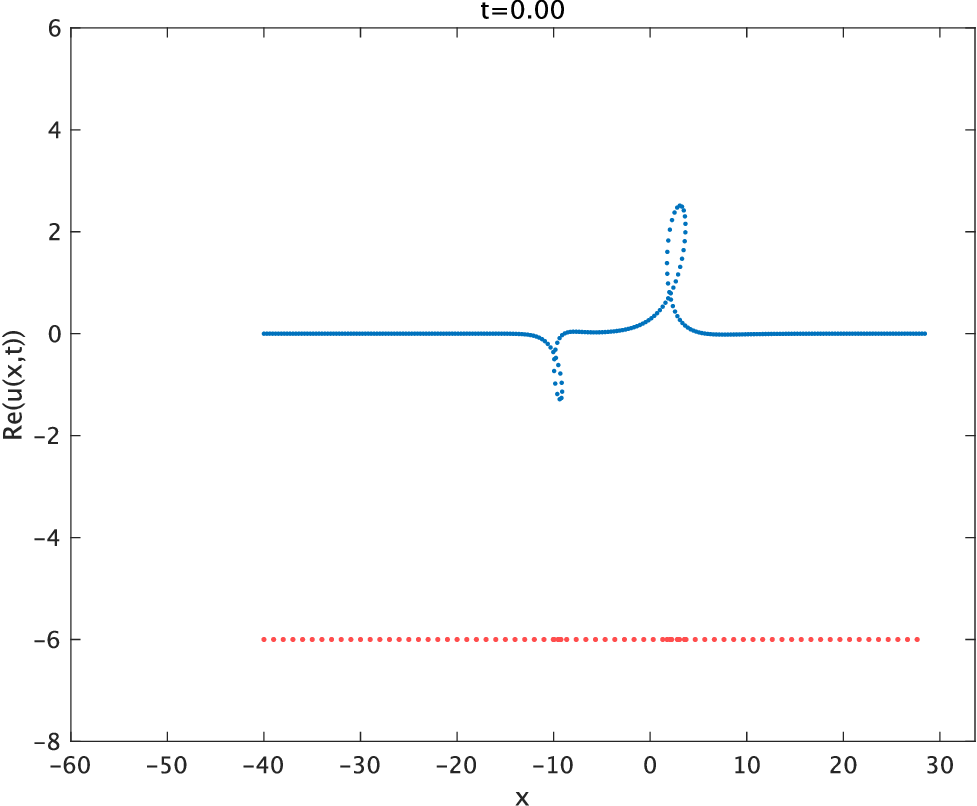}
      \end{minipage} &
      \begin{minipage}[t]{0.4\hsize}
        \centering
        \includegraphics[keepaspectratio, scale=0.25]{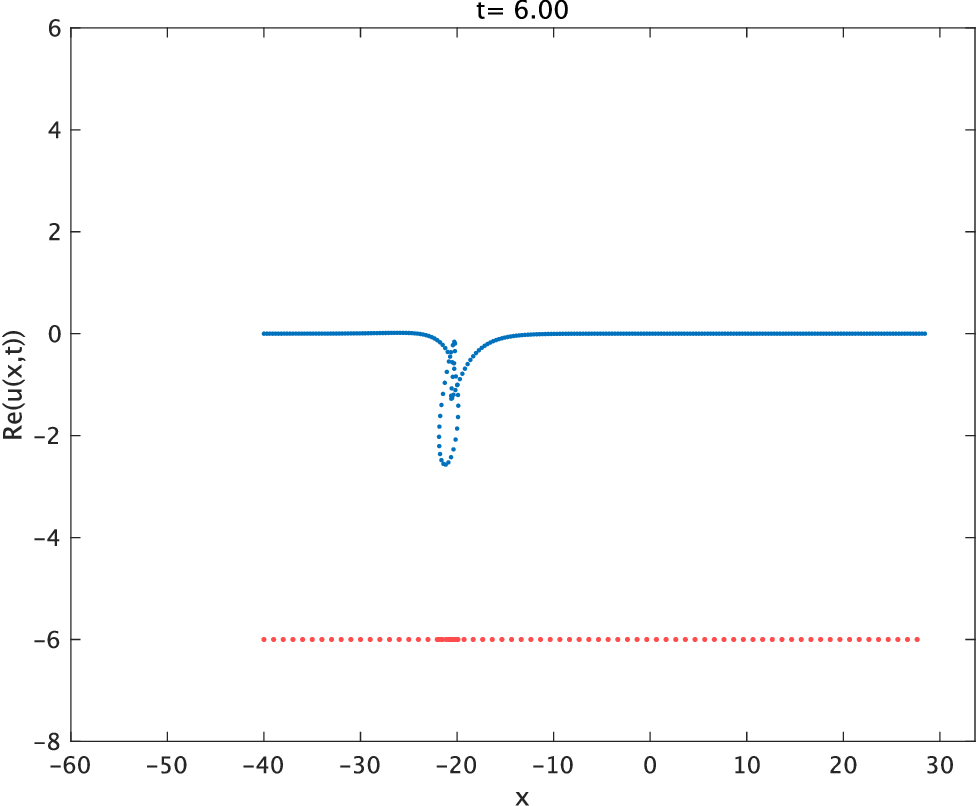}
      \end{minipage}\\

      \begin{minipage}[t]{0.4\hsize}
        \centering
        \includegraphics[keepaspectratio, scale=0.25]{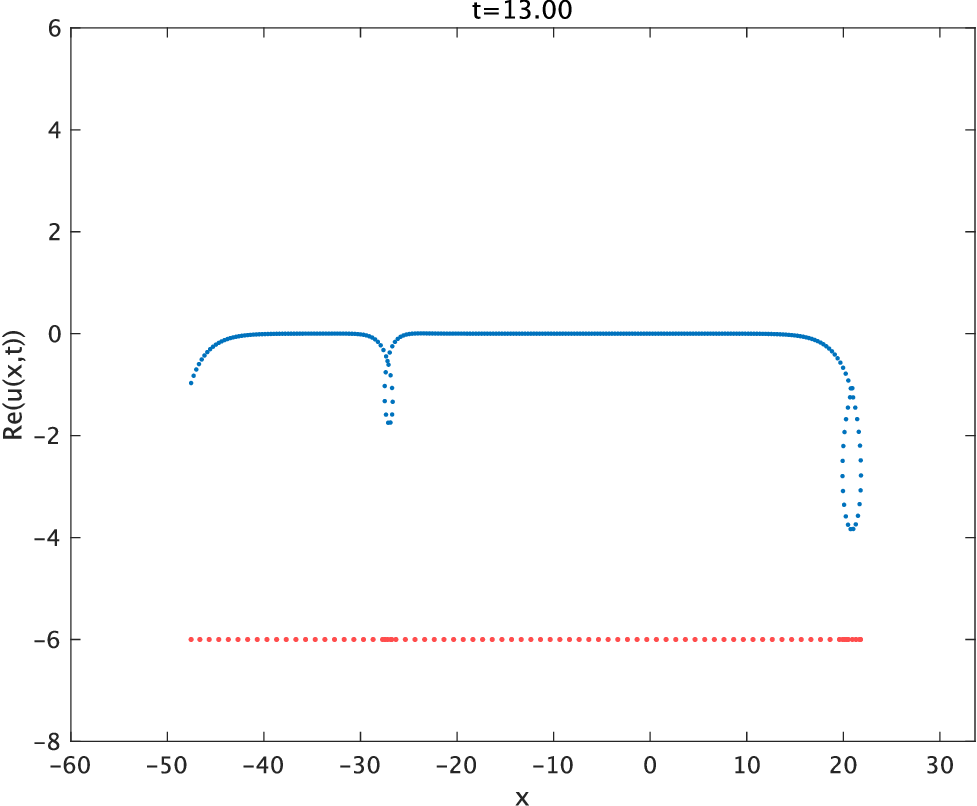}
      \end{minipage} &
      \begin{minipage}[t]{0.4\hsize}
        \centering
        \includegraphics[keepaspectratio, scale=0.25]{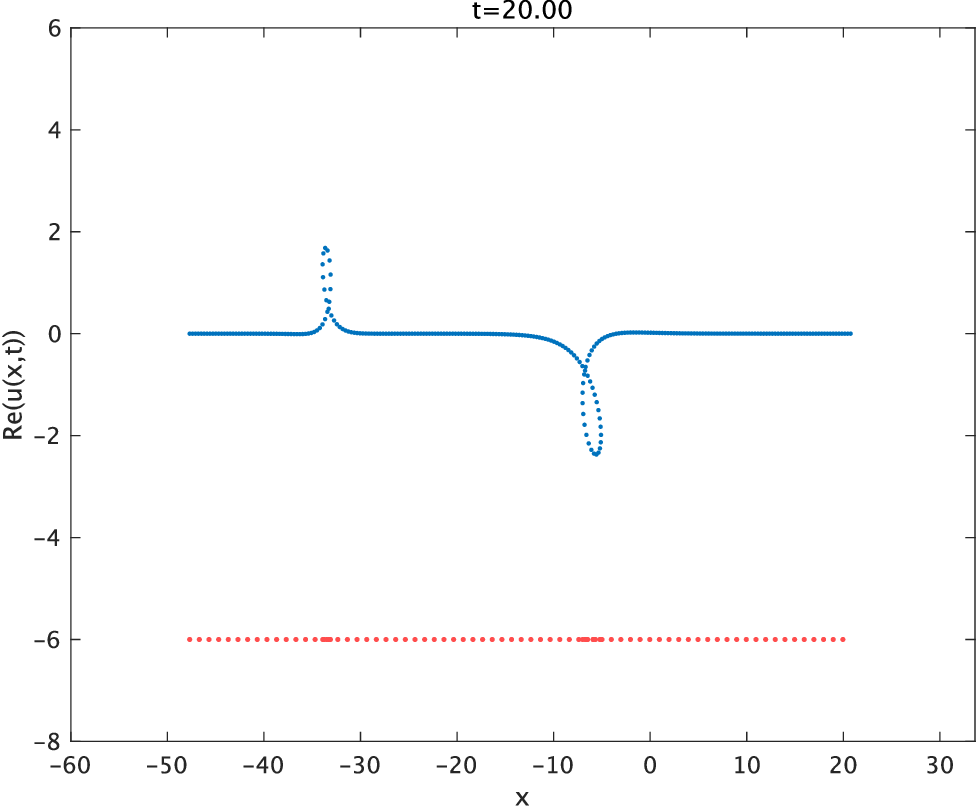}
      \end{minipage}
       \end{tabular}
     \caption{The numerical simulation of the Re($u$)-profile of the two-soliton solution for the CSP equation. maxerr=7.45$\times 10^{-5}$}
              \label{compSP_real}
  \end{figure}
\begin{figure}[h]
 \centering
 \begin{tabular}{cc}
      \begin{minipage}[t]{0.4\hsize}
       \centering
        \includegraphics[keepaspectratio, scale=0.25]{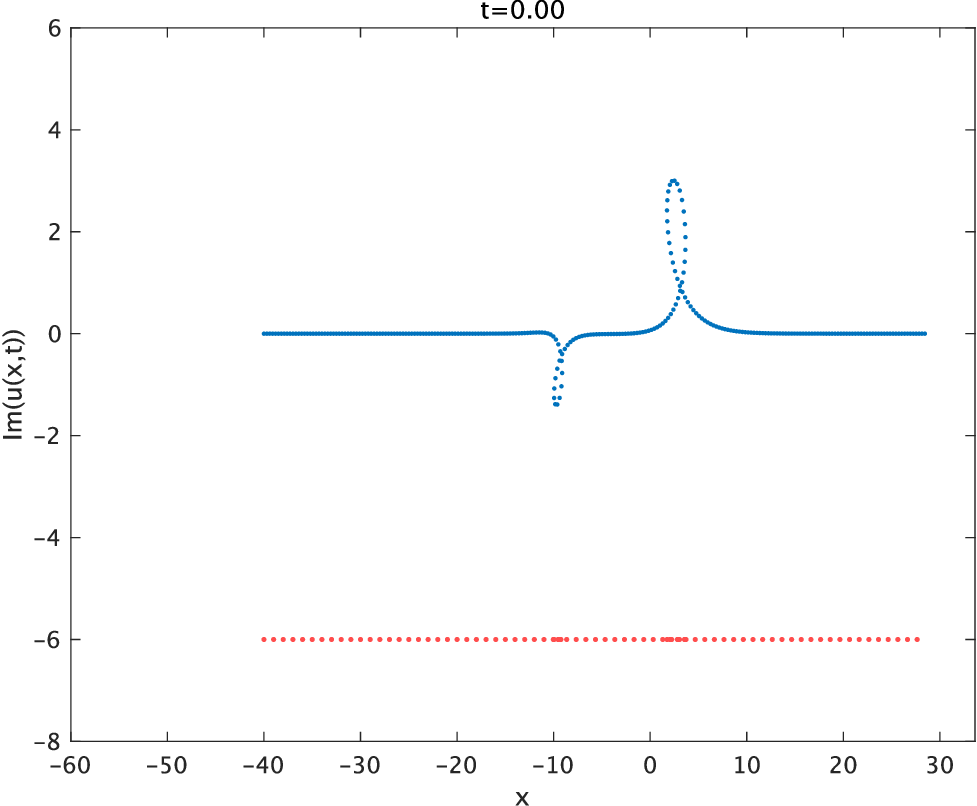}
      \end{minipage} &
      \begin{minipage}[t]{0.4\hsize}
        \centering
        \includegraphics[keepaspectratio, scale=0.25]{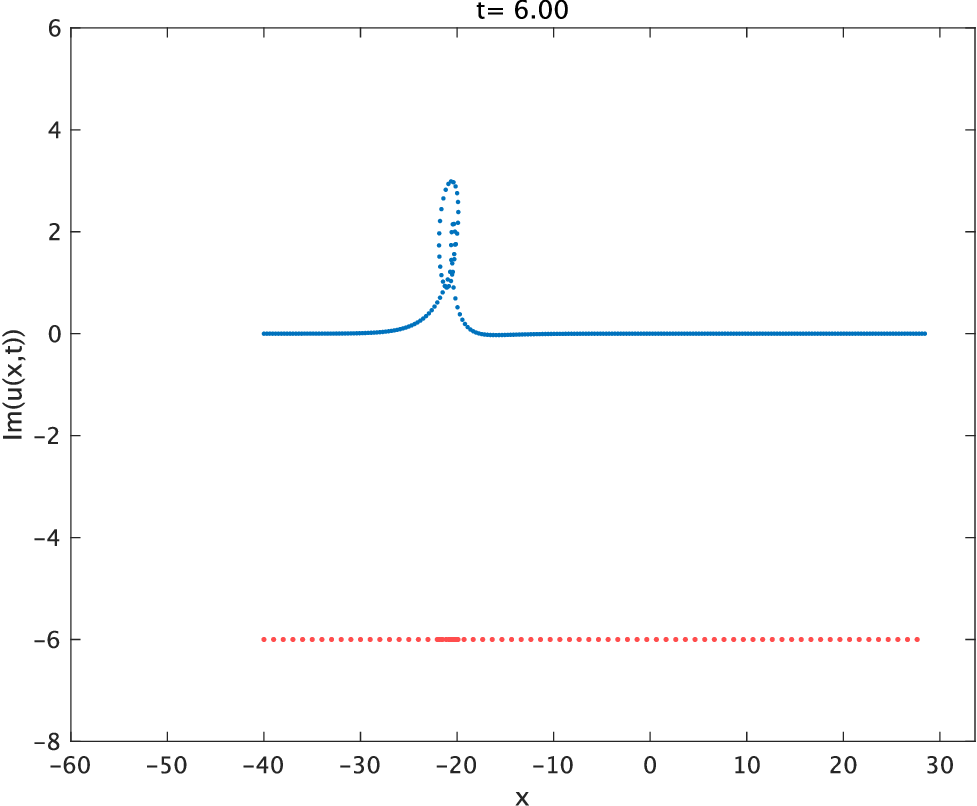}
      \end{minipage}\\

      \begin{minipage}[t]{0.4\hsize}
        \centering
        \includegraphics[keepaspectratio, scale=0.25]{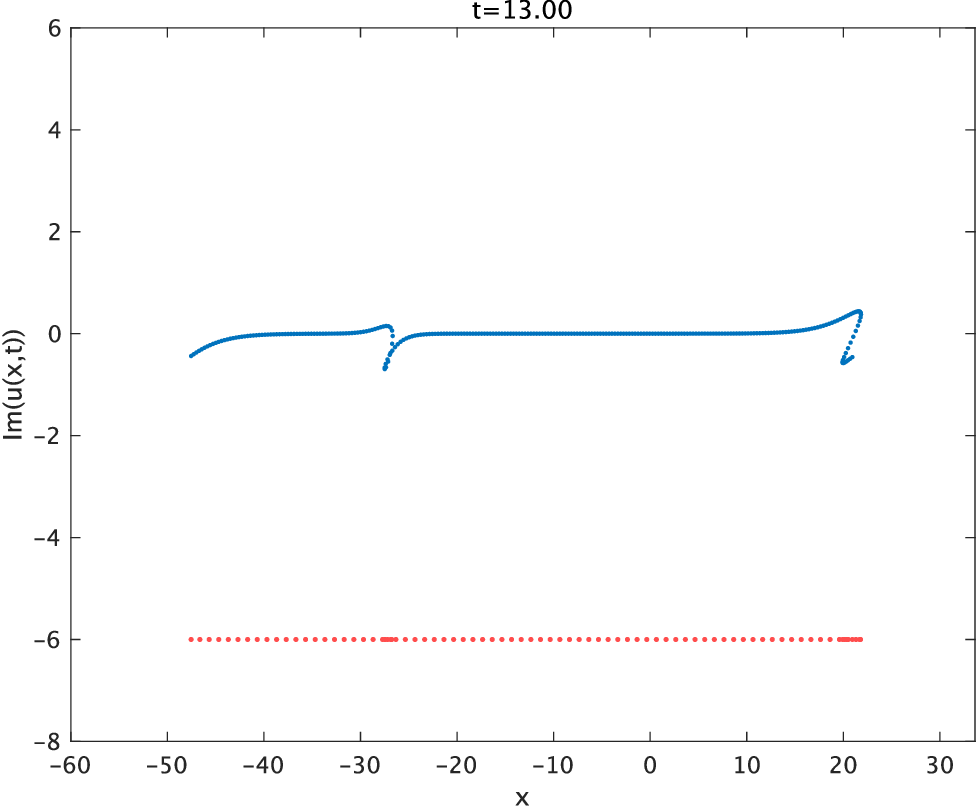}
      \end{minipage} &
      \begin{minipage}[t]{0.4\hsize}
        \centering
        \includegraphics[keepaspectratio, scale=0.25]{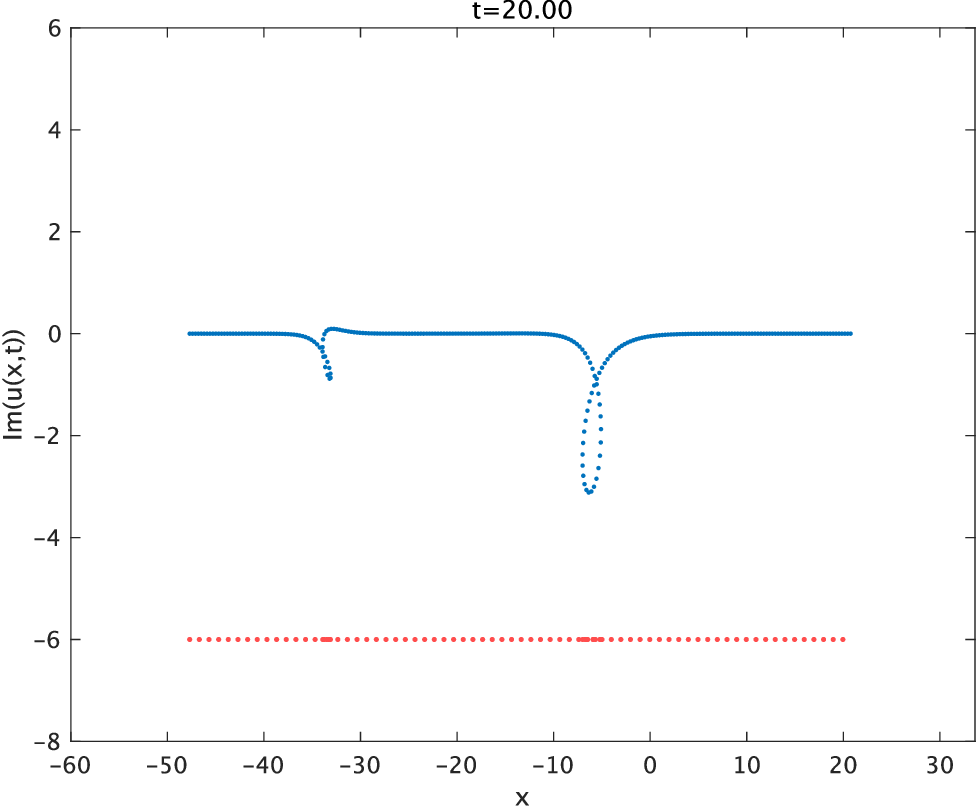}
      \end{minipage}
       \end{tabular}
     \caption{The numerical simulation of the Im($u$)-profile of the two-soliton solution for the CSP equation. maxerr=9.11$\times 10^{-5}$}
              \label{compSP_imag}
  \end{figure}

\end{section}

\begin{section}{Conclusion}
\label{sec_conc}
In this paper, we proposed integrable self-adaptive moving mesh schemes for the MCmSP and the MCSP equations with nonzero boundary values. 
We also constructed multi-soliton solutions in Pfaffian form for the proposed semi-discrete schemes. 
The schemes obtained in our previous work were fixed-edge schemes and therefore did not include the evolution of the edge point $x_0$. 
In the hodograph variables, however, $x_0$ evolves when the boundary flux at $X=0$ is nonzero. The main theoretical contribution of this paper 
is to derive the evolution equation for $x_0$ from the consistency condition with the hodograph transformation and to incorporate it 
into the self-adaptive moving mesh scheme. By doing so, we have constructed integrable self-adaptive moving mesh schemes that accommodate nonzero boundary values.

In sections \ref{sec_MCmSP} and \ref{sec_disMCmSP} for the MCmSP equation, and sections \ref{sec_MCSP} and \ref{sec_disMCSP} for the MCSP equation, 
we constructed multi-soliton solutions in Pfaffian form and integrable self-adaptive moving mesh schemes with nonzero boundary values. 
Their construction proceeds as follows. 

\begin{enumerate}[(1)]
\item
From the conservation law, the hodograph transformation and the derivative law are derived. 
Applying them to the MCmSP and MCSP equations, which belong to the WKI-type class, yields the multi-component coupled integrable dispersionless 
equations, which belong to the AKNS-type class.

\item
Applying the dependent variable transformation to the multi-component coupled integrable dispersionless equations yields bilinear equations 
together with their Pfaffian solutions.

\item
An integrable spatial discretization of the bilinear equations 
produces a semi-discrete bilinear equations together with their Pfaffian solutions.

\item
Applying the dependent variable transformation to the semi-discrete bilinear equations 
yields a semi-discrete analogue of the multi-component coupled integrable dispersionless equations.

\item
Applying the discrete hodograph transformation to the semi-discrete multi-component coupled integrable dispersionless equations 
yields the evolution equations for $u_{l+1}^{(i)}-u_{l}^{(i)}$ and the mesh intervals $\delta_{l}$.

\item
When the boundary flux determined by the boundary values is nonzero, 
the consistency condition with the hodograph transformation implies that the edge point $x_{0}$ must also evolve. 
We therefore treat $x_{0}$ as a function of $T$ and derive its evolution equation. 
Combining this edge-point equation with the evolution equations for $u_{l+1}^{(i)}-u_l^{(i)}$ and the mesh intervals yields 
a self-adaptive moving mesh scheme with nonzero boundary values.
\end{enumerate}

In sections \ref{sec_numexpMCmSP} and \ref{sec_numexpMCSP}, we performed numerical experiments for the 2-mSP, 
CmSP, 2-SP, and CSP equations in a periodic setting using the self-adaptive moving mesh schemes constructed in this paper. 
The relative errors for two-soliton interactions were comparable to those for the one-soliton solutions. 
This indicates that the proposed schemes are effective and accurate for the numerical examples considered here. 
Evolving the edge point together with the mesh points enables accurate numerical computation in a periodic setting with nonzero values near the edge point.

The numerical experiments in this paper focus on a periodic setting, while the theoretical construction supplies 
the moving-edge mechanism needed for nonzero boundary values.

Future work includes applying the self-adaptive moving mesh schemes constructed in this paper 
to problems with nonzero boundary values for which dark soliton and rogue wave solutions exist.
In this paper, we used the improved Euler method as a time marching method of numerical computations, 
but we intend to carry out an integrable time-discretization in addition to the integrable space-discretization proposed in this paper. 
Fully discrete versions of the MCmSP and MCSP equations will be discussed in forthcoming work.
\end{section}

\section*{Declarations}

\subsection*{Funding}
This work was partially supported by JSPS KAKENHI Grant Numbers JP22K03441,
JP23K22407, JP26K06919 and Waseda University Grants for Special Research Projects.

\subsection*{Conflict of interest}
The authors declare no conflict of interest.

\subsection*{Ethics approval and consent to participate}
Not applicable.

\subsection*{Consent for publication}
Not applicable.

\subsection*{Data availability}
The data generated and analyzed during the current study are available from the corresponding author upon reasonable request.

\subsection*{Materials availability}
Not applicable.

\subsection*{Code availability}
The code used in this study is available from the author upon reasonable request.

\subsection*{Author contribution}
All authors contributed equally to this work.

\backmatter


\begin{appendices}

\section{}\label{secA1}


From the second equation of the dependent variable transformation (\ref{m-dependenttransformation}), we have
\begin{align}
x&=\int \rho(X,T)dX=x_{0}(T)+\int_{0}^{X}\rho(\bar{X},T)d\bar{X}\nonumber\\
&=x_{0}(T)+\int_{0}^{X}\left(1-(\log{f(\bar{X},T)})_{\bar{X}T}\right)d\bar{X}\nonumber\\
&=x_{0}(T)+X-(\log{f(X,T)})_{T}+(\log{f(0,T)})_{T}.
\label{B1}
\end{align}
Differentiating (\ref{B1}) with respect to $T$, one obtains
\begin{eqnarray}
\frac{\partial x}{\partial T}=\frac{\partial x_{0}(T)}{\partial T}-(\log{f(X,T)})_{TT}+(\log{f(0,T)})_{TT}.
\end{eqnarray}
The second bilinear equation in (\ref{MCmSPbilinear}) gives
\begin{eqnarray}
F(0,T)=\sum_{1\leq j<k\leq n}c_{jk}u^{(j)}(0,T)u^{(k)}(0,T)=(\log f(0,T))_{TT}.
\end{eqnarray}
Therefore, the consistency condition $x_{0,T}=-F(0,T)$ is written as
\begin{eqnarray}
\frac{\partial x_{0}}{\partial T}=-(\log{f(0,T)})_{TT}.
\end{eqnarray}
Choosing $x_{0}(T)=-(\log{f(0,T)})_{T}$, we obtain $x=X-(\log{f(X,T)})_{T}$.

Next, we consider the discrete case in the same way.  From the second equation of the dependent variable transformation (\ref{disdependenttransformation}), we obtain
\begin{align}
x_{l}&=x_{0}(T)+\sum_{m=0}^{l-1}2a\rho_{m}=x_{0}(T)+\sum_{m=0}^{l-1}2a\left(1-\frac{1}{2a}\left(\log{\frac{f_{m+1}}{f_{m}}}\right)_{T}\right)\nonumber\\
&=x_{0}(T)+2al-(\log{f_{l}})_{T}+(\log{f_{0}})_{T}.
\label{B5}
\end{align}
Differentiating (\ref{B5}) with respect to $T$, one obtains
\begin{eqnarray}
\frac{dx_{l}}{dT}=\frac{dx_{0}(T)}{dT}-(\log{f_{l}})_{TT}+(\log{f_{0}})_{TT}.
\label{B6}
\end{eqnarray}
The second bilinear equation in (\ref{disbilinear}) gives
$\sum_{1\leq j<k\leq n}c_{jk}u_{0}^{(j)}u_{0}^{(k)}=(\log f_{0})_{TT}$.
Therefore, the discrete consistency condition (\ref{mSP_disconst}) is written as
\begin{eqnarray}
\frac{dx_{0}}{dT}=-(\log{f_{0}})_{TT}.
\end{eqnarray}
Choosing $x_{0}(T)=-(\log{f_{0}})_{T},$ we obtain $x_{l}=2al-(\log{f_{l}})_{T}$.

\section{}\label{secA2}
 \renewcommand{\theequation}{B.\arabic{equation}}
 \setcounter{section}{0}
From the second equation of the dependent variable transformation (\ref{dependenttransformation}), we have
\begin{align}
x&=\int \rho(X,T)dX=x_{0}(T)+\int_{0}^{X}\rho(\bar{X},T)d\bar{X}\nonumber\\
&=x_{0}(T)+\int_{0}^{X}\left(1-2(\log{f(\bar{X},T)})_{\bar{X}T}\right)d\bar{X}\nonumber\\
&=x_{0}(T)+X-2(\log{f(X,T)})_{T}+2(\log{f(0,T)})_{T}.
\label{A1}
\end{align}
Differentiating (\ref{A1}) with respect to $T$, one obtains
\begin{eqnarray}
\frac{\partial x}{\partial T}=\frac{\partial x_{0}(T)}{\partial T}-2(\log{f(X,T)})_{TT}+2(\log{f(0,T)})_{TT}.
\end{eqnarray}
The second bilinear equation in (\ref{bilinearMCSP}) gives
\begin{eqnarray}
\frac{1}{2}F(0,T)=\frac{1}{2}\sum_{1\leq j<k\leq n}c_{jk}u^{(j)}(0,T)u^{(k)}(0,T)=2(\log f(0,T))_{TT}.
\end{eqnarray}
Therefore, the consistency condition $x_{0,T}=-F(0,T)/2$ is written as
\begin{eqnarray}
\frac{\partial x_{0}}{\partial T}=-2(\log{f(0,T)})_{TT}.
\end{eqnarray}
Choosing $x_{0}(T)=-2(\log{f(0,T)})_{T}$, we have $x=X-2(\log{f(X,T)})_{T}$.

Next, we consider the discrete case in the same way.  From the second equation of the dependent variable transformation (\ref{disdependenttransformationMCSP})
, we obtain
\begin{align}
x_{l}&=x_{0}(T)+\sum_{m=0}^{l-1}2a\rho_{m}=x_{0}(T)+\sum_{m=0}^{l-1}2a\left(1-\frac{1}{a}\left(\log{\frac{f_{m+1}}{f_{m}}}\right)_{T}\right)\nonumber\\
&=x_{0}(T)+2al-2(\log{f_{l}})_{T}+2(\log{f_{0}})_{T}.
\label{A5}
\end{align}
Differentiating (\ref{A5}) with respect to $T$, one obtains
\begin{eqnarray}
\frac{dx_{l}}{dT}=\frac{dx_{0}(T)}{dT}-2(\log{f_{l}})_{TT}+2(\log{f_{0}})_{TT}.
\label{A6}
\end{eqnarray}
The second bilinear equation in (\ref{disbilinearMCSP}) gives
$\frac{1}{2}\sum_{1\leq j<k\leq n}c_{jk}u_{0}^{(j)}u_{0}^{(k)}=2(\log f_{0})_{TT}$.
Therefore, the discrete consistency condition (\ref{disSP_const}) is written as
\begin{eqnarray}
\frac{dx_{0}}{dT}=-2(\log{f_{0}})_{TT}.
\end{eqnarray}
Choosing $x_{0}(T)=-2(\log{f_{0}})_{T}$, we obtain $x_{l}=2al-2(\log{f_{l}})_{T}$.




\end{appendices}


\bibliography{sn-bibliography}

\end{document}